\def \titlevar{Parity vs. AC\textsuperscript{0} with simple quantum preprocessing}
\setlist{parsep=0pt,listparindent=\parindent}
\renewcommand{\epsilon}{\varepsilon}
\newtheorem{theorem}{Theorem}
\newtheorem*{theorem*}{Theorem}
\newtheorem{lemma}[theorem]{Lemma}
\newtheorem{corollary}[theorem]{Corollary}
\newtheorem{defn}{Definition}
\newtheorem*{prop*}{Proposition}
\newtheorem{conj}{Conjecture}
\newtheorem*{conj*}{Conjecture}
\newtheorem*{fact*}{Fact}
\newtheorem{prop}{Proposition}
\newtheorem*{ex*}{Example}
\newtheorem{ex}{Example}
\newtheorem{question}{Question}
\theoremstyle{remark}
\newtheorem{claim}{Claim}
\DeclareMathOperator{\poly}{poly}
\DeclareMathOperator{\tr}{tr}
\DeclareMathOperator*{\E}{\mathbb{E}}
\newcommand*{\bigchi}{\raisebox{.1em}{\scalebox{1.2}{$\chi$}}}
\newcommand*{\medchi}{\raisebox{.1em}{\scalebox{1}{$\chi$}}}
\newcommand{\R}{\mathbb{R}}
\newcommand{\mc}[1]{\mathcal{#1}}
\newcommand{\bs}[1]{\boldsymbol{#1}}
\newcommand{\restr}[1]{{\upharpoonright_{#1}}}
\newcommand{\0}{\textsuperscript{0}}
\newcommand{\ketbra}[2]{{\ket{#1}\!\!\bra{#2}}}
\newcommand{\ac}{\mathsf{AC^0}}
\newcommand{\qnc}{\mathsf{QNC^0}}
\newcommand{\qac}{\mathsf{QAC^0}}
\newcommand{\nc}{\mathsf{NC^0}}
\newcommand{\acqnc}{{\ac\!\bs\circ\qnc}}
\newcommand{\parity}{\text{\textsc{Par}}}
\title{\titlevar\vspace{-1ex}}
\author{Joseph Slote\footnote{California Institute of Technology. Email: \url{jslote@caltech.edu}.}}
\date{}
\begin{document}
\maketitle
\begin{abstract}
A recent line of work \cite{bravyi_quantum_2018, watts_exponential_2019, 
grier_interactive_2020, bravyi_quantum_2020,  watts_unconditional_2023} has shown the unconditional advantage of constant-depth quantum computation, or $\qnc$, over $\nc$, $\ac$, and related models of classical computation.
Problems exhibiting this advantage include search and sampling tasks related to the parity function, and it is natural to ask whether $\qnc$ can be used to help compute parity itself.
Namely, we study $\acqnc$---a hybrid circuit model where $\ac$ operates on measurement outcomes of a $\qnc$ circuit---and we ask whether \textsc{Par} $\in\acqnc$.

We believe the answer is negative.
In fact, we conjecture $\acqnc$ cannot even achieve $\Omega(1)$ correlation with parity.
As evidence for this conjecture, we prove:

\begin{itemize}
	\item When the $\qnc$ circuit is ancilla-free, this model can achieve only negligible correlation with parity, even when $\ac$ is replaced with any function having LMN-like decay in its Fourier spectrum.
	\item For the general (non-ancilla-free) case, we show via a connection to nonlocal games that the conjecture holds for any class of postprocessing functions that has approximate degree $o(n)$ and is closed under restrictions.
		Moreover, this is true even when the $\qnc$ circuit is given arbitrary quantum advice.
		By known results \cite{bun_quantum_2019}, this confirms the conjecture for linear-size $\ac$ circuits.
		
	\item Another approach to proving the conjecture is to show a switching lemma for $\acqnc$.
		Towards this goal, we study the effect of quantum preprocessing on the decision tree complexity of Boolean functions.
		We find that from the point of view of decision tree complexity, nonlocal channels are no better than randomness: a Boolean function $f$ precomposed with an $n$-party nonlocal channel is together \emph{equal} to a randomized decision tree with worst-case depth at most $\mathrm{DT}_\mathrm{depth}[f]$.
		\end{itemize}

Taken together, our results suggest that while $\qnc$ is surprisingly powerful for search and sampling tasks, that power is ``locked away'' in the global correlations of its output, inaccessible to simple classical computation for solving decision problems.
\\

\footnotesize
\noindent Keywords: $\qnc$, $\ac$, nonlocal games, $k$-wise indistinguishability, approximate degree, switching lemma, Fourier concentration
\end{abstract}
\newpage

\section*{Introduction}
In 2017, Bravyi, Gosset, and K\"onig \cite{bravyi_quantum_2018} proved a breakthrough unconditional separation between constant-depth quantum circuits, or $\qnc$, and constant-depth bounded fan-in classical circuits, or $\nc$.
The authors showed that for a certain search problem solvable by $\qnc$ circuits, any randomized $\nc$ circuit solving the same problem with high probability must have logarithmic depth.
The realization that unconditional proofs of quantum advantage were possible---albeit over weak models of classical computation---inspired an exciting series of results strengthening and generalizing the work of Bravyi, Gosset, and K\"onig.
There are now separations against stronger classical circuit models such as constant depth circuits with unbounded fan-in, or $\ac$ \cite{watts_exponential_2019}, average-case separations \cite{gall_average-case_2020}, separations between more intricate interactive models \cite{grier_interactive_2020}, separations that remain even for quantum circuits subject to noise (\emph{e.g.,} \cite{bravyi_quantum_2020}), and separations for sampling problems with no input \cite{watts_unconditional_2023}, among others.

Although these separations are for comparatively weak models of computation, they are concrete non-oracle, non-query separations, and are free from complexity-theoretic assumptions, making them important companions to the query complexity and conditional separations studied since the founding of quantum computer science.
One notable feature of these $\qnc$ separations, however, is that they are all for search or sampling problems; decision separations appear to be absent from this list.

On the surface, there is a somewhat trivial reason for this: $\qnc$ cannot solve interesting decision problems alone.
Indeed, any single output qubit in a constant-depth quantum circuit can only depend on constantly-many input qubits, so any $\qnc$ circuit with one output bit may be simulated by randomized $\nc$.
However, this ``lightcone barrier'' may be removed by instead measuring all qubits in the quantum circuit and then applying a classical Boolean function $f$ to the result.
As long as $f$ depends on all of its inputs, it might be possible for $f$ to leverage $\qnc$'s search and sampling prowess for decision-making ends.
Given Bene Watts et al.'s search separation between $\qnc$ and $\ac$ \cite{watts_exponential_2019}, a natural class of Boolean functions to choose for this postprocessing is $\ac$ itself.
This gives rise to the following definition, which does not appear to have been studied before.

\begin{defn}
	Let $\acqnc$ denote the model of computation composed of a $\qnc$ circuit
	$\mc C$, followed by a computational basis measurement, and then an $\ac$ function $f$ applied to the result.
	This process defines the randomized Boolean function $f\circ\mc C:\{0,1\}^n\to\mc M(\{-1,1\})$ from the hypercube to the set $\mc M(\{-1,1\})$ of probability measures  on $\{-1,1\}$.
	\end{defn}
In this work we take a $\qnc$ circuit to be a polynomial-size constant-depth quantum circuit composed of arbitrary 2-qubit unitary gates.
	Ancilla qubits are allowed and are initialized in the state $\ket{0^m}$ for $m\in\poly(n)$.
	No geometric locality or clean computation constraints are assumed.
	A formal definition appears later as Definition \ref{def:qnc}.

Certainly $\qnc\subseteq\acqnc$, so the search separation between $\qnc$ and $\ac$ in Bene Watts et al. is also a search separation between $\acqnc$ and $\ac$.
Moreover, this modification obviates the lightcone barrier mentioned above and allows us to ask meaningful questions about decision separations between concrete models of quantum and classical computation.

Specifically, Bene Watts et al. \cite{watts_exponential_2019} show exponential advantage of $\qnc$ over $\ac$ for (a variant of) the ``parity halving problem'': 
\begin{quote}
\emph{Parity halving.} Given $x\in \{0,1\}^n$ with the promise $|x|\equiv 0\mod 2$, output any even string if $|x|\equiv 0\mod 4$ and any odd string otherwise.
\end{quote}
Given the form of this problem, it is natural to ask whether parity is itself computable by a hybrid model such as $\acqnc$.

\medskip
Before summarizing our progress on this question, we pause to note another reason to study $\acqnc$ coming from the rich subject of quantum-classical interactive proofs.
A central project in this area is the classical verification of quantum computations \cite{Gheorghiu2018}.
In a landmark 2018 work, Mahadev gave a cryptographic protocol for this task \cite{mahadev}; however, whether or not this task may be accomplished without cryptographic hardness assumptions remains open despite many efforts \cite{Gheorghiu2018}.
It therefore makes sense to consider the question in simpler contexts, such as where the prover and verifier are replaced with $\qnc$ and $\ac$ respectively and interact for constantly-many rounds to establish the correctness of a $\qnc$ computation.
With this perspective we see that $\acqnc$ models the first round of interaction in such a proof system.

\subsection*{Parity vs. $\acqnc$: Overview and organization}

We conjecture that $\acqnc$ cannot approximate parity ($\parity_n$) on average, over both choice of uniformly random input $x\sim\mc U(\{0,1\}^n)$ and the randomness in $f\circ \mc C$.
It is convenient to take $\parity$ and $f\circ\mc C$ to be $(\pm1)$-valued and phrase this in terms of the correlation
\[\E_x[(f\circ\mc C)(x)\cdot \parity(x)],\]
proportional to the advantage of $f\circ \mc C$ over random guessing for computing parity.
\begin{conj}
\label{conj:main}
	$\acqnc$ cannot achieve correlation $\Omega(1)$ with the parity function.
	That is, fix a polynomial size bound $p(n)$ and constant depth $d$.
	Then for all sequences $\{(f_n,\mc C_n)\}_n$ of circuits such that $\mathrm{size}(f_n),\mathrm{size}(\mc C_n)\leq p(n)$ and $\mathrm{depth}(f_n),\mathrm{depth}(\mc C_n)\leq d$, we have
	\[\E_x[(f_n\circ\mc C_n)(x)\cdot \parity_n(x)]\to 0 \quad\text{as}\quad n\to \infty\,.\]
\end{conj}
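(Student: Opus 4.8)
The plan is to attack the conjecture on three complementary fronts --- one that controls the quantum half of the model, one that controls the classical half, and one structural --- with the fully general statement left as the main obstacle.

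\emph{Ancilla-free case, via Fourier analysis.} Suppose first that $\mc C$ is a depth-$d$ unitary on the $n$ input qubits with no ancillas, and set $h(x):=\E_{y\sim\mc C(x)}[f(y)]=\bra{x}\mc C^\dagger F\mc C\ket{x}$ with $F=\mathrm{diag}(f)$. Since $\parity_n(x)=\chi_{[n]}(x)$, the quantity to bound is exactly the top Fourier coefficient $\hat h([n])=2^{-n}\tr\bigl(F\cdot\mc C\,Z_{[n]}\,\mc C^\dagger\bigr)$, and expanding $F=\sum_T\hat f(T)\,Z_T$ gives $\E_x[(f\circ\mc C)(x)\,\parity_n(x)]=\sum_T\hat f(T)\,\alpha_T$ with $\alpha_T:=\langle\mc C^\dagger Z_T\mc C,\ Z_{[n]}\rangle$ in the normalized Hilbert--Schmidt inner product. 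Two facts then close it: (i) conjugation by $\mc C$ is unitary on operator space, so $\{\mc C^\dagger Z_T\mc C\}_T$ is orthonormal and $\sum_T\alpha_T^2=\lVert Z_{[n]}\rVert^2=1$ by Parseval; and (ii) $\mc C^\dagger Z_T\mc C$ is supported on the union of the backward lightcones of $T$, of size at most $|T|\cdot 2^d$, while $Z_{[n]}$ acts nontrivially on all $n$ qubits, so $\alpha_T=0$ unless $|T|\ge n/2^d$. Cauchy--Schwarz then yields $\bigl|\E_x[(f\circ\mc C)(x)\,\parity_n(x)]\bigr|\le\bigl(\sum_{|T|\ge n/2^d}\hat f(T)^2\bigr)^{1/2}$, which is negligible whenever $f$ has LMN-style Fourier-tail decay --- for $\ac$ the Fourier weight above linear degree is $2^{-n^{\Omega(1)}}$ --- and the argument uses nothing about $f$ beyond this tail bound.

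\emph{General case, via nonlocal games and approximate degree.} The obstruction to extending the above is that with $m$ ancillas the operator $Z_{[n]}$ becomes $Z_{[n]}\otimes\ketbra{0^m}{0^m}$, a spread-out sum of $2^m$ Paulis, and the lightcone bookkeeping collapses. To accommodate ancillas I would instead route through $k$-wise indistinguishability. The starting point is that the uniform distributions on even- and odd-weight inputs are $(n-1)$-wise indistinguishable on $\{0,1\}^n$; after composition with the depth-$d$ circuit $\mc C$, whose output coordinates each have a lightcone of size at most $2^d$, the two distributions over the measurement outcomes induced by conditioning on $\parity_n(x)$ remain $\Omega(n/2^d)$-wise indistinguishable, i.e. $\Omega(n)$-wise for constant $d$ (fixed ancillas, or arbitrary fixed advice, do not hurt this). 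A postprocessing class of approximate degree $o(n)$ that is closed under restrictions cannot tell such distributions apart, hence cannot recover $\parity_n$; the cleanest way to make the bookkeeping --- and the advice --- go through is to phrase the whole thing as a multiplayer nonlocal game whose value is controlled by the approximate degree of the postprocessing. Feeding in the fact from \cite{bun_quantum_2019} that linear-size $\ac$ has approximate degree $o(n)$ then settles the conjecture in that regime, and since the reduction treats the input state of $\mc C$ abstractly, it survives handing $\mc C$ arbitrary quantum advice.

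\emph{Switching-lemma route, and the main obstacle.} Independently of the above, I would try to prove a H\r{a}stad-style switching lemma for $\acqnc$ --- a random restriction of the inputs should collapse the quantum-then-classical composite to a shallow decision tree --- which would rule out $\parity_n$ directly and for arbitrary polynomial-size $\ac$. As a down payment I would establish that quantum preprocessing cannot help decision-tree complexity at all: for every Boolean $f$ and every $n$-party nonlocal channel $\Lambda$, the composite $f\circ\Lambda$ is \emph{exactly} a randomized decision tree of worst-case depth at most $\dt[f]$, so that the apparent global correlations of a nonlocal channel are worth no more to an adaptive querier than shared randomness. The hard part --- and what leaves the conjecture open --- is the fully general case of arbitrary polynomial-size $\ac$ over a genuinely ancilla-assisted $\qnc$ circuit: the Fourier argument needs ancilla-freeness, the nonlocal-games argument needs approximate degree $o(n)$ (which can fail for general $\ac$), and turning the decision-tree fact into an actual switching lemma for $\acqnc$ seems to require understanding how random restrictions interact with whatever entanglement structure survives a constant-depth quantum circuit --- something the decision-tree result only begins to address.
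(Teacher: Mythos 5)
You correctly recognize that the statement is a \emph{conjecture} in the paper, not a theorem, and your proposal reproduces the paper's three lines of partial evidence rather than claiming a proof. The ancilla-free Fourier argument (expand $f$ in Paulis, use unitary invariance of the Hilbert--Schmidt inner product for Parseval, kill low-weight terms by a lightcone count, finish with Cauchy--Schwarz) is the same calculation as the paper's Theorem~\ref{thm:ancilla-free}, just organized with the expansion on $f$'s side rather than conjugating $Z_{[n]}$ and applying the correlation-symmetry lemma; the nonlocal-games route via $k$-wise indistinguishability, the Bogdanov et al.\ duality with approximate degree, and the $\mathsf{LC^0}$ corollary from \cite{bun_quantum_2019} match Section~\ref{sec:nlgs} (including the observation that quantum advice rides along for free); and the decision-tree decomposition for no-signaling channels is Theorem~\ref{thm:tree-decomp}. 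Your closing assessment of why the fully general case remains open is also the paper's.

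One small bookkeeping point worth tightening: the claim that the two post-$\mc C$ output distributions are $\Omega(n/2^d)$-wise indistinguishable does not hold for the raw circuit outputs, since the input qubits' forward lightcones overlap. The paper handles this by first restricting to an independent set of input qubits (Lemma~\ref{lem:to-nlgs}) so that the surviving circuit genuinely factors as a nonlocal channel, and only then invoking no-signaling to get the marginal-matching property; without that restriction step the indistinguishability argument as you phrased it does not quite parse.
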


Although proving correlation bounds against $\ac$ is a well-understood topic with many techniques (among them H\aa stad's switching lemma \cite{hastad_almost_1986} and Razborov-Smolensky \cite{razborov_lower_1987,smolensky_algebraic_1987}), when $\qnc$ precomputation is added these approaches cannot be used directly.
The pursuit of new techniques leads us to connections with many-player nonlocal games, approximate degree bounds, and new directions for generalizing H\aa stad's switching lemma.
Evidence for Conjecture \ref{conj:main} is laid out as follows.

\subsubsection*{The ancilla-free case}
In Section \ref{sec:ancilla-free} we prove Conjecture \ref{conj:main} when $\qnc$ is restricted to be ancilla-free.
A key feature of such $\qnc$ circuits is that they correspond to unitary transformations, and we find in this case the correlation of $f\circ\mc C$ with $\parity$ is controlled by the Fourier tail of $f$.
Recall the $k$\textsuperscript{th} \emph{Fourier tail} of a Boolean function $f$ is given by
\[\mathbf{W}^{\geq k}[f]:=\textstyle\sum_{|S|\geq k}\widehat{f}(S)^2\,.\]
Appealing to the Linial-Mansour-Nisan-type (LMN-type) estimates of the Fourier tail of $\ac$ \cite{linial_constant_1993}, we obtain the following strong correlation bound.

\begin{theorem}[Ancilla-free $\qnc$, general $\ac$ case]
	\label{thm:ancilla-free-intro}
	If $\mc C$ is an ancilla-free $\qnc$ circuit and $f$ is an $\ac$ function then
	\[\E_x[(f\circ\mc C)(x)\cdot \parity_n(x)]\leq 2^{-n/\mathrm{polylog}(n)}\,.\]
\end{theorem}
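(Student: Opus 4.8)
The plan is to recognize the target quantity as a single Fourier coefficient. Writing $g := f\circ\mc C$ for the real-valued function $g(x)\in[-1,1]$ obtained by averaging out the internal randomness, and using $\parity_n(x)=\medchi_{[n]}(x):=(-1)^{x_1+\cdots+x_n}$, we have $\E_x[(f\circ\mc C)(x)\cdot\parity_n(x)] = \widehat g([n])$, so it suffices to bound $|\widehat g([n])|$. Ancilla-freeness is used to pass to the operator picture: $\mc C$ implements an honest unitary $U$ on $n$ qubits, and measuring $U\ket x$ in the computational basis and applying $f$ gives $g(x)=\bra x U^\dagger D_f U\ket x$, where $D_f:=\sum_y f(y)\ketbra y y$. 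Expanding $f$ in the Fourier basis and using $\sum_y\medchi_S(y)\ketbra y y = Z_S$ (the $Z$-Pauli string on the coordinates in $S$), this becomes $D_f=\sum_S\widehat f(S)\,Z_S$ and hence
\[
g \;=\; \sum_{S\subseteq[n]}\widehat f(S)\,h_S,\qquad h_S(x) := \bra x U^\dagger Z_S U\ket x .
\]

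The heart of the argument is two facts about the transfer functions $h_S$. First, a \emph{lightcone / degree bound}: since $U$ has constant depth $d$ with two-qubit gates, the Heisenberg-evolved observable $U^\dagger Z_S U$ is supported on a set $R_S$ of at most $2^d|S|$ qubits; writing it in the Pauli basis and taking diagonal expectations $\bra x\cdot\ket x$ annihilates every Pauli with an $X$ or $Y$ factor and turns each surviving diagonal Pauli $Z_T$ ($T\subseteq R_S$) into the monomial $\medchi_T$. Thus $h_S$ is a multilinear polynomial of degree at most $2^d|S|$, and in particular $\widehat{h_S}([n]) = 0$ whenever $|S| < n/2^d$. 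Second, a \emph{Parseval bound}: a short trace computation gives $\widehat{h_S}([n]) = \tfrac1{2^n}\tr\!\big(U Z_{[n]} U^\dagger Z_S\big)$, which is exactly the $Z_S$-coefficient in the Pauli expansion of the Hermitian unitary $B := U Z_{[n]} U^\dagger$; since $B^2 = I$, the Pauli coefficients of $B$ have total squared mass $\tfrac1{2^n}\tr(B^2)=1$, and as $S\mapsto Z_S$ is injective, $\sum_S\widehat{h_S}([n])^2 \le 1$.

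Combining these and applying Cauchy--Schwarz yields
\[
|\widehat g([n])| \;=\; \Big|\sum_{|S|\ge n/2^d}\widehat f(S)\,\widehat{h_S}([n])\Big|
\;\le\; \Big(\sum_{|S|\ge n/2^d}\widehat f(S)^2\Big)^{1/2}\Big(\sum_S\widehat{h_S}([n])^2\Big)^{1/2}
\;\le\; \sqrt{\mathbf W^{\ge n/2^d}[f]}\,.
\]
Finally I would invoke the Linial--Mansour--Nisan / H\aa stad Fourier-tail estimate for $\ac$ \cite{linial_constant_1993, hastad_almost_1986}: if $f$ has size $s$ and depth $d'$ then $\mathbf W^{\ge k}[f]\le 2\cdot 2^{-k/O(\log s)^{d'-1}}$. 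With $k=n/2^d$, $s=\poly(n)$, and $d,d'=O(1)$, this is $2^{-n/\mathrm{polylog}(n)}$, and taking a square root only improves the constant in the exponent. The argument used nothing about $f$ beyond LMN-type decay, which is what gives the more general statement quoted in the introduction.

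I expect the step requiring the most care is the Parseval bound: one must verify that the numbers $\widehat{h_S}([n])$ embed \emph{injectively} among the Pauli coefficients of a single fixed unitary $B$, so that their squares sum to at most $1$. A cruder estimate on $\sum_S\widehat{h_S}([n])^2$ would pick up a factor exponential in $n$ from the number of subsets $S$ and destroy the bound. The conceptual core is the two-sided squeeze: low depth forces only the \emph{high} Fourier levels of $f$ to contribute to the top coefficient of $g$, while unitarity simultaneously caps the $\ell_2$-mass of exactly those contributing transfer coefficients.
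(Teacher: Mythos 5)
Your proof is correct and takes essentially the same approach as the paper: the paper's Lemma \ref{lem:correlation} and Plancherel arrive at the same sum $\sum_S \widehat{f}(S)\cdot \frac{1}{2^n}\tr[Z_{[n]}U^\dagger Z_S U]$ that you obtain by expanding $D_f$ in the Pauli basis, its lightcone lemma is your degree bound on $h_S$, and its Parseval bound on the $[-1,1]$-valued function $\langle U Z_{[n]} U^\dagger\rangle_x$ is your observation that $B=UZ_{[n]}U^\dagger$ satisfies $B^2=I$. The only cosmetic differences are whether one Fourier-expands $f$ or the transfer observable, and whether Parseval is invoked on a bounded function or on a Hermitian unitary.
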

\noindent This is proved as Corollary \ref{cor:afq-gac} in Section \ref{sec:ancilla-free}.
The full statement holds for any Boolean function $f$ with sufficient decay in the tail of the Fourier spectrum, including those outside of $\ac$.

However, as we explain in the end of Section \ref{sec:ancilla-free}, the proof technique of Theorem \ref{thm:ancilla-free-intro} cannot extend to the case of general $\qnc$ and we must find a different approach.

\subsubsection*{Reducing to nonlocal games}

To move beyond ancilla-free $\qnc$, in Section \ref{sec:nlgs} we reduce Conjecture \ref{conj:main} to a question about the value of a certain class of nonlocal games, which we call \emph{$n$-player parity games} and which are parameterized by a postprocessing Boolean function $f$.
Through a connection to the notion of $k$-wise indistinguishability introduced in \cite{bogdanov_bounded_2016}, we show the quantum value of a parity game is controlled by the approximate degree of the associated $f$.

Recall for $\epsilon>0$ the \emph{$\epsilon$-approximate degree} of a $(0,1)$-valued\footnote{For $(\pm 1)$-valued $f$, we use the same definition after making the standard identification $+1\mapsto 0, -1\mapsto 1$.
} Boolean function $f$ is given by
\[\widetilde{\text{deg}}_\epsilon[f]=\min\{\deg(g)\mid g:\{0,1\}^n\to\mathbb{R}\text{ a polynomial with } \|f-g\|_\infty\leq \epsilon\}\,.\]
Of course, $\widetilde{\deg}_\epsilon[f]\leq n$ for any $n$-variate $f$ and $\epsilon >0$.
By convention $\widetilde{\deg}[f]:=\widetilde{\deg}_{1/3}[f]$.
A \emph{function class} $\mc F=(\mc F_n)_{n\geq 1}$ is a sequence of sets $\mc F_n$ of $n$-variate Boolean functions, and we extend approximate degree to function classes via $\widetilde{\deg}[\mc F](n) := \max_{f\in\mc F_n}\widetilde{\deg}[f]$.
With this notation, we have the following theorem.

\begin{theorem}[Corollary \ref{cor:deg}, Section \ref{sec:nlgs}]
	\label{thm:cordeg}
	Suppose function class $\mc F$ is closed under inverse-polynomial-sized restrictions.
		Then if $\widetilde{\deg}[\mc F]\in o(n)$, $\mc F\circ\qnc$ cannot achieve $\Omega(1)$ correlation with $\parity_n$, even if $\qnc$ is given arbitrary quantum advice.
\end{theorem}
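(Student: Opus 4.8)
The plan is to establish the contrapositive of a quantitative form: if $\mc F\circ\qnc$ (even with quantum advice) achieves correlation $\delta$ with $\parity_n$, then some $f\in\mc F_m$ for $m$ a suitable inverse-polynomial fraction of $n$ must have $\widetilde{\deg}_{\epsilon}[f]=\Omega(n)$ for $\epsilon$ bounded away from $1/2$ in terms of $\delta$. The route goes through nonlocal games. First I would recall the $n$-player parity game associated to a postprocessing function $f$: the referee samples a promise input $x\in\{0,1\}^n$ (say with $|x|$ even, as in parity halving), distributes the bits $x_1,\dots,x_n$ to $n$ spatially separated players, each player $i$ returns a bit $a_i$, and the players win if $f(a_1,\dots,a_n)$ equals $\parity(x)$. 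The key structural observation — this is where the ancilla-allowing, advice-allowing generality is bought — is that an $\acqnc$-style circuit $f\circ\mc C$ with a constant-depth $\qnc$ preprocessor can be converted into a strategy for this nonlocal game: the constant lightcones of $\mc C$ mean each output qubit depends on $O(1)$ input bits, so by grouping qubits into lightcone-clusters and handing each cluster to one player (together with the fixed initial entangled state, which the players share as advice), each player can locally produce her output bits; the shared entanglement across the whole circuit becomes the shared-entanglement resource of the quantum players. Hence $\omega_q(\text{parity game}_f)\geq \tfrac12+\tfrac12\E_x[(f\circ\mc C)(x)\cdot\parity_n(x)]$.

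Next I would invoke the connection between the quantum value of such a game and $k$-wise indistinguishability, following \cite{bogdanov_bounded_2016}. The point is that an $n$-party entangled strategy, when restricted to the marginal seen by any $k$ of the players, is reproducible by a $k$-party strategy, and in particular the distribution on $(x|_T, a|_T)$ for any $|T|\le k$ is consistent across choices of $T$ in a way that a low-communication/local picture would be. Concretely, one shows that if the parity game has quantum value $\tfrac12+\delta$ then there exist two distributions $D_0$ (supported on even-parity $x$) and $D_1$ (odd-parity $x$) that are $k$-wise indistinguishable for $k=\Theta(n)$ — intuitively because parity of an even-weight string is determined only by \emph{global} information and the players' answers can only aggregate local information — yet $\E_{D_0}[f]-\E_{D_1}[f]\ge 2\delta$, i.e. $f$ \emph{distinguishes} them. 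By the standard LP-duality characterization (Bogdanov–Ishai–Viola–Williamson), a function that $\epsilon$-distinguishes some pair of $k$-wise indistinguishable distributions must have $\epsilon$-approximate degree $\geq k$. This yields $\widetilde{\deg}_{1-2\delta\text{-ish}}[f]\geq \Omega(n)$, contradicting $\widetilde{\deg}[\mc F]\in o(n)$ once we amplify/adjust the error parameter.

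The error-parameter bookkeeping is the reason the "closed under restrictions" hypothesis is needed. A constant correlation $\delta$ with parity gives a distinguishing advantage that may be far below the $1/3$ threshold built into $\widetilde{\deg}[\cdot]$; the standard fix is a restriction/XOR-amplification trick: partition the $n$ coordinates into $n/m$ blocks, XOR within blocks, and observe that computing $\parity_n$ on blocks of XORs is computing $\parity_{n/m}$ on the block-variables, while $f$ restricted appropriately still lies in $\mc F$ by hypothesis. Iterating, a $\delta$-correlation with $\parity_n$ boosts to $\Omega(1)$-then-$(1-o(1))$-correlation with $\parity_{n'}$ for $n'=n/\poly$, which translates to a distinguishing advantage exceeding $1/3$ and hence forces $\widetilde{\deg}[\mc F](n')\geq \Omega(n')$. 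Since $n'$ is still $n/\poly$, this contradicts $\widetilde{\deg}[\mc F]\in o(n)$.

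The main obstacle I anticipate is the first step — cleanly converting a \emph{global} constant-depth quantum circuit with ancillas into a genuinely \emph{nonlocal} (spatially-separated, no-communication) strategy. Ancilla qubits whose lightcones overlap several input bits, and the need to assign each ancilla and each measured output to exactly one player without any player needing another's input bit, require a careful lightcone-clustering argument; one must argue that the bipartite (really $n$-partite) structure of the shared initial state plus local unitaries and local measurements suffices, and that the promise on $x$ (even weight) is compatible with the player partition. Getting the parameters right here — how large $k$ can be as a function of the circuit depth $d$ and size, and ensuring $k=\Omega(n)$ rather than $k=\Omega(n)/\poly$ — is the delicate part; everything downstream is an application of known approximate-degree/indistinguishability machinery and the routine restriction argument.
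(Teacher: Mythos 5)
Your route is the paper's route: reduce to a nonlocal ``parity game,'' observe that the two pushforward distributions (through the nonlocal strategy) of uniform-even and uniform-odd inputs are $k$-wise indistinguishable with $k=\Theta(n)$, invoke Bogdanov--Ishai--Viola--Williamson to convert distinguishing advantage into an approximate-degree lower bound, and contradict $\widetilde{\deg}[\mc F]\in o(n)$. You also correctly single out the delicate step. However, you leave the genuine gap unresolved, and your candidate fix (``lightcone-clustering'') does not work: in a general $\qnc$ circuit with ancillas the forward lightcones of different input qubits overlap, and merging overlapping lightcones into clusters can collapse to a single cluster (e.g.\ a 1D brickwork circuit), destroying the $n$-player structure entirely. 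The paper's Lemma~\ref{lem:to-nlgs} resolves this differently: build the graph on input qubits with an edge whenever forward lightcones intersect, note its degree is at most $2^d$, take an independent set $S$ of size $\geq n/(2^d+1)$, and \emph{fix} the inputs in $S^c$. The surviving inputs now have pairwise-disjoint lightcones, so each becomes one player holding one input bit; outputs outside any surviving lightcone go to a referee, whose outcome is later also fixed by averaging. This, together with fixing the referee outcome in Theorem~\ref{thm:approx-deg}, is precisely why closure under inverse-polynomial restrictions is hypothesized --- that hypothesis has nothing to do with error amplification.

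Your XOR-within-blocks amplification step is both misdirected and unnecessary. It is unclear how XOR-ing input coordinates interacts with a fixed circuit $f\circ\mc C$ evaluated on the original $x$, and the paper never needs it: the indistinguishability argument already delivers $\widetilde{\deg}_{\epsilon/2}[g]\geq n/2^d - 1$ for a restriction $g\in\mc F$, where $\epsilon$ is the assumed correlation. Since $\epsilon=\Omega(1)$, standard approximate-degree error amplification ($\widetilde{\deg}_{\delta}[g]=O(\widetilde{\deg}_{1/3}[g]\cdot\log(1/\delta))$, which is $O(\widetilde{\deg}_{1/3}[g])$ for constant $\delta$) bridges $\epsilon/2$ to $1/3$, giving the contradiction with $\widetilde{\deg}[\mc F]\in o(n)$ directly. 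One smaller slip: the parity game here has no promise on $x$ --- the referee draws $x$ uniformly from $\{0,1\}^n$, each player receives one bit, and players win iff $f$ of their outputs equals $\parity(x)$; the even-weight promise you mention belongs to the parity-halving search problem, not to the decision game used in this reduction.
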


It follows from Theorem \ref{thm:cordeg} that Conjecture 1 would be confirmed in full generality if $\widetilde{\text{deg}}[\ac]\in o(n),$ a notorious open problem \cite{bun_approximate_2022}.
Such a bound is already known for large subclasses of $\ac$, however: for example, for $\ac$ circuits of size $\mathcal{O}(n)$ (termed $\mathsf{LC^0}$), we may appeal to the recent bounds of \cite{bun_quantum_2019} to conclude:

\begin{theorem}[General $\qnc$, linear-size $\ac$ case]
	Suppose $f\in \ac$ has size $\mathcal{O}(n)$.
	Then $f\circ\qnc$ achieves correlation at most $1/\poly(n)$ with $\parity_n$.
	This holds even if $\qnc$ is given arbitrary quantum advice.
	That is,
	\[\E[(\mathsf{LC^0\circ QNC^0/{qpoly}})\cdot\parity_n]\in \textnormal{negl}(n)\,.\]
	\end{theorem}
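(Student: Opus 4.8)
The plan is to obtain this statement as a corollary of Theorem~\ref{thm:cordeg}, applied to the function class $\mc F = \mathsf{LC^0}$ of Boolean functions computed by size-$O(n)$ $\ac$ circuits. Theorem~\ref{thm:cordeg} has two hypotheses to discharge: that $\mc F$ is closed under inverse-polynomial-sized restrictions, and that $\widetilde{\deg}[\mc F] \in o(n)$. Once both are in place, the theorem gives that $\mathsf{LC^0}\circ\qnc$ fails to achieve $\Omega(1)$ correlation with $\parity_n$; and because Theorem~\ref{thm:cordeg} already permits the $\qnc$ circuit to carry arbitrary quantum advice, the conclusion extends to $\mathsf{LC^0}\circ\qnc/\mathrm{qpoly}$ with nothing further to do.

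For the closure hypothesis, I would observe that applying a restriction $\rho$ to an $\ac$ circuit $C$ only deletes input wires and collapses gates to constants, so $C\restr{\rho}$ is again an $\ac$ circuit, on the surviving variables, with no more gates than $C$ had. The only wrinkle is bookkeeping: a restriction that keeps an inverse-polynomial fraction of the variables alive can inflate the ratio of circuit size to live-variable count by a polynomial factor, so strictly speaking one should check that the restriction sizes used inside the proof of Theorem~\ref{thm:cordeg} leave enough room for this --- they do, because the approximate-degree bound invoked next is polynomially smaller than $n$. For that bound, I would cite the approximate-degree upper bound for linear-size $\ac$ of \cite{bun_quantum_2019}, which gives $\widetilde{\deg}[\mathsf{LC^0}](n) \le n^{1 - \Omega(1)}$, comfortably $o(n)$.

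To upgrade the resulting bound from $o(1)$ to $\mathrm{negl}(n)$, I would not invoke Theorem~\ref{thm:cordeg} as a black box but instead follow the quantitative correlation bound produced by the reduction to $n$-player parity games and the $k$-wise indistinguishability estimate of Section~\ref{sec:nlgs}: there the correlation is controlled by a quantity that decays as the gap $n - \widetilde{\deg}[\mc F]$ grows, and for $\mathsf{LC^0}$ this gap is $n - n^{1 - \Omega(1)} = \Omega(n)$, forcing the correlation below every inverse polynomial (indeed to something exponentially small in a fixed power of $n$). I do not expect a deep obstacle here: the substantive work already lives in Theorem~\ref{thm:cordeg} and in \cite{bun_quantum_2019}, and the only things left to verify are that the restriction sizes demanded by the parity-game reduction are compatible with the mild size-to-variables loss for $\mathsf{LC^0}$, and that the quantitative form of the reduction really does turn an approximate degree polynomially below $n$ into a negligible correlation --- both following from a closer reading of the earlier argument rather than from any new idea.
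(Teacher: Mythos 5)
You've chosen the right high-level strategy — the paper indeed proves this as Corollary \ref{cor:qnc0lin} by combining Theorem \ref{thm:approx-deg} with the approximate degree bound of \cite{bun_quantum_2019} — but your mechanism for upgrading $o(1)$ to $\textnormal{negl}(n)$ is wrong, and the error traces back to citing the wrong version of the $\mathsf{LC^0}$ degree bound.

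The issue is the role of the approximation parameter $\epsilon$ in approximate degree. Theorem \ref{thm:theapproxdeg} (Bogdanov et al.) is an equivalence with matched parameters: $f$ is $\epsilon$-fooled by $k$-wise indistinguishability \emph{if and only if} $\widetilde{\deg}_{\epsilon/2}[f]\leq k$. Consequently Theorem \ref{thm:approx-deg} produces the conditional ``if correlation $\epsilon$ is achieved then some restriction $g\in\mc F$ has $\widetilde{\deg}_{\epsilon/2}[g]\geq n/2^d-1$.'' Knowing only that $\widetilde{\deg}_{1/3}[\mathsf{LC^0}]\leq n^{1-\Omega(1)}$ (the bound you cite) rules out constant correlation $\epsilon\geq 2/3$ but says nothing about, say, $\epsilon = 1/n$: nothing in the $1/3$-degree bound prevents $\widetilde{\deg}_{1/2n}[g]$ from being $n$. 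Your heuristic that the correlation decays as the gap $n-\widetilde{\deg}[\mc F]$ grows is not the actual mechanism — a large gap at a fixed approximation parameter gives no decay at all. What is needed, and what \cite{bun_quantum_2019} actually provides and the paper quotes explicitly as ``Theorem 5,'' is the uniform statement $\widetilde{\deg}_{1/p(n)}[\mathsf{LC^0}]\in o(n)$ for \emph{every} polynomial $p$. With this, one runs the contrapositive of Theorem \ref{thm:approx-deg}: if the correlation $\epsilon$ satisfied $\epsilon\geq 2/p(n)$ for some polynomial $p$, the theorem would force $\widetilde{\deg}_{1/p(n)}[g]\geq n/2^d-1=\Omega(n)$ for some $g\in\mathsf{LC^0}$, contradicting \cite{bun_quantum_2019}; hence $\epsilon<2/p(n)$ for all polynomials $p$, i.e.\ $\epsilon\in\textnormal{negl}(n)$. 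Your parenthetical claim that the correlation is ``exponentially small in a fixed power of $n$'' is also not supported — only negligibility follows from this argument.

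Your closure-under-restrictions discussion is fine, and your observation that Theorem \ref{thm:cordeg} already accommodates quantum advice is correct. The fix is simply to replace the $\widetilde{\deg}_{1/3}$ citation with the $\widetilde{\deg}_{1/p(n)}$ version and rerun the contrapositive with $\epsilon$ tied to $1/p(n)$ rather than reasoning about a gap.
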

\noindent (This is proved as Corollary \ref{cor:qnc0lin} in Section \ref{sec:nlgs}).

Is the difficulty of proving approximate degree bounds for $\ac$ a barrier for resolving Conjecture \ref{conj:main}?
It seems unlikely: the reduction to approximate degree bounds is via a series of substantial relaxations and it would be surprising if all the required converses held.
In fact, we conclude Section \ref{sec:nlgs} with a self-contained approximation theory question (Question \ref{q:blockwise}) concerning a notion of blockwise approximate degree which may be easier to solve than $\widetilde{\text{deg}}[\ac]$ but would still imply Conjecture \ref{conj:main}.

\subsubsection*{Towards an $\acqnc$ switching lemma}
In Section \ref{sec:switching-lemma} we chart a different route to resolving Conjecture \ref{conj:main}, aiming to prove a switching lemma for our hybrid $\acqnc$ circuits.
Recall that H\aa stad's original switching lemma is used to argue that (very roughly) randomly fixing a large fraction of inputs to an $\ac$ circuit with high probability yields a function that can be computed by a shallow decision tree.
At the same time, $\parity$ retains maximum decision tree complexity under the same restrictions, so this leads to $\ac$ correlation bounds.

In comparison to H\aa stad's switching lemma and its descendants, a challenge with $\acqnc$ circuits is that $\qnc$ can correlate, spread out, and bias random restrictions before they reach the bottom layer of DNFs or CNFs in the $\ac$ circuit.
If $\qnc$ were replaced with randomized $\nc$ this problem could be readily addressed by considering each deterministic circuit in the distribution, applying standard arguments there, and computing the expected correlation with parity across circuits in the distribution.
But unlike randomized computation, and as discussed \emph{e.g.,} in \cite{aaronson_acrobatics_2022}, a recurring theme in quantum complexity theory is the impossibility of ``pulling out the quantumness'' from a quantum circuit.

Contrary to this theme, however, we show that when $\qnc$ is replaced by an $n$-party nonlocal channel $\mc N$, it \emph{is} possible to pull out the quantumness in a particular sense:

\begin{theorem*}[Theorem \ref{thm:tree-decomp}, restated]
Let $f:\{0,1\}^{m}\to\{0,1\}$ be any Boolean function and consider an $n$-party nonlocal channel $\mc N$, where the $i$\textsuperscript{th} party receives one bit and responds with $m_i\geq0$ bits, such that $\sum_im_i=m$.
Then the random function $f\circ \mc N$ is equal to a randomized decision tree $\Gamma$ such that $\mathrm{depth}(T)\leq \mathrm{DT_{depth}}[f]$ for all $T\in\mathrm{Supp}(\Gamma)$.
\end{theorem*}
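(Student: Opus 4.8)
The plan is to fix an optimal depth-$d$ decision tree $T^*$ computing $f$, where $d=\dt[f]$, and to build a randomized decision tree $\Gamma$ that \emph{simulates} $T^*$ while querying the input bits $x_1,\dots,x_n$ of $\mc N$ rather than the output bits $y_1,\dots,y_m$. The simulation evaluates $\mc N$ lazily, one party at a time: the first time $T^*$ asks to read a bit produced by party $i$, $\Gamma$ queries $x_i$ and then samples party $i$'s entire $m_i$-bit response. Two observations make this work. First, a single query to $x_i$ supplies everything needed to produce \emph{all} of party $i$'s output bits, so the number of queries $\Gamma$ makes never exceeds the number of distinct parties touched by $T^*$, which along any root-to-leaf path is at most $d$; hence every tree in $\mathrm{Supp}(\Gamma)$ will have $\mathrm{depth}$ at most $d=\dt[f]$. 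Second, the no-communication structure of a nonlocal channel makes ``the distribution of party $i$'s response given its input $x_i$'' a well-defined object, and more generally makes it possible to reveal the parties' responses one query at a time.

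To make the second point precise I would first record the only property of $\mc N$ the argument needs. Writing $p(y \mid x) = \tr\big[(\bigotimes_{i} M^{x_i}_{y^{(i)}})\,\rho\big]$ for the conditional output distribution of $\mc N$, with $y=(y^{(1)},\dots,y^{(n)})$ and $y^{(i)}\in\{0,1\}^{m_i}$, one checks that for every $S\subseteq[n]$ the marginal of $(y^{(i)})_{i\in S}$ depends on $x$ only through $(x_i)_{i\in S}$ --- indeed it equals $\tr\big[(\bigotimes_{i\in S} M^{x_i}_{y^{(i)}})\,\rho_S\big]$ for $\rho_S$ the reduced state on the registers of parties in $S$. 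This is just the non-signaling property of nonlocal channels, and it is in fact the only feature of $\mc N$ used below, so the theorem holds verbatim for arbitrary non-signaling channels. In particular, for any partial input $x_S$, any $y_S=(y^{(i)})_{i\in S}$ in its support, and any $i'\notin S$, the conditional law $q(\cdot \mid x_{S\cup\{i'\}},y_S)$ of $y^{(i')}$ is well defined, and drawing $y^{(i')}$ from it extends $y_S$ to a sample of $y_{S\cup\{i'\}}$ from the correct marginal of $p(\cdot\mid x)$.

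Next I would construct $\Gamma$ explicitly. Draw $\bm\rho=(\rho_1,\dots,\rho_d)$ uniformly from $[0,1]^d$; this determines a deterministic tree $T_{\bm\rho}$ as follows. Run $T^*$, maintaining the set $S$ of ``revealed'' parties together with their queried inputs $x_S$ and sampled responses $y_S$ (initially $S=\emptyset$). Whenever $T^*$ asks to read a bit of $y$ belonging to party $i$: if $i\in S$, read that bit off the already-sampled $y^{(i)}$ and follow the corresponding edge of $T^*$; if $i\notin S$ --- say this is the $k$-th newly revealed party on the current path --- query $x_i$, use $\rho_k$ by inverse-transform sampling to draw $y^{(i)}\sim q(\cdot\mid x_{S\cup\{i\}},y_S)$, update $S\leftarrow S\cup\{i\}$, then read the requested bit and proceed. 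Output the label of the leaf $T^*$ reaches. Since $T^*$ makes at most $d$ bit-reads on any path it touches at most $d$ distinct parties, so $T_{\bm\rho}$ makes at most $d$ queries and has $\mathrm{depth}$ at most $d$. Because each party's response lives in a finite set and inverse-transform sampling from a finite distribution is piecewise constant in $\rho_k$, the map $\bm\rho\mapsto T_{\bm\rho}$ takes only finitely many values; defining $\Gamma$ as the pushforward of the uniform measure on $[0,1]^d$ along this map yields a genuine randomized decision tree supported on depth-$\le d$ trees.

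It remains to check that $\Gamma$ and $f\circ\mc N$ agree as random functions, i.e., that $\Gamma(x)$ and $(f\circ\mc N)(x)$ have the same law for each fixed $x$. Using the extension property from the second paragraph, an induction on the number of revealed parties gives the invariant that, conditioned on the path taken so far, the sampled $y_S$ is distributed exactly as the corresponding marginal of a true sample $y\sim p(\cdot\mid x)$. In particular the bits of $y$ read along $T^*$'s path --- which alone determine both the path and the leaf label, namely $f$ of any consistent completion --- have, under $\Gamma(x)$, the same joint law as under lazy (equivalently eager) evaluation of $T^*$ on $y\sim\mc N(x)$; hence $\Gamma(x)$ and $(f\circ\mc N)(x)$ coincide. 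The one subtlety, and the place where ``pulling out the quantumness'' might naively seem impossible, is that the parties' responses are correlated, so $\Gamma$ cannot pre-sample each $y^{(i)}$ as an independent function of $x_i$; it must sample each newly revealed party's response \emph{conditioned} on the responses already seen, and the non-signaling property is exactly what guarantees that these conditional samples reconstitute the true joint distribution.
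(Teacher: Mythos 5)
Your proof is correct, and it takes a genuinely different (and arguably more streamlined) route than the paper's. The paper proceeds by a structural induction on single output bits: it introduces \emph{hybrid decision trees} (decision trees whose leaves carry residual channels), proves three bookkeeping claims about composing and averaging them (Claims \ref{clm: expectation-unwrap}--\ref{clm: vertex-decomp}), and in the key step (Claim \ref{clm: vertex-decomp}) decomposes the univariate channel $\mc N^{\{i\}}(x_{B(i)})$ explicitly as a convex combination of at most three deterministic $2\times 2$ stochastic matrices, then recurses on conditional channels $\mc N(x\mid x_{B(i)}{=}b,\, y_i{=}c)$ --- with the crucial observation (your ``once a party's input is fixed, its remaining output bits act like referee bits'') encoded by reassigning $B(j)\gets\perp$ for the other bits of that party. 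Your proof instead simulates the optimal tree $T^*$ lazily, revealing one \emph{entire party at a time} and resampling from the conditional law via a per-party uniform seed $\rho_k$; the three-way decomposition and the hybrid-tree formalism are replaced by the observation that inverse-transform sampling from a finite conditional is piecewise constant in $\rho_k$, so the pushforward is a genuine distribution over finitely many trees. Both arguments use exactly the same two ingredients --- (a) a new party's output can be sampled conditionally given the already-revealed inputs and outputs, because no-signaling makes the joint marginal on revealed parties a function only of their inputs, and (b) each new party costs exactly one query, so parties revealed per path is at most $\dt[f]$ --- and both you and the paper note that only no-signaling, not quantumness, is used. The trade-off: your version is shorter and makes the ``pulling out the quantumness'' intuition maximally transparent; the paper's version keeps an explicit NSC $\mc N_\omega$ alive at every leaf of every intermediate hybrid tree, which is the structural handle the author hopes to exploit for a future switching lemma. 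One minor point worth making explicit in your writeup: you should note that the conditional $q(\cdot\mid x_{S\cup\{i'\}},y_S)$ depends only on the listed arguments (not on the unqueried $x_{[n]\setminus(S\cup\{i'\})}$) precisely \emph{because} of no-signaling applied to the set $S\cup\{i'\}$; you state this, but it is the load-bearing step of the well-definedness of $T_{\bm\rho}$ as a decision tree in the $x$-variables, so it deserves the same prominence as the paper gives its Corollary \ref{cor:restrictions-are-fine}.
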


\noindent(This theorem is proved in Section \ref{sec:switching-lemma} as Theorem \ref{thm:tree-decomp}.)
By an \emph{$n$-party nonlocal channel} we mean the channel corresponding to a quantum strategy in an $n$-player nonlocal game: parties receive one bit of input each and may measure disjoint systems of a shared quantum state as part of their responses, but they are not allowed to communicate.
A formal definition appears as Definition \ref{def:nonlocal}.
In fact, Theorem \ref{thm:tree-decomp} is true not only for nonlocal channels, but for any channel where parties obey the no-signaling property; that is, the output of any subset $S\subset[n]$ of the parties is a function only of the inputs to those parties in $S$.
A formal definition of no-signaling channels appears as Definition \ref{defn:no-signaling}.

The regime where Theorem \ref{thm:tree-decomp} is truly interesting is when $\mathrm{DT_{depth}}[f]\geq \log(n)$.
Then $f$ may depend on all the input coordinates and (potentially) make great use of the processing power afforded by no-signaling channels.
Theorem \ref{thm:tree-decomp} says that to the contrary, precomposition of $f$ by any no-signaling channel has no effect on the (randomized) decision tree complexity of $f$.

How does Theorem \ref{thm:tree-decomp} connect to Conjecture \ref{conj:main}?
As we detail in Section \ref{sec:nlgs}, the replacement of $\qnc$ by the channel $\mc N$ is essentially without loss of generality from the point of view of Conjecture \ref{conj:main}.
Unfortunately, however, $\ac$ circuits can easily have maximum decision tree complexity, so Theorem \ref{thm:tree-decomp} cannot be immediately applied.
Instead, we believe Theorem \ref{thm:tree-decomp} stands as a striking example of the inability of classical postprocessing to make use of the search and sampling power of quantum and super-quantum models of computation.
Additionally, we hope that this theorem's proof technique, which involves tracking the interplay between a decision tree for $f$ and the no-signaling channel $\mc N$, represents the style of argument that could eventually lead to a switching lemma for $\acqnc$.

\subsection*{Outlook}
Taken together, these results suggest $\qnc$ cannot render its power in a way $\ac$ or other simple models of classical computation can access for the purpose of making decisions.
Several questions for further research are posed in Section \ref{sec:discussion}.

\subsection*{Related work}
Unlike the quantum-classical separations surveyed in the introduction, which show quantum upper bounds and classical lower bounds, this paper aims to prove a lower bound against a concrete model of quantum computation.
The pursuit of lower bounds against quantum circuits for computational problems is a nascent area and very little is known.

One quantum circuit model where lower bounds have received some concerted study is $\mathsf{QAC^0}$ \cite{moore_quantum_1999, hoyer_quantum_2003,pade_depth-2_2020, rosenthal_bounds_2021, paulispec}.
A superset of $\qnc$ circuits, $\qac$ additionally allows for arbitrarily-large Toffoli gates,
\[\ket{x_1,\ldots, x_k, x_{k+1}}\mapsto\big|x_1,\ldots,x_k,\,x_{k+1}\oplus (\wedge_{i=1}^k x_i)\big\rangle\,,\]
which are quantum analogues of classical AND gates with unbounded fan-in.
In this setting correlation with parity is also a central open question, and there is growing evidence that $\qac$ cannot achieve $\Omega(1)$ correlation with parity either.
Recent work has shown negligible correlation bounds between $\mathsf{QAC^0}$ and parity when a) the $\mathsf{QAC^0}$ circuit is restricted to depth $2$ \cite{rosenthal_bounds_2021}, and b) when the $\qac$ circuit is of any depth $d$ and is restricted to $\mathcal{O}(n^{1/d})$-many ancillas \cite{paulispec}.
In fact, the second result is a corollary to a Pauli-basis analogue of the LMN theorem for the same subclass of $\qac$  \cite{paulispec}.

The relationship between $\mathsf{QAC^0}$ and $\acqnc$ is rather unclear, and they are likely incomparable as decision classes.
In fact, as far as we know, it is even open whether $\ac\subseteq \qac$, let alone whether $\acqnc\subseteq\qac$ (noting the trivial containment $\ac\subseteq\acqnc$).

The difficulty in comparing these models stems from a subtlety concerning the difference between unbounded fan-in and unbounded fan-out when implemented coherently.
$\ac$ circuits have no restriction on the \emph{fan-out} of their gates, while the definition of $\qac$ appears to strongly limit outward propagation of information.
If one augments $\qac$ with the so-called \emph{fan-out gate}---which is a CNOT gate with any number of target qubits,
\[\ket{x_1,\ldots, x_k}\mapsto\ket{x_1,x_1\oplus x_2,\ldots, x_1\oplus x_k}\,,\]
one obtains the circuit model $\mathsf{QAC}^0_f$, and it is known $\mathsf{QAC}^0_f$ can compute parity exactly in depth 3 \cite{moore_quantum_1999}.
In view of existing lower bounds against $\qac$, it is expected that $\qac$ is strictly contained in $\mathsf{QAC}^\mathsf{0}_f$, and assuming this holds we immediately have that the function version of $\acqnc$ is not in the function version of $\qac$.
This follows, for example, from the fact that multi-output $\ac$ circuits easily implement the classical reversible fan-out gate, $(x_1,\ldots, x_k)\mapsto (x_1,x_1\oplus x_2,\ldots, x_1\oplus x_k)$.
It is safe to say the interaction of nonlocal gates with $\qnc$---whether that interaction is coherent as in $\qac$ and $\mathsf{QAC}^\mathsf{0}_f$, or preceded by measurement as in $\acqnc$---is only beginning to be understood.

\medskip

A separate area where concrete quantum circuit lower bounds have been very successfully developed is for \emph{state preparation} problems.
We do not attempt a survey here, but just mention they were crucial to the resolution of the NLTS conjecture \cite{anshu_nlts_2023} and make use of ideas from error correction, which partially originate in sampling lower bounds from classical complexity \cite{lovett_bounded-depth_2011}.
However, it is not clear how to transfer these methods to quantum circuit lower bounds for computational problems in the $\acqnc$ model.

\section{Lower bounds when $\qnc$ is ancilla-free}
\label{sec:ancilla-free}
Here we show any Boolean function $f$ with small Fourier tail retains a small top-degree coefficient when composed with ancilla-free $\qnc$.
By the celebrated work of H\aa stad \cite{hastad_almost_1986} and Linial, Mansour, and Nisan \cite{linial_constant_1993}, any $f\in\ac$ is an example---but this theorem addresses a broader set of functions.
On the other hand, as we discuss at the end of the section, once ancillas are allowed, the theorem no longer holds for such a general class of functions.

Recall a function $f:\{-1,1\}^n\to \R$ admits a unique \emph{Fourier decomposition}
\[f=\sum_{S\subseteq[n]}\widehat{f}(S)\bigchi_S,\]
where $\bigchi_S(x):=\prod_{i\in S}x_i$ is the $S$\textsuperscript{th} Fourier character (see \emph{e.g.,} \cite{ODonnell2014-li} for more).
We will later make use of the familiar Plancherel theorem, which states for any $f,g:\{-1,1\}^n\to\mathbb{R}$ that
\[\E_x[f(x)g(x)]=\sum_{S\subseteq[n]}\widehat{f}(S)\widehat{g}(S)\,.\]

Let us briefly connect this perspective to quantum observables.
Given a Boolean function $f:\{\pm 1\}^n\to \R$ we define its \emph{Von Neumann observable} as
\[M_f:=\sum_xf(x)\ket{x}\!\bra{x}\,.\]
An identity we will use is
\[M_{\medchi_S}=Z^S\,,\]
where the operator $Z$ here is the Pauli operator $\left(\begin{smallmatrix}1&0\\0&-1\end{smallmatrix}\right)$, and generally for any 1-qubit operator $A$ we use the notation
\[A^S := \bigotimes_i\begin{cases}
    A &\text{if } i\in S\\
    \mathbbm{1} &\text{otherwise.}
\end{cases}
\]

Any Von Neumann observable $M$ (that is, any Hermitian operator) has expectation value on state $\rho$ given by
\[\langle M\rangle_\rho:=\tr[M \rho]\,,\]
and when $M=M_f$ and $x\in\{0,1\}^n$ we note the identity \[\langle M_f\rangle_x:=\langle M_f\rangle_{\ket{x}\!\bra{x}} = f(x)\,.\]

With this notation, we prove the following.

\begin{theorem}[Correlation bound for ancilla-free $\qnc$]
\label{thm:ancilla-free}
    Let $f:\{\pm1\}^n\to\R$ and $U$ an ancilla-free $\qnc$ circuit of depth $t$.
    Then the correlation of $f\circ U$ and $\text{\textsc{Par}}$ is bounded as
    \[\E_x[\langle U^\dagger M_fU\rangle_x\cdot \parity_n(x)]\;\leq\; \left(\mathbf{W}^{\geq 2^{-t}n}[f]\right)^{1/2}\,.\]
\end{theorem}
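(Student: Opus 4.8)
The plan is to expand the correlation $\E_x[\langle U^\dagger M_f U\rangle_x \cdot \parity_n(x)]$ using the Fourier/Pauli dictionary established just above the statement. Since $\parity_n(x) = \bigchi_{[n]}(x)$, and $M_{\bigchi_{[n]}} = Z^{[n]}$, the quantity $\E_x[\langle U^\dagger M_f U\rangle_x \cdot \parity_n(x)]$ should be rewritten as a single operator expression. Concretely, averaging $\ket{x}\!\bra{x}\cdot\parity_n(x)$ over uniform $x$ produces (up to normalization $2^{-n}$) the operator $X^{[n]} = \bigotimes_i X$ — this is the standard identity that the uniform mixture of computational basis states weighted by $\bigchi_S$ yields the Pauli string $X^S$ (in the $\pm1$ convention, $\frac{1}{2^n}\sum_x \bigchi_S(x)\ket{x}\!\bra{x}$ has the same matrix as $X^S$ after reindexing; more cleanly one writes $\E_x[\langle A \rangle_x \bigchi_S(x)] = \frac{1}{2^n}\tr[A\, X^{S}]$ up to a conjugation by Hadamards). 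So the first step is the clean identity
\[
\E_x[\langle U^\dagger M_f U\rangle_x\cdot\parity_n(x)] \;=\; \tfrac{1}{2^n}\tr\!\big[U^\dagger M_f U\cdot H^{\otimes n} Z^{[n]} H^{\otimes n}\big] \;=\; \big\langle M_f\big\rangle_{\,U\, \ket{+}\!\bra{+}^{\otimes n}\!\cdots},
\]
i.e. the correlation equals $\bra{\psi}M_f\ket{\psi}$ — or rather a matrix element — for $\ket{\psi} = U H^{\otimes n}\ket{0^n}$ type states; the precise bookkeeping of which state and which off-diagonal element appears is the routine part.

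Next I would exploit the lightcone structure of $U$. Because $U$ is ancilla-free and has depth $t$, conjugation by $U$ maps any single-qubit Pauli $X_i$ (or $Z_i$) to an operator supported on the backward lightcone of qubit $i$, which has size at most $2^t$ (each layer of $2$-qubit gates at most doubles the support). Decompose $M_f = \sum_S \widehat f(S)\, Z^S$ and likewise expand the $\parity$-side operator, which is a product of single-site $X$'s; the point is that $U^\dagger (\prod_i X_i) U$ has a Pauli expansion whose terms all come from products of lightcones. The quantity we want is an inner product between the $Z^S$-expansion of $M_f$ and the Pauli expansion of the transformed parity operator. The key structural observation is that only Fourier levels $|S|$ that are "large" can contribute: a term $\widehat f(S) Z^S$ can have nonzero overlap with $U^\dagger(\bigotimes_i X_i)U$ only if $S$ meets the union of lightcones in a way forcing $|S| \geq n/2^t$ — intuitively, $\bigotimes_i X_i$ touches every one of the $n$ output qubits, each output qubit's lightcone has size $\le 2^t$, so to "cover" all $n$ sites the set $S$ must have at least $n/2^t$ elements. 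Making this precise — that the relevant coupling only sees $\mathbf{W}^{\geq n/2^t}[f]$ — is the heart of the argument.

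Finally, with the correlation expressed as $\sum_{|S|\geq n/2^t}(\text{coupling coefficients})\cdot\widehat f(S)$, I would apply Cauchy–Schwarz: the sum is at most $\big(\sum_{|S|\geq n/2^t}\widehat f(S)^2\big)^{1/2}\cdot\big(\sum_{|S|\geq n/2^t}(\text{coupling})^2\big)^{1/2}$, and the second factor is bounded by $1$ because the transformed parity operator $U^\dagger(\bigotimes_i X_i)U$ is unitary, hence has squared Pauli coefficients (Hilbert–Schmidt norm$^2$, normalized) summing to $1$, and orthogonality of distinct Pauli strings keeps the restricted sum $\le 1$. This yields exactly $\E_x[\cdots]\leq (\mathbf{W}^{\geq 2^{-t}n}[f])^{1/2}$.

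The main obstacle I anticipate is \emph{Step 2}: rigorously establishing that the only Fourier levels of $f$ that couple to the parity side are those with $|S| \geq n/2^t$. The lightcone intuition ("$\bigotimes_i X_i$ touches all $n$ qubits, each lightcone is small, so a covering $Z$-set is large") is correct but needs care — one must verify that a $Z^S$ with small $|S|$ is genuinely Hilbert–Schmidt orthogonal to every Pauli string in the support of $U^\dagger(\bigotimes_i X_i)U$, not merely "unlikely" to overlap. This requires a clean statement that $U^\dagger(\bigotimes_i X_i)U = \bigotimes_{j\in \text{(blocks)}}(\text{local operator on each lightcone block})$ when the lightcones are disjoint, and a partitioning/covering argument when they are not, tracking that any Pauli in the expansion is nontrivial on $\Omega(n/2^t)$ qubits. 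Everything else — the opening Fourier identity and the closing Cauchy–Schwarz — is standard.
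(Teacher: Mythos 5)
Your proposal follows the same route as the paper: rewrite the correlation as a normalized trace, expand $M_f$ in Pauli $Z$-strings (equivalently, the Fourier expansion of $f$), argue that low-degree terms have vanishing overlap with the conjugated parity observable via a lightcone bound, and finish with Cauchy--Schwarz plus $\|\langle U^\dagger Z_{[n]} U\rangle\|_2 \le 1$. But there is a slip in the opening identity: $\sum_x \bigchi_{[n]}(x)\ket{x}\!\bra{x} = Z^{[n]}$, not $X^{[n]}$, so the parity observable is $Z^{[n]}$ with no Hadamard conjugation anywhere. The correct display is simply $\E_x[\langle U^\dagger M_f U\rangle_x\,\parity_n(x)] = 2^{-n}\tr[M_f\, U Z^{[n]} U^\dagger]$ (the paper's Lemma~\ref{lem:correlation}). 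The spurious $H^{\otimes n}$'s don't break the argument downstream (the lightcone and Cauchy--Schwarz steps are insensitive to whether the traceless single-site Pauli is $X$ or $Z$), but as written the first equality is false and should be fixed.

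The ``main obstacle'' you flag in Step~2 is actually the cleanest part once you push the conjugation the other way. Rather than reasoning about the Pauli expansion of $U^\dagger Z^{[n]} U$ and a covering argument over backward lightcones of all $n$ output qubits, simply write $\tr[Z^S\, U Z^{[n]} U^\dagger] = \tr[(U^\dagger Z^S U)\, Z^{[n]}]$ and track the \emph{forward} spread of the low-weight operator $Z^S$: its support at most doubles per layer, so $U^\dagger Z^S U$ acts nontrivially on at most $|S|2^t$ qubits. If $|S|2^t < n$ there is a qubit $j$ on which $U^\dagger Z^S U$ is the identity, and the trace factors through $\tr[Z]=0$ on that qubit. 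This is the paper's Lemma~\ref{lem:lightcone}, four lines of computation; your covering intuition is equivalent (a Pauli $P$ with $c_P \neq 0$ in $U^\dagger Z^{[n]} U$ must satisfy $|P|2^t \geq n$ by the same tracelessness argument applied to $U P U^\dagger$) but is a more roundabout way to land in the same place.
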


For example, when $f$ is an $\ac$ circuit, we may use an LMN-type Fourier concentration bound, such as from \cite{tal_tight_2017}, to a obtain:

\begin{corollary}
\label{cor:afq-gac}
    If $U$ is an ancilla-free $\qnc$ $n$-qubit circuit of depth $t$, and $f:\{\pm 1\}^n\to \{\pm 1\}$ is implemented by an $\ac$ circuit of depth $d$ and size $s$, we have
    \[\E_x[\langle U^\dagger M_f U\rangle_x\cdot \parity_n(x)]\leq \sqrt{2}\cdot \exp\left({\frac{-n}{2^{t+1}\mathcal{O}(\log s)^{d-1}}}\right).\]
\end{corollary}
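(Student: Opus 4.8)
The plan is to combine Theorem~\ref{thm:ancilla-free} with a quantitative Linial--Mansour--Nisan-type bound on the Fourier tail of $\ac$. Theorem~\ref{thm:ancilla-free} already reduces the correlation of $f\circ U$ with $\parity_n$ to the single quantity $\mathbf{W}^{\geq 2^{-t}n}[f]$, so all that remains is to show this tail is exponentially small in $n$ whenever $f$ is computed by a size-$s$, depth-$d$ $\ac$ circuit.

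First I would invoke the sharp form of the LMN theorem, \emph{e.g.} as stated by Tal~\cite{tal_tight_2017}: there is an absolute constant $c$ such that every Boolean function $f$ computed by an $\ac$ circuit of depth $d$ and size $s$ satisfies
\[\mathbf{W}^{\geq k}[f]\;\leq\; 2\exp\!\left(-\frac{k}{(c\log s)^{d-1}}\right)\qquad\text{for all }k\geq 0.\]
Applying this with $k=2^{-t}n$ gives $\mathbf{W}^{\geq 2^{-t}n}[f]\leq 2\exp\!\left(-n/(2^t(c\log s)^{d-1})\right)$. Plugging this into Theorem~\ref{thm:ancilla-free} and taking the square root yields
\[\E_x[\langle U^\dagger M_f U\rangle_x\cdot \parity_n(x)]\;\leq\;\left(\mathbf{W}^{\geq 2^{-t}n}[f]\right)^{1/2}\;\leq\;\sqrt{2}\cdot\exp\!\left(-\frac{n}{2^{t+1}(c\log s)^{d-1}}\right),\]
which is exactly the claimed bound once $(c\log s)^{d-1}$ is rewritten as $\mathcal{O}(\log s)^{d-1}$.

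There is no substantive obstacle here: the corollary is a direct specialization of Theorem~\ref{thm:ancilla-free}. The only points requiring minor care are (i) confirming the precise constant and exponent of whichever LMN-type statement is cited---different sources phrase the exponent as $k/\mathrm{polylog}(s)$, $k/(\log s)^{d-1}$, or $k/(\log(s/\epsilon))^{d-1}$, and one should pick a version whose constants propagate cleanly, as well as fix whether ``depth $d$'' counts the output gate, which only toggles $d-1$ versus $d$; (ii) noting the tail bound applies verbatim to $\{\pm1\}$-valued $f$, since such functions have $\|f\|_2^2=\sum_S\widehat f(S)^2=1$, so $\mathbf{W}^{\geq k}[f]$ is genuinely a probability mass; and (iii) the harmless fact that $2^{-t}n$ need not be an integer, which changes $k$ by an additive constant absorbed into the $\mathcal{O}(\cdot)$.
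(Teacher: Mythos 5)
Your proposal is correct and is exactly the argument the paper intends: apply Theorem~\ref{thm:ancilla-free} to reduce to the Fourier tail $\mathbf{W}^{\geq 2^{-t}n}[f]$, then invoke the Tal/LMN-type tail bound $\mathbf{W}^{\geq k}[f]\leq 2\exp(-k/(c\log s)^{d-1})$ with $k=2^{-t}n$, and take a square root. The arithmetic checks out and the side remarks about conventions are appropriate but not needed.
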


The proof of Theorem \ref{thm:ancilla-free} relies on two brief lemmas.
The first says that when measuring correlations, we could just as well have compared the correlation of $f$ alone to the random function $\parity_n\circ U^\dagger$, defined by applying $\text{\textsc{Par}}_n$ to the output of $U^\dagger\ket{x}$.

\begin{lemma}[Symmetry of correlation]
    \label{lem:correlation}
    Let $f,g:\{\pm 1\}^n\to \{\pm1\}$ and $U$ any $n$-qubit unitary.
    Then
    \begin{align*}
    \E_x[\langle U^\dagger M_f U\rangle_x\cdot g(x)]&=\E_x[f(x)\cdot\langle U M_g U^\dagger\rangle_x]\\
    &=2^{-n}\tr[M_f U M_g U^\dagger]\,.
    \end{align*}
\end{lemma}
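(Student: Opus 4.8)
The plan is to prove Lemma~\ref{lem:correlation} by a direct computation, expanding both expectations into traces and using the cyclic property of the trace. First I would rewrite the left-hand expectation using the identity $\langle M_f\rangle_{\ketbra xx} = f(x)$ in reverse: since $\langle U^\dagger M_f U\rangle_x = \tr[U^\dagger M_f U\,\ketbra xx]$, and averaging over uniform $x\in\{0,1\}^n$ replaces $\ketbra xx$ by $2^{-n}\sum_x g(x)\ketbra xx = 2^{-n} M_g$ once we fold in the factor $g(x)$. Concretely,
\[
\E_x\big[\langle U^\dagger M_f U\rangle_x\cdot g(x)\big]
= 2^{-n}\sum_x g(x)\,\tr\!\big[U^\dagger M_f U\,\ketbra xx\big]
= 2^{-n}\tr\!\Big[U^\dagger M_f U\,\sum_x g(x)\ketbra xx\Big]
= 2^{-n}\tr\!\big[U^\dagger M_f U M_g\big]\,.
\]
By cyclicity of the trace this equals $2^{-n}\tr[M_f U M_g U^\dagger]$, which is the claimed symmetric expression.

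The second equality then follows by running the same computation backwards, now pairing $M_g$ with the state and $f(x)$ with the coefficient: $2^{-n}\tr[M_f U M_g U^\dagger] = 2^{-n}\tr[U M_g U^\dagger M_f] = 2^{-n}\sum_x f(x)\,\tr[U M_g U^\dagger\ketbra xx] = \E_x[f(x)\cdot\langle U M_g U^\dagger\rangle_x]$, using $M_f=\sum_x f(x)\ketbra xx$ and cyclicity once more. This gives all three quantities in the statement.

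There is essentially no obstacle here—the content is entirely the bookkeeping identity $M_h = \sum_x h(x)\ketbra xx$ together with linearity and cyclicity of the trace. The only point requiring a word of care is that the diagonal operators $M_f$ and $M_g$ are written in the \emph{same} computational basis in which the input states $\ket x$ live, so that $\sum_x g(x)\ketbra xx$ is literally $M_g$; this is immediate from the definition of the Von Neumann observable. Hermiticity of $M_f,M_g$ (so that these are genuine observables) is not even needed for the identity, only for the interpretation. I would present the chain of equalities in a single \texttt{align*} block and remark that the $\pm1$-valuedness of $f,g$ is likewise irrelevant to the lemma, which is why Theorem~\ref{thm:ancilla-free} can later be applied to real-valued $f$.
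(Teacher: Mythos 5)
Your proof is correct and is essentially the same direct-computation argument the paper uses. The only cosmetic difference is that you stay at the operator level throughout---absorbing the weighted sum of projectors into $M_g$ and invoking cyclicity of the trace---whereas the paper unwinds both sides into explicit double sums of matrix elements $\bra{y}U\ket{x}\bra{x}U^\dagger\ket{y}$ and matches them. Your closing remark that $\pm1$-valuedness of $f,g$ is not actually used is accurate and worth noting: the paper does rely on this implicitly, since Theorem~\ref{thm:ancilla-free} is stated for real-valued $f$ and the proof applies the lemma with $g=\bigchi_S$ but later also treats $\langle U^\dagger Z_{[n]} U\rangle$ as a real-valued function.
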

\begin{proof}
    Expanding the trace we have
    \begin{align}
        \tr[M_fU M_g U^\dagger]&=\sum_z\bra{z}\textstyle\big(\sum_yf(y)\ketbra{y}{y}\big)U\big(\sum_xg(x)\ketbra{x}{x}\big)U^\dagger\ket{z}\nonumber\\
        \label{eq:sym-a}&=\sum_{x,y,z}f(y)g(x)\braket{z|y}\bra{y}U^\dagger\ket{x}\bra{x}U\ket{z}\nonumber\\
        &=\sum_{x,y}f(y)g(x)\bra{y}U\ket{x}\bra{x}U^\dagger\ket{y},
        \end{align}
        while expanding the expectations we see
        \begin{align*}
        \E_x[\langle U^\dagger M_fU\rangle_x\cdot g(x)]= \frac{1}{2^n}\sum_{x,y}f(y)g(x)\bra{x}U^\dagger\ket{y}\bra{y}U\ket{x}=\E_y[f(y)\cdot\langle U M_g U^\dagger \rangle_y]\,.\nonumber
    \end{align*}
    Identifying the center expression with (a multiple of) \eqref{eq:sym-a} and changing variables completes the lemma.
\end{proof}

The second lemma roughly says when Fourier characters $Z_S$ and $Z_T$ correspond to sets $S,T$ of very different cardinality, they remain orthogonal (with respect to the inner product $\langle A,B\rangle = \tr[A^\dagger B]$) after an application of $U$.
\begin{lemma}[Lightcone lemma]
\label{lem:lightcone}
Suppose $U$ is a depth-$t$ ancilla-free quantum circuit and $|S|2^t<n$.
Then
    \[\tr[Z_{[n]}UZ_SU^\dagger] = 0\]
\end{lemma}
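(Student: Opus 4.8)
The plan is to exploit the tensor (lightcone) structure of a depth-$t$ ancilla-free circuit $U$ together with the fact that $Z_{[n]}$ itself factorizes completely. First I would recall that each output qubit $j$ of $U$ depends only on the qubits in its backward lightcone $L_j\subseteq[n]$, and that $|L_j|\leq 2^t$; dually, conjugating a single-qubit Pauli $Z_i$ by $U^\dagger$ (equivalently, $U Z_i U^\dagger$) yields an operator supported on the forward lightcone of $i$, again of size at most $2^t$. Hence $U Z_S U^\dagger = U\big(\bigotimes_{i\in S}Z_i\big)U^\dagger$ is supported on $\bigcup_{i\in S}(\text{lightcone of }i)$, a set of at most $|S|\,2^t$ qubits. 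Under the hypothesis $|S|2^t<n$, this support is a proper subset of $[n]$, so there exists a coordinate $k\in[n]$ on which $UZ_SU^\dagger$ acts as the identity.

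The second step is to separate that coordinate out of the trace. Write $Z_{[n]}=Z_k\otimes Z_{[n]\setminus\{k\}}$ and $UZ_SU^\dagger = \mathbbm{1}_k\otimes B$ for some operator $B$ on the remaining $n-1$ qubits. Then
\[
\tr\big[Z_{[n]}\,UZ_SU^\dagger\big]=\tr\big[(Z_k\otimes Z_{[n]\setminus\{k\}})(\mathbbm{1}_k\otimes B)\big]=\tr[Z_k]\cdot\tr\big[Z_{[n]\setminus\{k\}}\,B\big]=0,
\]
since $\tr[Z]=0$. This is the whole computation; the lemma follows. (Using $Z_{[n]}=M_{\medchi_{[n]}}=Z^{[n]}$ and $Z_S=Z^S$ in the notation already established makes the factorization literal.)

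The only real content is the lightcone bound, so the main obstacle is setting it up cleanly: I would state precisely that for a depth-$t$ circuit of two-qubit gates, the forward lightcone of any single input qubit has size at most $2^t$ (each layer at most doubles it), and that in the ancilla-free case — where $U$ is a genuine $n$-qubit unitary with no extra registers to trace out — an operator supported on a qubit set $T$ is conjugated by $U$ into an operator supported on $\bigcup_{i\in T}(\text{forward lightcone of }i)$. One should be a little careful that "support" here means the operator is of the form $\mathbbm{1}_{\bar T'}\otimes(\cdot)$, which is exactly what a Heisenberg-picture lightcone argument gives; a union bound over $i\in S$ then yields support size $\leq|S|2^t<n$, guaranteeing the free coordinate $k$. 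Everything after that is the one-line trace factorization above, and no Fourier-analytic or approximation-theoretic input is needed.
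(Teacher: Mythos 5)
Your proposal is correct and matches the paper's argument essentially step for step: the support of $UZ_SU^\dagger$ is bounded by $|S|2^t$ via the lightcone-doubling of a depth-$t$ circuit, so under the hypothesis there is a coordinate on which it acts as identity, and the trace then factors through $\tr[Z]=0$. The only cosmetic difference is that you phrase the support bound as a union of per-qubit forward lightcones, while the paper tracks the nontrivial support of $Z_S$ doubling layer by layer; these are the same estimate.
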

\begin{proof}
    The number of qubits on which $Z_S$ acts nontrivially at most doubles upon conjugation by each layer in $U$.
    Therefore the number of non-identity coordinates in $UZ_SU^\dagger$ is at most $|S|2^t$.
    Now if $|S|2^t<n$, then there is at least one coordinate $j$ such that $UZ_SU^\dagger = V_{[n]\backslash j}\otimes \mathbbm{1}_j$ for some $(n-1)$-qubit unitary $V_{[n]\backslash j}$,
    so
    \[\tr[Z_{[n]}UZ_SU^\dagger] = \tr[Z_{[n]}(V_{[n]\backslash j}\otimes \mathbbm{1}_j)]= \tr[Z_{[n]\backslash j}V_{[n]\backslash j}]\cdot\tr[Z]=0\]
    because $Z$ is traceless.
\end{proof}

With these lemmas in hand, we can give the proof of Theorem 1 in a single display:
\begin{proof}[Proof of Theorem \ref{thm:ancilla-free}]
\begin{align*}
    \E_x[\langle U^\dagger M_fU\rangle_x\cdot \bigchi_{[n]}(x)] &= \E_x[f(x)\cdot \langle U Z_{[n]}U^\dagger\rangle_x]\tag{Lemma \ref{lem:correlation}}\\
    &= \sum_{S\subseteq[n]}\widehat{f}(S)\cdot\widehat{\langle U Z_{[n]}U^\dagger\rangle}(S)\tag{Plancherel}\\
    &= \sum_{S\subseteq[n]}\widehat{f}(S)\underbrace{\E_x[\langle U Z_{[n]}U^\dagger\rangle_x\cdot \bigchi_S(x)]}\\
    &\hspace{6em}=2^{-n}\tr[Z_{[n]}U^\dagger Z_{S}U]\tag{Lemma \ref{lem:correlation}}\\
    &\hspace{6em}=0 \quad\text{if } |S|2^t<n\tag{Lemma \ref{lem:lightcone}}\\
    &=\sum_{\substack{S\subseteq[n]\\|S|\geq 2^{-t}n}}\widehat{f}(S)\cdot\widehat{\langle U^\dagger Z_{[n]}U\rangle}(S)\\
    &\leq \left(\sum_{|S|\geq 2^{-t}n}\widehat{f}(S)^2\right)^{1/2}\left(\sum_{|S|\geq 2^{-t}n}\widehat{\langle U^\dagger Z_{[n]}U\rangle}(S)^2\right)^{1/2}\tag{Cauchy-Schwarz}\\
    &\leq \left(\mathbf{W}^{\geq 2^{-t}n}[f]\right)^{1/2}\,.\qedhere
\end{align*}
\end{proof}

One may ask whether this proof approach extends to $\qnc$ circuits with ancillas.
Although it might be possible to prove slight generalizations, we present an example demonstrating that any proof approach using an LMN-type theorem as a black box will fail for general $\qnc$ circuits.
This is essentially because functions with Fourier decay are not closed under composition.

\begin{ex}
Consider the following ``Trojan horse'' function on an even number of bits $n=2m$:
\begin{align*}
    h:\{\pm 1\}^{2m}&\to\{\pm 1\}\\
    x\hspace{1.4em}&\mapsto\begin{cases}
        \bigchi_{[m]}(x) &\text{if } x_{[m+1,2m]}=11\cdots 1\\
        1 &\text{otherwise}\,.
    \end{cases}
\end{align*}
By direct computation one finds the Fourier coefficients of $h$ are given by
\[\widehat{h}(S)=\begin{cases}
	1-2^{-m} & S=\emptyset,\\
	-2^{-m} & S\subseteq[m],S\neq \emptyset\\
	2^{-m} & [m+1,2m]\subseteq S\\
	0 &\text{otherwise}\,.
\end{cases}\]
This means for any $t\geq 1$, the $t$\textsuperscript{th} Fourier tail of $h$ is $\mathbf{W}^{\geq t}[h]\in \mathcal{O}(2^{-n/2})$.
Thus by Theorem \ref{thm:ancilla-free}, for any ancilla-free $\qnc$ circuit $\mc C$, $h\circ \mc C$ has negligible correlation with parity.

On the other hand, consider the (deterministic) function $C:\{\pm 1\}^m\to\{\pm1\}^{2m}$ given by $x\mapsto x11\cdots 1$.
Certainly $C$ can be implemented in $\qnc$, and we have $h\circ C = \bigchi_{[m]} = \parity_m$.
\end{ex}

This example shows that exponential Fourier decay of $f$ is not sufficient to entail Conjecture \ref{conj:main} for general $\acqnc$ circuits.
We must take a different approach that exploits finer structural properties of $\ac$ and $\qnc$.

\section{Lower bounds against $\acqnc$ via nonlocal games}
\label{sec:nlgs}

Here we pass from $\qnc$ to nonlocal games to make an argument that works for general $\qnc$.
First let us fix ideas about $\qnc$.

\begin{defn}[$\qnc$]
\label{def:qnc}
	An $n$-input, depth-$d$ $\qnc$ circuit $\mc C$ is a quantum circuit composed of $d$ layers of arbitrary 2-qubit gates, acting on an \emph{input register} of $n$ qubits and an \emph{ancilla register} of $m\in \poly(n)$ qubits initialized to $\ket{0^m}$.
	Via measurement of the entire output of $\mc C$ in the computational basis, the circuit $\mc C$ effects a randomized mapping from $n$ bits of input to $n+m\in \poly(n)$ bits of output.
	A \emph{$\qnc$ circuit with $v$ qubits of quantum advice}, has $v$ out of $m$ ancilla qubits initialized to a $v$-qubit state, not necessarily a product state.
	For general $v\in\poly(n)$, this is denoted by the class $\qnc/\mathsf{qpoly}$.
\end{defn}

We will show a reduction from $\qnc$ circuits to nonlocal channels.

\begin{defn} (Nonlocal channel)
\label{def:nonlocal}
	Let $n,k\geq 1$ and $m\geq 0$.
	An $(n,k,m)$ nonlocal channel is the randomized mapping defined by a quantum strategy in a nonlocal game where $n$ parties receive one bit of input each and respond with $k$ bits each, along with a referee response of $m$ bits.
	
	Concretely, each party $i\in[n]$ is assigned a local Hilbert space $\mc H_i$ and for each $b\in\{0,1\}^n$, a POVM
	\[M_{(i,b)}=\big\{M_{(i,b)}^y:y\in\{0,1\}^k\big\}\]
	on $\mc H_i$.
	There is also a referee Hilbert space $\mc H_\mathrm{ref}$ with a fixed POVM
	\[M_\mathrm{ref}=\big\{M_\mathrm{ref}^y:y\in\{0,1\}^m\big\}\,.\]
	The definition of the nonlocal channel is completed by a choice of shared state $\ket{\psi}\in\big(\bigotimes_{i=1}^n\mc H_i\big)\otimes\mc H_\mathrm{ref}$ and works as follows.
	Upon receipt of an input string $x\in\{0,1\}^n$, the $n$ players and one referee perform the joint measurement $\big(M_{(1,x_1)},\ldots, M_{(n,x_n)},M_\mathrm{ref}\big)$ on $\ket{\psi}$, resulting in the outcomes $y_1,\ldots, y_n$, and $y_\mathrm{ref}$.
	The output of the channel is the $(nk+m)$-bit string $y=y_1||\cdots||y_n||y_\mathrm{ref}$.
\end{defn}
\begin{defn}[No-signaling channel]
	An $(n,k,m)$ no-signaling channel is defined analogously, except the correlations among parties may be general no-signaling correlations.
	(A very detailed definition of such channels is given in Definition \ref{defn:no-signaling}.)
\end{defn}

\begin{defn}[Parity games]
	Let $n,k,m$ be fixed and consider $f:\{0,1\}^{kn+m}\to \{0,1\}$.
	The $(n,k,f)$ parity game is played by $n$ entangled and non-communicating players, with the $i$\textsuperscript{th} player receiving input bit $x_i$ from $x$ drawn uniformly from $\{0,1\}^n$.
	A (quantum) parity game \emph{strategy} is an $(n,k,m)$ nonlocal channel with output string $y$.
	Players win when $f(y)=\text{\textsc{Par}}(x)$.
	We say a parity game strategy has \emph{advantage} $\epsilon$ if its winning probability is at least $1/2+\epsilon$.
\end{defn}

As a final piece of notation, for Boolean $f$ let $\neg f$ denote its negation.
We are prepared to give our reduction to parity games.

\begin{lemma}
	\label{lem:to-nlgs}
	Fix $n \geq 1, m\in \poly(n)$, let $\mc C$ be a $n$-qubit, depth-$d$ $\qnc$ circuit with $m$ ancilla and arbitrary quantum advice, and let $f:\{0,1\}^{n+m}\to\{0,1\}$ be any Boolean function.
	Suppose $f\circ \mc C$ has correlation $\epsilon$ with $\parity_n$.
	Then for some $n'\geq n/(2^d+1)$ there is a quantum strategy for the $(n',2^d, f)$ or $(n',2^d, \neg f)$ parity game with advantage $\epsilon/2$.
\end{lemma}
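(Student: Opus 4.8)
The plan is to exploit the bounded causal (``lightcone'') structure of $\qnc$: restrict most of the input away, observe that what is left is already a nonlocal channel, and read the parity-game strategy off it.

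\emph{Clustering.} Write $\mc C=U_d\cdots U_1$ on its $n+m$ wires, each $U_t$ a layer of disjoint $2$-qubit gates, and say two wires lie in the same \emph{cluster} if some gate of $\mc C$ couples their histories. Since every gate acts within a single cluster, $\mc C=\bigotimes_B U_B$ with $U_B$ supported on the cluster $B$, and a routine induction on $d$ gives $|B|\le 2^d$ for every $B$. The initial state $\ket x_{[n]}$ tensored with the clean ancilla and the advice state $\ket\phi$ factorizes across clusters \emph{except} through $\ket\phi$. Hence the measured output of $\mc C$ on input $x$ can be produced with no communication by handing each cluster $B$ to one agent who holds the advice qubits lying in $B$, adjoins the remaining wires of $B$ in the prescribed computational-basis state, applies $U_B$, and measures in the computational basis.

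\emph{Restricting and choosing the players.} Each cluster carries at most $2^d$ input wires, so at least $\lceil n/2^d\rceil$ clusters contain an input wire; set $n':=\lceil n/(2^d+1)\rceil$ and pick one input wire from each of $n'$ such clusters, giving the set $I$ of ``active'' wires (choosing $n'$ this small leaves room to allot the remaining wires to a referee and to pad each active cluster to exactly $2^d$ output bits, so that $f$ itself is the postprocessing function of the game). Writing $x=(x_I,a)$ and $\mc C_a$ for $\mc C$ with the inputs outside $I$ hard-wired to $a$, we have $(f\circ\mc C)(x)=(f\circ\mc C_a)(x_I)$ and $\parity_n(x)=\bigchi_{[n]\setminus I}(a)\cdot\bigchi_I(x_I)$, so
\[
\epsilon=\E_x\big[(f\circ\mc C)(x)\cdot\parity_n(x)\big]=\E_a\!\left[\bigchi_{[n]\setminus I}(a)\cdot\E_{x_I}\big[(f\circ\mc C_a)(x_I)\cdot\bigchi_I(x_I)\big]\right].
\]
Therefore some $a^\star$ satisfies $\bigchi_{[n]\setminus I}(a^\star)\cdot\E_{x_I}[(f\circ\mc C_{a^\star})(x_I)\,\bigchi_I(x_I)]\ge\epsilon$, and depending on the sign $\bigchi_{[n]\setminus I}(a^\star)$ either $f\circ\mc C_{a^\star}$ or $\neg f\circ\mc C_{a^\star}$ has correlation $\ge\epsilon$ with $\parity_{n'}$ on the active bits $x_I$.

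\emph{Reading off the strategy.} In $\mc C_{a^\star}$ every cluster still carries at most one \emph{free} input wire, so the clustering paragraph now yields an $(n',2^d,m')$ nonlocal channel: the shared state is $\ket\phi$, with the advice qubits of a cluster given to its active player when it has one and to the referee otherwise; on input bit $x_i'$ player $i$ completes its cluster with fresh qubits set according to $a^\star$ (and to $x_i'$ on its own wire), applies $U_B$, measures, and outputs the $2^d$ (padded) outcome bits, while the referee does likewise over the union of the inactive clusters. This exactly reproduces the output law of $\mc C_{a^\star}$ on $x_I$, so composing with $f$ (resp.\ $\neg f$) is a strategy for the $(n',2^d,f)$ (resp.\ $(n',2^d,\neg f)$) parity game; since a $\pm1$-valued strategy wins with probability $\tfrac12+\tfrac12(\text{correlation with }\parity_{n'})$, its advantage is $\ge\epsilon/2$.

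\emph{Main obstacle.} The substantive step is the clustering paragraph: verifying that a $\qnc$ circuit \emph{with entangled quantum advice} genuinely tensor-factorizes over its clusters, so that the only cross-cluster resource the agents require is a share of the fixed advice state, with no communication. The cluster-size bound, the averaging-and-sign argument, and the correlation-to-advantage conversion are routine; the only fiddly bookkeeping is matching the $n+m$ input bits of $f$ to the $2^dn'+m'$ output bits of the game, and it is exactly this matching that forces the mildly lossy bound $n'\ge n/(2^d+1)$ rather than $n'\ge n/2^d$.
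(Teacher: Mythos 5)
The clustering paragraph contains the critical flaw, and it is not a fixable detail but the heart of the matter. You define two wires to lie in the same cluster when some gate of $\mc C$ couples their histories, assert $\mc C = \bigotimes_B U_B$, and claim a routine induction gives $|B|\le 2^d$. This is false: clusters are connected components of the circuit's interaction graph, and those can encompass every wire even at depth $2$. Take a brick-wall circuit on wires $1,\ldots,n$ with layer-$1$ gates on $(1,2),(3,4),\ldots$ and layer-$2$ gates on $(2,3),(4,5),\ldots$; then all $n$ wires lie in a single cluster, so there is no tensor factorization into blocks of size $2^d$, the bound ``each cluster carries at most $2^d$ input wires'' fails, and the ``reading off the strategy'' step has nothing to read off.

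The quantity that genuinely satisfies a $2^d$ bound is the \emph{forward lightcone} $L_j$ of a single input wire $j$, i.e.\ the set of output wires that can depend on $x_j$. But lightcones of distinct inputs can overlap, so they do not partition the wires and do not yield a tensor factorization of $\mc C$ by themselves. The missing idea---and what the paper's proof actually does---is to form a conflict graph on the input qubits with an edge between $j$ and $k$ whenever $L_j\cap L_k\neq\emptyset$, extract a large independent set $S$ (of size $\ge n/(2^d+1)$ using the degree bound), and hardcode the inputs outside $S$. Only after this pruning are the surviving lightcones $\{L_j\}_{j\in S}$ pairwise disjoint, at which point one can hand each $L_j$ to a player and the rest to the referee, producing an $(n',2^d,m')$ nonlocal channel. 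Your averaging-over-hardcoded-bits, sign argument, and correlation-to-advantage conversion are all correct and match the paper, but they sit downstream of the independent-set step, which is the substantive content you are missing. (Your quantum-advice remark is fine as far as it goes: once the decomposition exists, the shared entangled state, which absorbs the advice, is exactly what a nonlocal channel permits the parties to share.)
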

\begin{proof}
	Suppose $f\circ \mc C$ has correlation $\epsilon$ with $\parity$.
	For each input qubit $j$ denote by $L_j$ the set of output qubits in the forward lightcone of $j$.
	Consider the graph with vertices the input qubits $[n]$ and edges drawn between qubits $j$ and $k$ when $L_j$ and $L_k$ have nonempty intersection.
	Then $G$ has degree at most $2^d$, so there exists an independent set $S\subseteq[n]$ of size at least $n/(2^d+1)$.
	
	For each $y\in\{0,1\}^{S^c}$, define the circuit $\mc C_y$ to be $\mc C$ but where for $j\in S^c$, the $j$\textsuperscript{th} input is hardcoded to $y_j$.
	Then $\mc C_y$ is a circuit on at least $n/(2^d+1)$ variables such that the forward lightcones of input qubits are pairwise disjoint.
	Such a circuit defines an $(n', 2^d, m')$ nonlocal channel for some $n'\geq 2^{-d}+1$ and $m'=n+m-n'2^d$.
	(Note this $m'$ is without loss of generality because we may freely assign a player some output bits of the referee if their lightcone is smaller than $2^d$.)
	
	As a result, this restriction represents a strategy for the $(n',2^d, f)$ parity game.
	Moreover, we have
	\begin{align*}
		\E_x[(f\circ\mc C)(x)\cdot \parity(x)] &= \E_{y\sim\{0,1\}^{S^c}}\E_{z\sim\{0,1\}^{S}}[f\circ\mc C_y(z)\cdot \text{\textsc{Par}}(y||z)]\\
		&= \E_{y\sim\{0,1\}^{S^c}}\parity(y)\E_{z\sim\{0,1\}^{S}}[f\circ\mc C_y(z)\cdot \text{\textsc{Par}}(z)]\,.
	\end{align*}
	Therefore since $f\circ\mc C$ has $\epsilon$ correlation with parity on $n$ bits, for at least one $y$, $f\circ \mc C_y$ or $\neg f\circ \mc C_y$  must have at least $\epsilon$ correlation (in magnitude) with parity on $n/d$ bits.
	This is exactly half the advantage of the strategy defined by $\mc C_y$.
\end{proof}

Lemma \ref{lem:to-nlgs}
shows that bounds on the value of parity games translate into correlation bounds for $\acqnc$ with $\parity$.
How might we analyze parity games?
They are in some sense ``flipped'' versions of XOR games, where parity is computed on the inputs to the players, rather than the outputs.
However, it is not clear whether the rich collection of techniques developed to analyze XOR games is applicable here.
Instead, we bound the no-signaling value of the game by taking the perspective of distinguishability.

For any $(n,k,0)$ no-signaling channel $\mc N$, begin by rewriting the correlation as
\begin{align*}
	\E[(f\circ\mc N)(x)\cdot \parity(x)]=\frac{\E[(f\circ\mc N)(x)\mid x\text{ even}]-\E[(f\circ\mc N)(x)\mid x\text{ odd}]}{2}\,.
\end{align*}
Let $\mc U_\mathrm{even}$ and $\mc U_\mathrm{odd}$ denote the uniform distribution on even and odd bitstrings of length $n$ respectively, and consider the pushforwards of $\mc U_\mathrm{even}$ and $\mc U_\mathrm{odd}$ through $\mc N$:
\[\mu := \mc N\big(\mc U_\mathrm{even}\big)\qquad \text{and} \qquad \nu := \mc N\big(\mc U_\mathrm{odd}\big)\,.\]
So $\mu$ and $\nu$ are distributions on strings of length $N:=nk$, and
\begin{align*}
	\E[(f\circ\mc N)(x)\cdot \parity(x)]=\frac{\E[f(\mu)]-\E[f(\nu)]}{2}=\Pr[f(\mu)=1]-\Pr[f(\nu)=1]\,.
\end{align*}
Therefore the correlation of $f\circ \mc N$ with parity can be phrased in terms of $f$'s ability to distinguish the distributions $\mu$ and $\nu$.

What can be said about $\mu$ and $\nu$?
We claim that on every set $S\subset[N]$ of size at most $N/k-1=n-1$, we must have
\begin{equation}
\label{eq:mu-nu}
	\mu_S=\nu_S.
\end{equation}
Here the notation $\mu_S$ denotes the marginal distribution of $\mu$ on the coordinates in $S$.
To see \eqref{eq:mu-nu}, let $T\subset[n]$ be the set of players whose outputs overlap $S$.
Then by the no-signaling property of $\mc N$, the marginal $\mu_S$ (resp. $\nu_S$) is entirely determined by the marginal input distribution on $T$; that is, $(\mc U_\mathrm{even})_T$ (resp. $(\mc U_\mathrm{odd})_T$).
And for any $T$ a strict subset of $[n]$, $\mc (U_\mathrm{even})_T=(\mc U_\mathrm{odd})_T=\mc U(\{0,1\}^{|T|})$, so we must have $\mu_S=\nu_S$.

So all small marginals of $\mu$ and $\nu$ are information-theoretically indistinguishable.
This is exactly $k$-wise indistinguishability, a generalization of $k$-wise independence introduced by Bogdanov et al. \cite{bogdanov_bounded_2016} and first used in the context of secret sharing.

\begin{defn}[$k$-wise indistinguishability \cite{bogdanov_bounded_2016}]
	Two distributions $\mu$ and $\nu$ on $\{\pm 1\}^N$ are $k$-wise indistinguishable if for all $S\subset[N]$ with $|S|\leq k$, $\mu_S=\nu_S$.
	
	Additionally, for $f:\{0,1\}^n\to\{0,1\}$, we say $f$ is $\epsilon$-fooled by $k$-wise indistinguishability if for any pair $\mu, \nu$ of $k$-wise indistinguishable distributions,
\[|\Pr[f(\mu)=1]-\Pr[f(\nu)=1]|\leq \epsilon\,.\]
\end{defn}

It turns out $k$-wise indistinguishability over the hypercube is intimately connected to approximate degree.
By a linear programming duality argument, Bogdanov et al. proved the following.

\begin{theorem}[{\cite[Theorem 1.2]{bogdanov_bounded_2016}}]
	\label{thm:theapproxdeg}
	Let $f:\{0,1\}^n\to\{0,1\}$ and $\epsilon>0$.
	Then $f$ is $\epsilon$-fooled by $k$-wise indistinguishability if and only if $\widetilde{\deg}_{\epsilon/2}[f] \leq k\,.$
\end{theorem}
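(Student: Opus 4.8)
The statement to prove is Theorem~\ref{thm:theapproxdeg} (Bogdanov et al.), characterizing when $f$ is $\epsilon$-fooled by $k$-wise indistinguishability in terms of $\widetilde{\deg}_{\epsilon/2}[f] \leq k$. The plan is to prove both directions via linear programming duality. I would set up the primal LP whose optimal value is exactly $\max |\Pr[f(\mu)=1] - \Pr[f(\nu)=1]|$ over $k$-wise indistinguishable pairs $\mu,\nu$: the variables are the entries of $\mu$ and $\nu$ (nonnegative, summing to $1$), the constraints encode equality of all marginals on sets $S$ with $|S|\leq k$ (equivalently, $\mu - \nu$ is orthogonal to every monomial $\bigchi_S$ with $|S| \leq k$, i.e.\ all low-degree Fourier coefficients of the signed measure $\psi := \mu - \nu$ vanish), and the objective is $\langle f, \psi \rangle = \sum_x f(x)\psi(x)$.

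First I would rewrite: since $\psi = \mu - \nu$ ranges exactly over signed measures of total mass zero, bounded $\ell_1$ norm $\|\psi\|_1 \leq 2$ (with equality achievable), and vanishing Fourier coefficients up to degree $k$, the primal value equals $\max\{ \langle f, \psi\rangle : \widehat{\psi}(S) = 0 \text{ for } |S|\leq k, \ \|\psi\|_1 \leq 2 \}$ — this is essentially the standard ``dual polynomial'' setup. Then I would invoke LP duality (or just Hahn--Banach / the standard fact that the distance from $f$ to the space of degree-$k$ polynomials in $\ell_\infty$ equals the sup of $\langle f,\psi\rangle$ over the unit $\ell_1$-ball annihilating that subspace). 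Concretely: $\min_{\deg(g)\leq k} \|f - g\|_\infty = \tfrac12 \max\{\langle f, \psi\rangle : \|\psi\|_1 \leq 2,\ \psi \perp \text{deg-}k\text{ polys}\}$. So the primal value (the fooling error) is exactly $2 \cdot \min_{\deg(g)\leq k}\|f-g\|_\infty$. Hence $f$ is $\epsilon$-fooled iff $\min_{\deg(g)\le k}\|f-g\|_\infty \leq \epsilon/2$ iff $\widetilde{\deg}_{\epsilon/2}[f]\leq k$, which is the claim. For the ``only if'' direction one reads off the extremal $\psi$ and splits it into positive and negative parts to produce an explicit $k$-wise indistinguishable pair witnessing error $2\min\|f-g\|_\infty$; for ``if'' one uses that any degree-$k$ polynomial $g$ integrates to the same value against any $k$-wise indistinguishable $\mu,\nu$, so $|\Pr[f(\mu)=1]-\Pr[f(\nu)=1]| = |\langle f-g, \mu-\nu\rangle| \leq \|f-g\|_\infty \|\mu-\nu\|_1 \leq 2\|f-g\|_\infty$, and optimizing over $g$ gives $\leq 2\widetilde{\deg}^{-1}$-style bound, i.e.\ $\leq \epsilon$ when $\widetilde{\deg}_{\epsilon/2}[f]\le k$.

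The main obstacle is making the LP duality rigorous with the correct constants and strong duality (no duality gap): one must confirm the primal is feasible and bounded so Slater-type conditions or finite-dimensional LP duality apply, and carefully track that the $\ell_1 \leftrightarrow \ell_\infty$ dual pairing produces the factor of $2$ and the shift from approximation error $\epsilon/2$ to fooling error $\epsilon$. A secondary subtlety is handling the $(0,1)$ vs $(\pm1)$ normalization and ensuring the positive/negative parts of the extremal witness $\psi$ can be renormalized to genuine probability distributions (this works because $\psi$ has total mass zero, so its positive and negative parts have equal mass, which can be scaled to $1$). Once these bookkeeping points are pinned down, the argument is the textbook LP-duality proof and I would simply cite or reproduce it as in \cite{bogdanov_bounded_2016}.
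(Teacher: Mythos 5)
Your proposal is correct and is essentially the same LP/Hahn--Banach duality argument as in the cited reference \cite{bogdanov_bounded_2016}; note that the paper itself does not reprove this theorem but simply cites it, and your sketch (including the reduction of $k$-wise indistinguishability to vanishing low-degree Fourier coefficients of $\psi=\mu-\nu$, the factor-of-$2$ bookkeeping from $\|\psi\|_1\le 2$, and the padding of $\psi^\pm$ to genuine probability distributions) tracks that proof faithfully.
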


With this fact, Lemma \ref{lem:to-nlgs}, and the above discussion, we are ready prove the main theorem in this section.

We say a class of Boolean functions $\mc F=(\mc F_n)_{n\geq 1}$ is \emph{closed under inverse-polynomial restrictions} if for all $f\in \mc F_n$ and all $S\subseteq[n]$ with $n\in \poly(|S|)$, fixing the bits in $S^c$ yields a function still in $\mc F$:
\[f\restr{S^c\gets x}\in \mc F_{|S|}\quad \forall x\in \{0,1\}^{|S^c|}\,.\]
Note that $\ac$ is closed under inverse-polynomial restrictions.

\begin{theorem}
\label{thm:approx-deg}
	Suppose $\mc F$ is a class of Boolean functions closed under negations and inverse-polynomial restrictions.
	Let $m$ be fixed and suppose there is an $f\in \mc F$ on $N=\poly(m)$ variables and an $m$-input $\qnc$ circuit $\mc C$ of depth $d$, with $N-m$ ancilla qubits, and receiving arbitrary quantum advice, such that $f\circ\mc C$ achieves correlation $\epsilon$ with $\parity_m$.
	Then there is a $g\in \mc F$ on $n\geq m/2$ variables with $\widetilde{\deg}_{\epsilon/2}[g]\geq n/2^d-1$.
\end{theorem}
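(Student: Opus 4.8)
The plan is to chain together three facts already established in this section: the reduction to parity games (Lemma~\ref{lem:to-nlgs}), the no-signaling marginal computation given in the discussion above (which exhibits $k$-wise indistinguishability with $k$ close to the number of players), and the duality between $k$-wise indistinguishability and approximate degree (Theorem~\ref{thm:theapproxdeg}). The only genuinely new work is a bookkeeping step that peels off the referee's output bits so that the resulting approximate-degree bound is not vacuous.

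First I would invoke Lemma~\ref{lem:to-nlgs} with input length $m$ and $N-m$ ancilla qubits. Since $f\circ\mc C$ has correlation $\epsilon$ with $\parity_m$, it yields some $n'\geq m/(2^d+1)$ (shrinking the independent set if necessary so that $n'2^d\leq N$, which is possible since $N\geq m$) together with an $(n',2^d,m')$ nonlocal channel $\mc N$, $m'=N-n'2^d$, winning the $(n',2^d,f)$ or $(n',2^d,\neg f)$ parity game with advantage $\epsilon/2$; using that $\mc F$ is closed under negation, assume it is the $f$-game. Passing to the $(0,1)$-valued convention and setting $\mu=\mc N(\mc U_\mathrm{even})$ and $\nu=\mc N(\mc U_\mathrm{odd})$ (distributions on $\{0,1\}^N$), the reformulation of correlation as distinguishing advantage derived above turns advantage $\epsilon/2$ into $|\Pr[f(\mu)=1]-\Pr[f(\nu)=1]|\geq\epsilon$, and the no-signaling argument around \eqref{eq:mu-nu} shows $\mu$ and $\nu$ agree on every marginal supported on at most $n'-1$ coordinates (any such set of coordinates meets at most $n'-1$ of the players).

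Next I would remove the $m'$ referee coordinates, which is needed because the bound we want involves a function on roughly $n'$ variables, not $N$. Because the referee applies a fixed, input-independent POVM---concretely, these are exactly the output qubits of $\mc C_y$ lying outside every player's forward lightcone---we have $\mu_\mathrm{ref}=\nu_\mathrm{ref}$. Averaging $\Pr[f(\mu)=1]-\Pr[f(\nu)=1]$ over this common referee-outcome distribution, some outcome $r$ gives $g:=f\restr{\mathrm{ref}\gets r}$ with $|\Pr[g(\mu_r)=1]-\Pr[g(\nu_r)=1]|\geq\epsilon$, where $\mu_r,\nu_r$ are the conditional laws of the $n:=n'2^d$ player bits. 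These remain $(n'-1)$-wise indistinguishable: a set of $\leq n'-1$ player coordinates together with the whole referee register still meets $\leq n'-1$ players, so $\mu$ and $\nu$ agree there, and conditioning on $\mathrm{ref}=r$ is harmless since $\mu_\mathrm{ref}=\nu_\mathrm{ref}$. Moreover $g\in\mc F$: it is an inverse-polynomial restriction of $f$ because $n=n'2^d\geq m/2$ for $d\geq1$, hence $N=\poly(m)=\poly(n)$; the same inequality gives the claimed bound $n\geq m/2$.

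Finally, $g\in\mc F$ is an $n$-variable function that is \emph{not} $\epsilon$-fooled by $(n'-1)$-wise indistinguishability, so Theorem~\ref{thm:theapproxdeg} (with $k=n'-1$) yields $\widetilde{\deg}_{\epsilon/2}[g]\geq n'-1=n/2^d-1$, as required. I expect the main obstacle, modest as it is, to be the third paragraph: one must argue that restricting to the input-dependent (player) coordinates simultaneously keeps the two distributions $k$-wise indistinguishable with $k$ essentially the new variable count divided by $2^d$, and keeps the postprocessing function inside $\mc F$ --- both hinging on the no-signaling structure and on the referee register being input-oblivious. Everything else is a direct substitution into Lemma~\ref{lem:to-nlgs} and Theorem~\ref{thm:theapproxdeg}.
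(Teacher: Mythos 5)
Your proposal is correct and matches the paper's proof: invoke Lemma~\ref{lem:to-nlgs}, peel off the referee register by an averaging argument to obtain a referee-free channel and a restricted $g\in\mc F$, observe the resulting push-forwards of $\mc U_{\mathrm{even}}$ and $\mc U_{\mathrm{odd}}$ are $(n'-1)$-wise indistinguishable, and apply Theorem~\ref{thm:theapproxdeg}. The only cosmetic difference is that you condition the output distributions $\mu,\nu$ on the referee outcome classically and verify $(n'-1)$-wise indistinguishability directly, whereas the paper conditions the shared state to produce a new referee-free nonlocal channel $\mc N^{R\gets r}$ and then re-invokes the earlier discussion; these are equivalent, and your parenthetical about ``shrinking the independent set'' is unnecessary since the lemma already guarantees $n'2^d\leq N$.
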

\begin{proof}
	By Lemma \ref{lem:to-nlgs}, there is an $m'\geq m/(2^d+1)$ and an $(m',2^d, N-2^dm')$ nonlocal channel $\mc N$ such that $f\circ \mc N$ or $\neg f\circ \mc N$ achieves correlation $\epsilon$ with $\parity_{m'}$.
	
	Suppose the referee measures their system and obtains outcome string $r$.
	This event leads to an updated state shared among the parties in $\mc N$ and thereby defines an $(m',2^d,0)$ nonlocal channel $\mc N^{R\gets r}$.
	By a similar averaging argument to the one used in Lemma \ref{lem:to-nlgs}, there is at least one outcome $r$ of the referee register such that $\mc N^{R\gets r}$ still yields correlation $\epsilon$ with $\parity$.
	Define $g:=f\restr{R\gets r}$ or $g:=\neg f\restr{R\gets r}$ as appropriate and put $\mc E:=\mc N^{R\gets r}$.
	Then $g\in \mc F$ is a function on $n:= 2^dm'$ bits and
	\[\E_x[(g\circ\mc E)(x)\cdot\parity(x)]\geq \epsilon\,.\]
	Therefore, by the discussion above, we see $g$ can $\epsilon$-distinguish $(n/2^d-1)$-wise indistinguishable distributions.
	Applying Theorem \ref{thm:theapproxdeg} we conclude that
	\[\widetilde{\deg}_{\epsilon/2}[g]\geq \frac{n}{2^d}-1\,.\qedhere\]
	\end{proof}
	\begin{corollary}
	\label{cor:deg}
		Suppose function class $\mc F$ is closed under inverse-polynomial-sized restrictions.
		Then if $\widetilde{\deg}[\mc F]\in o(n)$, $\mc F\circ\qnc$ cannot achieve $\Omega(1)$ correlation with $\parity$.
	\end{corollary}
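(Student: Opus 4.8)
The plan is to deduce Corollary~\ref{cor:deg} directly from Theorem~\ref{thm:approx-deg}, with one extra ingredient: the standard error-reduction fact for approximate degree. Suppose, toward a contradiction, that $\mc F\circ\qnc$ achieves $\Omega(1)$ correlation with $\parity$ (even with arbitrary quantum advice). Unpacking this, there is a constant depth $d$, a polynomial size bound, a constant $\epsilon>0$, and an infinite set of input lengths $m$ for each of which there is an $m$-input depth-$d$ $\qnc$ circuit $\mc C_m$ on $N=N(m)\in\poly(m)$ total qubits together with a postprocessing function $f_m\in\mc F_N$ satisfying $\E_x[(f_m\circ\mc C_m)(x)\cdot\parity_m(x)]\geq\epsilon$.

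First I would feed each such pair $(f_m,\mc C_m)$ into Theorem~\ref{thm:approx-deg}. Since $N\in\poly(m)$ and $\mc F$ is closed under inverse-polynomial restrictions, the hypotheses are met, and the theorem produces some $g_m\in\mc F_n$ on $n=n(m)\geq m/2$ variables with $\widetilde{\deg}_{\epsilon/2}[g_m]\geq n/2^d-1$. As $d$ is constant this is $\widetilde{\deg}_{\epsilon/2}[g_m]=\Omega(n)$, and since $m$ ranges over an infinite set, so does $n$. One small bookkeeping point: Theorem~\ref{thm:approx-deg} is stated with $\mc F$ also closed under negations, but this is not needed at the level of the corollary, because $\widetilde{\deg}_\delta[\neg g]=\widetilde{\deg}_\delta[g]$ for every $g$ and $\delta$; closure under restriction alone guarantees $f_m\restr{R\gets r}\in\mc F_n$, and whether $g_m$ is that function or its negation, the approximate-degree lower bound is identical.

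Next I would upgrade the bound from error $\epsilon/2$ to the canonical error $1/3$, i.e.\ pass to $\widetilde{\deg}[g_m]=\widetilde{\deg}_{1/3}[g_m]$. If $\epsilon/2\geq 1/3$ this is immediate from monotonicity of $\widetilde{\deg}_\delta$ in $\delta$. Otherwise I invoke the standard amplification lemma for approximate degree (composition with a suitably scaled Chebyshev polynomial; see e.g.\ \cite{bun_approximate_2022}), which gives $\widetilde{\deg}_{\epsilon/2}[g_m]\leq O(\log(1/\epsilon))\cdot\widetilde{\deg}_{1/3}[g_m]$; as $\epsilon$ is an absolute constant, $O(\log(1/\epsilon))$ is a constant and hence $\widetilde{\deg}[g_m]=\Omega(n)$ as well. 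Therefore $\widetilde{\deg}[\mc F](n)\geq\widetilde{\deg}[g_m]=\Omega(n)$ along an infinite sequence of $n$, contradicting $\widetilde{\deg}[\mc F]\in o(n)$ and completing the argument.

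I do not expect a genuine obstacle here: the corollary is a repackaging of Theorem~\ref{thm:approx-deg}, and the only points requiring care are the two noted above --- the error-reduction step, which is harmless precisely because $\epsilon$ is a constant, and matching the polynomial bookkeeping in the hypotheses ($N\in\poly(m)$, closure under inverse-polynomial restrictions) to what Lemma~\ref{lem:to-nlgs} and Theorem~\ref{thm:approx-deg} consume. All the mathematical content lives in the earlier reduction to parity games and the Bogdanov et al.\ characterization (Theorem~\ref{thm:theapproxdeg}), not in this final deduction.
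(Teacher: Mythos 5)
Your deduction is correct and is exactly the intended one; the corollary is a direct consequence of Theorem~\ref{thm:approx-deg} and the paper supplies no separate proof. You were right to notice and handle the two places where a word-for-word application does not quite go through --- the missing closure-under-negations hypothesis (resolved because $\widetilde{\deg}_\delta$ is negation-invariant, so whichever of $f\restr{R\gets r}$ or its negation the theorem's proof returns, the restriction itself lies in $\mc F$ and carries the same approximate-degree lower bound) and the mismatch between the theorem's $\widetilde{\deg}_{\epsilon/2}$ and the corollary's $\widetilde{\deg}_{1/3}$ (resolved by monotonicity when $\epsilon/2\geq 1/3$, and otherwise by the standard $O(\log(1/\epsilon))$ amplification, which is a constant factor since $\epsilon$ is a constant).
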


	The burning question, then, is whether $\widetilde{\deg}[\ac]\in o(n)$.
	In fact, the approximate degree of $\ac$ is a longstanding open problem and its resolution would lead to several consequences in complexity theory \cite{bun_approximate_2022}.
	To get a sense of the difficulty of this question, consider that on one hand, a sublinear upper bound is known for a large subclass of $\ac$.
	\begin{theorem*}[{\cite[Theorem 5]{bun_quantum_2019}}]
		Let $p(n)\in \poly(n)$.
		Then the class of $\ac$ circuits of linear size, denoted by $\mathsf{LC^0}$, has
		\[\widetilde{\deg}_{1/p(n)}[\mathsf{LC^0}]\in o(n).\]
	\end{theorem*}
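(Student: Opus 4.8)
The plan is to construct, for every depth-$d$ linear-size $\ac$ circuit $C$ on $n$ inputs, an explicit real polynomial of degree $o(n)$ that approximates $C$ pointwise to within $1/p(n)$. The construction proceeds bottom-up through the layers of $C$, replacing each gate by a low-degree polynomial approximation and composing. Two approximation-theoretic primitives drive this. First, $\widetilde{\deg}_\epsilon[\mathrm{OR}_k]=\widetilde{\deg}_\epsilon[\mathrm{AND}_k]=O\big(\sqrt{k\log(1/\epsilon)}\big)$, which follows by symmetrization together with the extremal properties of Chebyshev polynomials. Second, Sherstov's robust-approximation technique, which converts a degree-$D$ approximation of a Boolean function into one of degree $O(D+\log(1/\delta))$ that still produces the correct value (up to the original error) when each of its inputs is only known to within $\delta$; this is what permits composing the per-gate approximations without the errors multiplying. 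Since the target error $1/p(n)$ satisfies $\log(1/\epsilon)=O(\log n)$, all such overheads are $\mathrm{polylog}(n)$ factors that cannot affect whether the final degree is $o(n)$, so it is harmless to build a constant-error approximation and inflate by $\mathrm{polylog}(n)$ at the end.

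The naive execution of this plan does not work, and seeing why is the crux. Composing robustly layer by layer yields a degree that is, roughly, the product over the $d$ layers of $\sqrt{\text{fan-in}}$, which can already be $\omega(n)$; and one cannot rescue the argument by first unrolling $C$ into a read-once $\mathrm{AND}/\mathrm{OR}$ formula and invoking the known $\widetilde{O}(\sqrt{L})$ bound on the approximate degree of an $L$-leaf formula, because fan-out compounds across layers and a linear-size \emph{circuit} can unroll into a formula with super-polynomially many leaves. The fix, following \cite{bun_quantum_2019} and their analysis of composed functions with shared inputs, is a way to build a low-degree robust approximation of $\mathrm{OR}_k$ (or $\mathrm{AND}_k$) composed with $k$ sub-circuits that \emph{share} their at most $n$ input variables, whose degree is governed by $n$ together with the approximate degrees of the sub-circuits (and $\mathrm{polylog}$ factors) rather than by $k$ or by the unrolled size---intuitively, one searches over the at most $n$ distinct variables rather than over $k$ independent computations, so the $\sqrt k$ factor that ruins the naive bound is absorbed into a $\sqrt n$. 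Iterating this shared-input composition up the $d=O(1)$ layers of $C$, and using that $C$ has linear size so that all the sub-circuit sizes encountered sum to $O(n)$, the final degree comes out to $\widetilde{O}(\sqrt n)$ up to $\mathrm{polylog}(n)$ and $2^{O(d)}$ factors, hence $o(n)$. Equivalently one can run the whole argument in the language of quantum query complexity, using $\widetilde{\deg}_\epsilon[f]\le 2\,Q_\epsilon(f)$ and a recursive amplitude-amplification algorithm with the same shared-input amortization; the difficulty is identical.

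The main obstacle is exactly this shared-input composition step: controlling the degree of an approximating polynomial for an unbounded-fan-in gate sitting atop many sub-circuits that all reuse the same at most $n$ variables. This is the step that separates the linear-size regime---where the sub-circuit sizes total $O(n)$ and the amortization goes through---from the general polynomial-size regime, where that total can be polynomially larger and the present techniques no longer keep the degree below $n$; that gap is precisely why $\widetilde{\deg}[\ac]\in o(n)$ remains open. A routine auxiliary point is that substituting (equivalently, identifying) variables of a polynomial never increases its degree, which legitimizes reasoning about the layered, possibly-sharing structure of $C$ abstractly and then pulling the resulting polynomial back to the actual circuit.
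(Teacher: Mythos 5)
This statement is not proved in the paper at all: it is cited verbatim from \cite[Theorem 5]{bun_quantum_2019} and used as a black box, so there is no ``paper's own proof'' against which to compare. What you have written is a high-level reconstruction of the strategy in that reference, and as such it gets the shape of the argument right. You correctly identify the $O\bigl(\sqrt{k\log(1/\epsilon)}\bigr)$ Chebyshev bound for $\mathrm{OR}_k/\mathrm{AND}_k$, Sherstov's robustification, the reason naive layer-by-layer composition breaks for circuits (fan-out compounding, so unrolling to a formula can blow the leaf count up super-polynomially), and the central fix, namely a shared-input composition lemma that replaces a $\sqrt{k}$ factor by a quantity governed by the number $n$ of distinct variables. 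You also correctly observe that $\epsilon=1/p(n)$ only costs $\mathrm{polylog}(n)$, and that linear size is what makes the sub-circuit sizes sum to $O(n)$ so the amortization closes.

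Two caveats. First, the entire weight of the theorem rests on the shared-input composition lemma, which you state informally and do not prove; as a standalone proof this is therefore incomplete, though you are candid that this step is ``the crux.'' Second, the final degree bound you assert, $\widetilde{O}(\sqrt{n})$ up to $2^{O(d)}$ and $\mathrm{polylog}$ factors, is not what \cite{bun_quantum_2019} obtains for general depth-$d$ linear-size circuits: their bound for constant-depth circuits of size $s$ is of the form $\sqrt{n}\cdot s^{1/2-\Omega(1/d)}$, so for $s=O(n)$ the degree is $n^{1-\Omega(1/d)}$, which degrades with depth and only reaches $\widetilde{O}(\sqrt{n})$ in the depth-two (DNF/CNF) case. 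This does not affect the $o(n)$ conclusion for constant $d$, but it is worth fixing the quantitative claim so it matches the source.
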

	Yet on the other hand, a series of works, most recently \cite{sherstov_approximate_2022}, show the following:
	\begin{theorem*}
		For any $\delta >0$, there is a function $f\in \ac$ with $\widetilde{\deg}[f]\in\Omega(n^{1-\delta})$.
	\end{theorem*}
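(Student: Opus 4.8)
The theorem is the culmination of the cited line of work and is proved by \emph{hardness amplification for the approximate degree}, carried out with the method of dual polynomials. The plan is as follows: fix the target $\delta>0$, and build a chain of Boolean functions $f^{(0)},f^{(1)},\dots,f^{(k)}$ with $k=k(\delta)$, where $f^{(0)}$ is a simple base function and each $f^{(i+1)}=\mathcal A(f^{(i)})$ for a fixed ``amplification operator'' $\mathcal A$ that block-composes its argument with cheap outer gadgets. The operator $\mathcal A$ will have the properties that (a) it increases circuit depth by only an additive constant, so that $f^{(i)}\in\ac$ with depth $O(i)$; (b) it enlarges the input length from $m$ to $N=\poly(m)$; and, crucially, (c) writing $\widetilde{\deg}[f^{(i)}]=n_i^{\alpha_i}$ with $n_i$ the current input length, it \emph{shrinks the gap to optimal by a constant factor}, say $\alpha_{i+1}\ge 1-\tfrac12(1-\alpha_i)$ up to lower-order losses. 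Taking $f^{(0)}=\mathrm{OR}_n$, so that $\alpha_0=\tfrac12$ by Minsky--Papert, one gets $\alpha_i\ge 1-2^{-(i+1)}$, hence $\alpha_k\ge 1-\delta$ once $k\ge\log_2(1/\delta)=O_\delta(1)$; then $f^{(k)}$ is an $\ac$ circuit of constant depth with $\widetilde{\deg}[f^{(k)}]=\Omega(n^{1-\delta})$, which is the claim.

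The content is therefore entirely in constructing $\mathcal A$ and establishing property (c). The lower bound on $\widetilde{\deg}[\mathcal A(f)]$ is obtained by exhibiting an explicit dual feasible point --- a function orthogonal to all low-degree monomials, of unit $\ell_1$ norm, and well correlated with $\mathcal A(f)$ --- built as a \emph{dual block composition} $\Phi\star\psi$, where $\psi$ is a dual witness for $f$ inherited from the previous stage and $\Phi$ is a dual witness for the outer gadget, assembled from Minsky--Papert duals of $\mathrm{OR}$. Two of the three required properties --- having large \emph{pure high degree} (this yields the degree bound, since pure high degrees multiply under $\star$) and having unit $\ell_1$ norm --- are essentially formal once the combining operation $\star$ is defined. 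The third --- that $\Phi\star\psi$ genuinely correlates with $\mathcal A(f)$ --- is where the analysis is delicate.

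That third property is the main obstacle. Correlating $\Phi\star\psi$ with the composed function requires the \emph{inner} dual witness $\psi$ to have very small error --- roughly one over the arity of the outer gadget --- whereas the witness handed up from stage $i$ typically has only constant error. The resolution is to interleave degree-amplification with a \emph{dual error-reduction} (``hardening'') step: one precomposes $f$ with a further cheap $\ac$ gadget --- again an additive-constant depth increase and a $\poly$ input blowup --- and shows this drives the error of the dual witness down to $1/\poly$ while costing only a $\mathrm{polylog}$ factor in the certified degree. Getting the bookkeeping of (i) the error parameter, (ii) the degree loss, and (iii) the depth and input-length growth to all balance, so that after $k=O(\log(1/\delta))$ rounds the depth is still $O_\delta(1)$ while the exponent has reached $1-\delta$, is the heart of the Bun--Thaler amplification machinery, refined in \cite{sherstov_approximate_2022}. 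Sherstov's recent treatment repackages the iterated amplify-and-harden scheme as a single cleanly analyzed recursive block composition, but the ingredients --- a robust base function, a dual-block-composition degree bound, and tight control of the error parameter --- are unchanged.
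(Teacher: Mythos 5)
The paper does not prove this theorem; it is stated as a cited external result (attributed most recently to Sherstov's ``Approximate degree of $\mathsf{AC^0}$'' and the preceding Bun--Thaler line of work), so there is no in-paper proof to compare against. Your sketch is, however, a faithful high-level description of how that external proof actually goes: hardness amplification via dual block composition, starting from the Minsky--Papert $\widetilde{\deg}[\mathrm{OR}_n]=\Theta(\sqrt n)$ base and iterating an amplify-and-harden operator $O(\log(1/\delta))$ times, with each round costing only additive constant depth and polynomial input blowup. You also correctly locate the technical crux: the dual witness handed up from one stage has constant error, but dual block composition needs inner error roughly $1/(\text{outer arity})$, which is why an interleaved error-reduction (``hardening'') gadget is essential rather than a luxury, and why the bookkeeping of error, degree, depth, and input length all have to balance simultaneously.

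Two small cautions if you were to turn this into an actual proof rather than a roadmap. First, the claim that ``pure high degrees multiply under $\star$'' is the correct slogan, but the multiplicativity is with respect to the \emph{pure high degree} of the outer dual, not the full degree of the outer gadget, and getting the exponent recursion $\alpha_{i+1}\ge 1-\tfrac12(1-\alpha_i)$ requires choosing the outer gadget's arity as a specific polynomial in the current input length; that choice is what determines the $\poly$ blowup factor and must be tracked explicitly. Second, the base case $\alpha_0=\tfrac12$ from OR is fine, but the first hardening step is already nontrivial because the Nisan--Szegedy / Minsky--Papert dual for OR has constant one-sided error on the all-zeros input, and the hardening gadget must fix this asymmetry before composition---this is not a generic ``$1/\poly$ error'' reduction but a targeted one. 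Neither point is an error in your outline, but both are places where a complete write-up would need to do real work that the sketch currently glosses as ``formal once $\star$ is defined.''
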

	
	The lower bound of $\Omega(n^{1-\delta})$-for-any-$\delta$ is tantalizingly close to the trivial upper bound of $n$ for the approximate degree of any Boolean function, but as it stands it is not unreasonable to guess that $\widetilde{\deg}[\ac]\in\Theta(n/\log n)$ either.
	Several questions---including now Conjecture \ref{conj:main}---could be settled if the gap between $\Omega(n^{1-\delta})$-for-any-$\delta$ and $n$ for $\widetilde{\deg}[\ac]$ were closed.
	
	We may combine the sublinear lower bound on $\mathsf{LC^0}$ from \cite{bun_quantum_2019} with Theorem \ref{thm:approx-deg} to obtain:
	\begin{corollary}
	\label{cor:qnc0lin}
		Let $\mc C$ be an $n$-input, $m$-ancilla $\qnc$ circuit with arbitrary advice.
		Suppose $f:\{0,1\}^{n+m}\to\{-1,1\}$ is defined by an $\ac$ circuit of size $\mathcal{O}(n)$.
		Then $f\circ\mc C$ achieves negligible correlation with $\parity_n$.
	\end{corollary}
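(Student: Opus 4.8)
The plan is to derive Corollary \ref{cor:qnc0lin} directly from Theorem \ref{thm:approx-deg} by contradiction, using the linear-size approximate degree bound of \cite{bun_quantum_2019} as a black box. First I would observe that the class $\mathsf{LC^0}$ of $\ac$ circuits of size $\mathcal{O}(n)$ is closed under negations (negating the output gate costs only $O(1)$ extra size) and closed under inverse-polynomial restrictions (hardcoding input bits only removes gates, so a size-$O(N)$ circuit on $N$ variables restricted to $n$ live variables with $N \in \poly(n)$ is a circuit of size $O(N) = \poly(n)$, hence still linear-size in $n$ up to the polynomial slack allowed in the definition). So $\mathsf{LC^0}$ satisfies the hypotheses of Theorem \ref{thm:approx-deg}.

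Next, suppose for contradiction that $f \circ \mc C$ achieves non-negligible correlation with $\parity_n$; that is, $\E_x[(f\circ\mc C)(x)\cdot \parity_n(x)] \geq 1/p(n)$ for some fixed polynomial $p$ and infinitely many $n$. Here $f$ is computed by an $\ac$ circuit of size $O(n)$ on $N = n+m \in \poly(n)$ variables, so $f \in \mathsf{LC^0}$ (its size is $O(n) = \poly(N^{1/c})$ for the appropriate constant, which is the form of linear size relative to $N$ that the theorem of \cite{bun_quantum_2019} tolerates — I would state this carefully). Applying Theorem \ref{thm:approx-deg} with $\epsilon = 1/p(n)$, we obtain some $g \in \mathsf{LC^0}$ on $n' \geq n/2$ variables with
\[
\widetilde{\deg}_{1/(2p(n))}[g] \;\geq\; \frac{n'}{2^d} - 1 \;\in\; \Omega(n')\,,
\]
since $d$ is a constant. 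But $g$ lies in $\mathsf{LC^0}$ on $n'$ variables and $1/(2p(n)) \geq 1/q(n')$ for a suitable polynomial $q$, so the theorem of \cite{bun_quantum_2019} gives $\widetilde{\deg}_{1/q(n')}[\mathsf{LC^0}] \in o(n')$. These two bounds contradict each other for all large $n$, so no such $\mc C$ can exist and the correlation must be negligible.

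The main obstacle I expect is purely definitional bookkeeping rather than any mathematical difficulty: one must be careful that ``linear size'' is preserved under the two operations invoked (restriction and negation) \emph{with respect to the right variable count}, and that the approximate-degree error parameter translates correctly through Theorem \ref{thm:approx-deg} (which halves $\epsilon$) and through the quantifier $N = \poly(m)$ relating the circuit's total width to its number of live inputs. In particular, the reduction of Lemma \ref{lem:to-nlgs} and the referee-conditioning step inside Theorem \ref{thm:approx-deg} both pass to functions on fewer variables, so I would check that the resulting $g$ is still a linear-size circuit \emph{in its own input length} $n'$, not merely in the original $n$; since $n' = \Theta(n)$ this is fine, but it deserves an explicit sentence. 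Everything else is a direct substitution into the already-proved Theorem \ref{thm:approx-deg}.
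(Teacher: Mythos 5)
Your proposal is correct and matches the approach the paper intends (the paper states the corollary after saying "we may combine the sublinear lower bound... with Theorem \ref{thm:approx-deg}" without spelling out the combination, and your combination is exactly the right one). The one thing to fix is an internal inconsistency you leave in your writeup: in the first paragraph you assert that $\mathsf{LC^0}$ is closed under inverse-polynomial restrictions because a circuit of size $O(N)=\poly(n')$ is "linear-size in $n'$ up to the polynomial slack allowed in the definition" --- but $\mathsf{LC^0}$ has no polynomial slack, it means size $O(n')$ exactly, and restricting a size-$O(N)$ circuit from $N$ to $n'$ variables with $N\in\poly(n')$ may leave a $\poly(n')$-size circuit that is genuinely not in $\mathsf{LC^0}_{n'}$. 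So as written, $\mathsf{LC^0}$ does \emph{not} satisfy the closure hypothesis of Theorem \ref{thm:approx-deg}. What actually saves you is precisely the observation you make in your third paragraph: in the corollary $f$ has circuit size $O(n)$ (in terms of the $\qnc$ input length $n$, not the total arity $N=n+m$), and the restriction inside the proof of Theorem \ref{thm:approx-deg} lands on $n'=\Theta(n)$ variables, so the restricted $g$ has size $O(n)=O(n')$ and genuinely lies in $\mathsf{LC^0}_{n'}$. You should replace the erroneous first-paragraph justification with this observation, rather than state both; equivalently, note that the proof of Theorem \ref{thm:approx-deg} only ever restricts down to $\Theta(m)$ variables, so a weaker closure hypothesis tailored to that suffices here. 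The rest (negation closure, the $\epsilon/2$ bookkeeping, translating $1/(2p(n))$ into $1/q(n')$ via $n'\geq n/2$, and taking $n$ large enough that $o(n')<n'/2^d-1$) is exactly right.
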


	\subsection{Blockwise approximate degree}
	We conclude this section by laying out a self-contained question concerning the approximate degree of $\ac$ with respect to a modified, ``blockwise'' notion of approximate degree.
	This question is sufficient to imply Conjecture \ref{conj:main} in full generality and may be easier to resolve than $\widetilde{\deg}[\ac]$.
	
	Fix $k\geq 1$ (assuming $k$ divides $n$ for simplicity) and let $P$ be the partition of $[n]$ into ``blocks'' of size $k$:
	\[P:=\big\{\{1,\ldots, k\},\{k+1,\ldots,2k\},\ldots,\{n-k+1,n\}\big\}\,.\]
For a monomial $\bigchi_S=\prod_{i\in S}x_i$ define the ($k$-)\emph{block degree} $\mathrm{bdeg}_k[\bigchi_S]$ to be the number of distinct blocks $B\in P$ having nonempty intersection with $S$.
This definition extends naturally to the $k$-block degree $\mathrm{bdeg}_k[f]$ of a Boolean function $f:\{0,1\}^n\to\{-1,1\}$ and to the \emph{approximate $k$-block degree} $\widetilde{\mathrm{bdeg}}_k[f]$ of $f$:
\[\widetilde{\mathrm{bdeg}}_k[f]=\min\{\mathrm{bdeg}_k[g]\mid g:\{0,1\}^n\to\mathbb{R}\text{ a polynomial with } \|f-g\|_\infty\leq 1/3\}\,.\]
Of course $\widetilde{\mathrm{bdeg}}_k[f]\leq n/k$ for any function.

\begin{question}
\label{q:blockwise}
	For all constants $k$, does the following hold?
	\[\widetilde{\mathrm{bdeg}}_k[\mathsf{AC^0}]\,\overset{?}{\leq}\, n/k-1\,.\]
\end{question}

As we explain below, this would be enough to prove Conjecture \ref{conj:main}.
Note the following, which are immediate and hold for all $f$:
\[\widetilde{\deg}[f]< \frac{n}{k} \;\implies\; \widetilde{\mathrm{bdeg}}_k[f] < \frac{n}{k}\;\implies\; \widetilde{\deg}[f]< n-k\,.\]
Moreover, these implications are sharp in that each one cannot generically imply anything stronger, as witnessed by a parity function on an appropriate subset of $[n]$.
Regarding $f\in \mathsf{AC^0}$, the left-hand side holding for arbitrary constant $k$ is equivalent to $\widetilde{\deg}[\mathsf{AC^0}]\in o(n)$, while the far right-hand side follows directly from LMN-type Fourier tail bounds for $\mathsf{AC^0}$.

\begin{prop}
	If the resolution to Question \ref{q:blockwise} is ``yes'', then Conjecture \ref{conj:main} is true.
\end{prop}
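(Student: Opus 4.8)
The plan is to combine the nonlocal-games reduction of Section~\ref{sec:nlgs} with a blockwise version of the easy direction of Bogdanov et al.'s characterization (Theorem~\ref{thm:theapproxdeg}), feeding Question~\ref{q:blockwise}'s hypothesis in exactly where the proof of Theorem~\ref{thm:approx-deg} would otherwise invoke ordinary approximate degree.

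First I would replay the reduction of Lemma~\ref{lem:to-nlgs} and the proof of Theorem~\ref{thm:approx-deg}. If an $\acqnc$ family of depth $d$ achieves correlation $\epsilon = \Omega(1)$ with $\parity_n$, then Lemma~\ref{lem:to-nlgs} produces, for infinitely many $n'\geq n/(2^d+1)$, a quantum strategy of advantage $\epsilon/2$ for the $(n',2^d,f)$ or $(n',2^d,\neg f)$ parity game, and fixing the referee's outcome yields an $\ac$ function $g$ on $N = 2^d n'$ bits together with an $(n',2^d,0)$ nonlocal channel $\mc E$ so that $g$ distinguishes $\mu := \mc E(\mc U_\mathrm{even})$ from $\nu := \mc E(\mc U_\mathrm{odd})$ with advantage $\geq \epsilon$. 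The point I want to exploit is that $(\mu,\nu)$ is not merely $(n'-1)$-wise indistinguishable but \emph{$(n'-1)$-block indistinguishable}: by no-signaling, the restriction of $\mu$ (or $\nu$) to the output blocks of any set $T\subsetneq[n']$ of players equals $\mc E$ applied to the uniform input on $T$, so the two agree. This partition into $n'$ size-$2^d$ blocks is precisely the one in Question~\ref{q:blockwise} with $k = 2^d$.

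Next I would establish the blockwise easy direction: if $\tilde g$ is a real polynomial with $\mathrm{bdeg}_{2^d}[\tilde g] \leq n'-1$, then every monomial of $\tilde g$ is supported on the output blocks of at most $n'-1$ players, so its $\mu$- and $\nu$-expectations coincide, $\E_\mu[\tilde g] = \E_\nu[\tilde g]$, and hence $|\Pr[g(\mu){=}1] - \Pr[g(\nu){=}1]| \leq 2\|g-\tilde g\|_\infty$. Since $g\in\ac$ has $N = 2^d n'$ inputs, Question~\ref{q:blockwise} (with $k = 2^d$) supplies such a $\tilde g$ with $\|g-\tilde g\|_\infty \leq 1/3$, because $\widetilde{\mathrm{bdeg}}_{2^d}[g] \leq N/2^d - 1 = n'-1$. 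Combining the two, $\epsilon \leq 2/3$, so no $\acqnc$ family can exceed correlation $2/3$ with parity.

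The remaining step, which I expect to be the main obstacle, is upgrading ``$\epsilon \leq 2/3$'' to the vanishing bound Conjecture~\ref{conj:main} demands. I would try to bridge it with a boosting step applied \emph{before} the reduction: amplify a hypothetical $\Omega(1)$-correlation $\acqnc$ circuit into one of correlation at least $3/4$, contradicting the bound above. Parity's random self-reducibility, $\parity_n(x) = \parity_n(z)\oplus\parity_n(x\oplus z)$, together with the model's closure under constant-depth classical post-processing and under independent repetition of the $\qnc$ stage, is the natural engine, but plain majority-vote amplification breaks down when the circuit's correlation is concentrated on a sparse input set (and knowing the parity of a random shift $z$ is itself the obstruction), so this step requires a more careful argument. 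An alternative is to prove Question~\ref{q:blockwise} with error $o(1)$ in place of $1/3$, which would plug into the above directly; but error reduction multiplies block degree while the block-degree budget is only $n/k$, so this stronger statement does not obviously follow from the stated one, which is why I would focus effort on the boosting route.
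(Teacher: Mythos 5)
Your main argument—pass to the referee-free nonlocal channel, observe that $\mu = \mc E(\mc U_\mathrm{even})$ and $\nu = \mc E(\mc U_\mathrm{odd})$ are not merely $(n'-1)$-wise but \emph{blockwise} indistinguishable on the $n'$ blocks of size $k=2^d$, and then run the easy direction of the Bogdanov et al.\ duality at block granularity—is exactly the paper's route, and your spelling-out of the "blockwise easy direction" is the content that the paper's sketch compresses into "repeating the proof of Theorem~\ref{thm:theapproxdeg} over this larger alphabet." The subtlety you flag is also genuine: a $1/3$-error block-degree bound of $n/k-1$ yields only $\epsilon\leq 2/3$, whereas Conjecture~\ref{conj:main} demands $\epsilon\to 0$; unlike ordinary approximate degree there is no room in the budget $n/k$ to amortize error reduction, so "the rest of the argument is as before" is doing more work than it admits. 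You were right not to accept that sentence at face value.

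That said, your two worries about the boosting step both dissolve once you use the non-uniformity already built into the model. First, "concentration on a sparse input set" is precisely what the random self-reduction cures: for \emph{fixed} $x$ and uniformly random $z$, $x\oplus z$ is uniform, so $\Pr_{z,\mathrm{rand}}\big[(f\circ\mc C)(x\oplus z)\cdot\parity(z)=\parity(x)\big]\geq \tfrac12+\tfrac{\epsilon}{2}$ holds pointwise in $x$; the average-case hypothesis has been converted into a per-input guarantee. Second, you never need to \emph{compute} $\parity(z)$: take $t=O(1/\epsilon^2)$ independent shifts $z_1,\dots,z_t$, note by Chernoff that the majority of $(f\circ\mc C)(x\oplus z_i)\cdot\parity(z_i)$ agrees with $\parity(x)$ with probability $\geq 7/8$ in expectation over the $z_i$'s, and then \emph{fix} a good tuple $(z_1,\parity(z_1)),\dots,(z_t,\parity(z_t))$ by averaging and hardwire it as advice. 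The resulting circuit is still $\acqnc$: fanning $x$ out into $t=O(1)$ copies and XORing in the constants $z_i$ is a constant-depth quantum layer, and $\mathrm{Maj}_t$ of constantly many bits sits in $\ac$. So a hypothetical family with correlation $\Omega(1)$ would amplify to one with correlation $>2/3$, contradicting the bound you derived; this closes the gap and completes the proof along the lines you intended.
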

\begin{proof}[Proof sketch.]
	Consider the referee-free nonlocal channel $\mc E$ from the proof of Theorem \ref{thm:approx-deg}, with $n/k$ players responding with $k$ bits each.
	Defining $\mu$ and $\nu$ as the pushforwards of uniform distributions over even and odd bitstrings as before, it is true that $\mu$ and $\nu$ are $(n/k-1)$-wise indistinguishable when viewed as distributions on $\{0,1\}^n$.
	However, they may also be viewed as distributions on the hypergrid $[2^k]^{m}$ for $m=n/k$.
	
	With this perspective, $\mu$ and $\nu$ are $m-1$ indistinguishable.
	Repeating the proof of {\cite[Theorem 1.2]{bogdanov_bounded_2016}} over this larger alphabet, we recover exactly the notion of blockwise degree.
	The rest of the argument is as before.
\end{proof}

It is unclear to us whether Question \ref{q:blockwise} is easier than $\widetilde{\deg}[\ac]\overset{?}{\in}o(n)$.
Because $\ac$ is closed under permutations of input coordinates $[n]$, we can compare the two questions head-to-head as follows.
Let $\mc P_k$ be all the relabelings of $P$:
\[\mc P_k :=\Big\{\big\{\{\pi(1),\ldots, \pi(k)\},\{\pi(k+1),\ldots, \pi(2k)\},\ldots, \{\pi(n-k+1),\ldots, \pi(n)\}\big\}\Big\}_{\pi\in S_n}\,.\]
For any $P\in \mc P_k$, let $\mathrm{bdeg}_P[f]$ be the maximum number of blocks in $P$ overlapped by some monomial in $f$.
Then we have the following characterization, where $g$ ranges over real-valued multilinear polynomials on the hypercube as usual:
\begin{align*}
\widetilde{\deg}[\ac]< n/k &\iff \forall f\in \ac,  \exists g, \forall P\in \mc P_k, \mathrm{bdeg}_P[g]\leq n/k \text{ and } \|f-g\|_\infty\leq 1/3\\
\widetilde{\mathrm{bdeg}}_k[\ac]<n/k &\iff \forall f\in \ac, \forall P\in \mc P_k, \exists g, \mathrm{bdeg}_P[g]\leq n/k \text{ and } \|f-g\|_\infty\leq 1/3\,.
\end{align*}

\section{Towards a switching lemma for $\acqnc$}
\label{sec:switching-lemma}

Recall that our approach in Section \ref{sec:ancilla-free} fails because circuits with LMN-style Fourier decay are not suitably closed under precomposition by $\qnc$.
In fact this is true even under precomposition by $\nc$, and the proof of the LMN theorem elegantly avoids an induction assumption phrased in terms of Fourier decay.
Instead, the proof relies on a structural theorem about the effect of random restrictions on DNFs and CNFs---H\aa stad's celebrated switching lemma:

\begin{theorem*}[H\aa stad \cite{hastad_almost_1986}]
	Suppose $f$ is a width-$w$ DNF.
	Then for any $0\leq \delta \leq 1$,
	\[\Pr_{\rho\sim \mathbf{R}_\delta}[\mathrm{DT}_\mathrm{depth}(f\restr{\rho})>t]\leq (C\delta w)^t\,,\]
	where $C$ is a universal constant.
\end{theorem*}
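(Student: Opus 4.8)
The plan is to prove H\aa stad's switching lemma by the encoding (``compression'') argument, in the streamlined form due to Razborov. Write $f = T_1 \vee T_2 \vee \cdots$ with the terms in a fixed order and, for a restriction $\rho$, define the \emph{canonical decision tree} $T^\star(f\restr{\rho})$: scan the terms in order; let $T_i$ be the first one not already falsified by $\rho$; query, in index order, every variable of $T_i$ left free by $\rho$; if these answers satisfy $T_i$, halt and output $1$; otherwise extend $\rho$ by the answers and recurse on $T_{i+1},T_{i+2},\dots$ (halting with $0$ if no term survives). Since $T^\star(f\restr{\rho})$ is a legal decision tree for $f\restr{\rho}$, we have $\mathrm{DT}_\mathrm{depth}(f\restr{\rho}) \le \mathrm{depth}\big(T^\star(f\restr{\rho})\big)$, so it suffices to bound $\Pr_{\rho\sim\mathbf{R}_\delta}\big[T^\star(f\restr{\rho})$ has a root-to-leaf path of length $> t\big]$.

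Fix a target length $\ell > t$ and take a restriction $\rho$ whose canonical tree has a path of length $\ell$. The first $\ell$ queries along this path break into consecutive \emph{blocks} $B_1,\dots,B_r$, where $B_j$ is the set of variables queried while processing the $j$-th surviving term encountered, call it $T_{i_j}$. Then $|B_j| \le w$, the blocks are pairwise disjoint and all free in $\rho$, the path assigns to $B_1,\dots,B_{r-1}$ values that \emph{fail} to satisfy the respective $T_{i_j}$, and $B_r$ may or may not complete its term. I would map $\rho$ to a pair $(\rho',\sigma)$, where $\rho'$ is obtained from $\rho$ by fixing the (previously free) coordinates in $B_1\cup\dots\cup B_r$ to the \emph{unique} values satisfying their terms $T_{i_j}$ — so $\rho'$ has exactly $\ell$ fewer free coordinates than $\rho$ — and $\sigma$ records, for each $j$: one bit flagging whether $j=r$; the identity of $B_j$ among the $\le w$ literals of $T_{i_j}$ (at most $\binom{w}{|B_j|}$ choices); and the string $\beta_j\in\{0,1\}^{|B_j|}$ of values the path actually put on $B_j$.

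The crux is that $\rho \mapsto (\rho',\sigma)$ is injective. To decode, re-run the canonical construction on $f\restr{\rho'}$: now $T_{i_1}$ is satisfied, and the first block of $\sigma$ identifies which of its literals form $B_1$ and, via $\beta_1$, the path-values there; overwriting those values yields the intermediate restriction after block $1$, which determines the next surviving term $T_{i_2}$, and so on, recovering all of $B_1,\dots,B_r$; finally $\rho$ is obtained by re-starring those coordinates of $\rho'$. Recording $\beta_j$ is essential because the decoder must retrace the \emph{true} path, not the satisfying-assignment path, to locate the next block, and the last-block flag handles the possibly-incomplete $B_r$. Given injectivity, I would count: for fixed $\ell$, the number of bad $\rho$ with path length $\ell$ is at most $\sum_{b_1+\dots+b_r=\ell}(\#\rho')\cdot\prod_j\big(2\binom{w}{b_j}2^{b_j}\big)$, and a standard generating-function estimate bounds the encoding factor, summed over all compositions of $\ell$, by $(O(w))^\ell$. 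Converting counts to probabilities is immediate from the definition of $\mathbf{R}_\delta$: a restriction with $a$ stars has probability $\delta^a\big((1-\delta)/2\big)^{n-a}$, so $\Pr[\rho]=\Pr[\rho']\cdot\big(2\delta/(1-\delta)\big)^\ell$; hence $\Pr[\text{bad path of length }\ell]\le\big(2\delta/(1-\delta)\big)^\ell(O(w))^\ell\sum_{\rho'}\Pr[\rho']\le(O(\delta w))^\ell$ for $\delta\le\tfrac12$. Summing this geometric tail over $\ell>t$ (and noting the claimed bound is vacuous once $C\delta w\ge 1$) yields $(C\delta w)^t$ for a universal constant $C$.

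The step I expect to be the main obstacle is engineering the canonical tree and the encoding $\sigma$ so that the map is \emph{exactly} injective: one must verify that the decoder reconstructs the same term order, the same blocks, and the same path, which is what forces including the true path-values $\beta_j$ and the last-block marker, and one must check the edge cases (a term falsified mid-block, the final partial block, the all-$0$ leaf, the possibility that not all literals of a term are free). The generating-function bound on the number of length-$\ell$ encodings and the count-to-probability conversion are routine once the set-up is fixed, though the universal constant $C$ has to be tracked through both.
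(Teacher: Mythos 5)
The paper does not prove this statement at all: it is recalled verbatim as a classical theorem of H\aa stad, with a citation to \cite{hastad_almost_1986}, purely to motivate the (conjectural) switching lemma for $\acqnc$ discussed afterward. There is therefore no ``paper proof'' to compare against; the theorem is background.

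That said, your outline is a correct sketch of the standard Razborov compression proof: canonical decision tree, encode a long path by overwriting each block with the satisfying assignment of its term, record per-block the chosen literals, the true path values $\beta_j$, and a last-block flag, argue injectivity by replaying the decoder, then convert the encoding count to a probability via $\Pr[\rho]=\Pr[\rho']\cdot\bigl(2\delta/(1-\delta)\bigr)^\ell$. One small cleanup worth noting: rather than summing over all $\ell>t$ (where ``has a path of length exactly $\ell$'' are overlapping events and a union bound is needed), the usual and tighter route is to fix $\ell=t+1$ and encode the first $t+1$ queries of a canonically chosen long path, which gives $\Pr[\mathrm{depth}>t]\le(O(\delta w))^{t+1}\le(C\delta w)^t$ directly with no geometric tail. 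Your identification of the injectivity bookkeeping (true path values, last-block marker, partial final block, term falsified mid-block) as the delicate part is exactly right; those are the places where informal versions of this argument go wrong.
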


Here $\mathbf{R}_\delta$ is the distribution of random restrictions with star probability $\delta$ (see \emph{e.g.,} \cite[\S 4.3]{ODonnell2014-li} for more).
This theorem has received several proofs over time, but each rely on the well-controlled structure of random restrictions.
To naively repeat the switching lemma argument directly on $\acqnc$ would mean to track the passage of random restrictions through $\qnc$---a tall order given that $\qnc$ can destroy the independence and unbiasedness of random restrictions that switching arguments tend to rely on.

The situation may be slightly improved by instead considering a switching lemma for the model studied in Section \ref{sec:nlgs}.
Recalling that $f\circ \mc N$ is a randomized function, we may hope for a switching lemma of the following form:

\begin{quote}
\noindent\emph{An imagined switching lemma for nonlocal channels}.
Let $m\geq 0$ and $k,w,n\geq 1$ and suppose $f:\{0,1\}^{kn+m}\to\{0,1\}$ is a DNF of width $w$ and $\mc N$ is an $(n,k,m)$ nonlocal channel.
Then for each restriction $\rho$ there exists a distribution $\Gamma\!_\rho$ over decision trees such that $(f\circ\mc N)\restr{\rho}=\{T\}_{T\sim \Gamma\!_\rho}$ and
\[\Pr_{\rho\sim \mathbf{R}_\delta}\Pr_{T\sim \Gamma_\rho}[\mathrm{depth}(T)>t]\leq (C\delta w)^t\,.\]
\end{quote}
By Lemma \ref{lem:to-nlgs} such a switching lemma would be sufficient to show correlations bounds between $f \circ \qnc$ and parity for any DNF (or CNF) $f$, which in turn are direct prerequisites to proving Conjecture \ref{conj:main}.
While this imagined switching lemma is currently out of reach, we contend it presents a useful challenge to existing switching lemma proof techniques.
As a first step in this direction, we devote this section to a proof of a simpler but related structural result.

\medskip
\noindent\textbf{Theorem.} (Informal)
\emph{Any no-signaling channel $\mc N$ composed with a decision tree $\tau$ is equal to a probability distribution $\Gamma$ of decision trees with $\mathrm{depth}(\tau')\leq \mathrm{depth}(\tau)$ for all $\tau'\in\mathrm{Supp}(\Gamma)$.}
\medskip

Let us fix some notation.
For a finite set $X$ let $\mc M(X)$ denote the set of probability measures on $X$.
The set $\mc M(X)$ is convex, so for $\nu$ a probability measure on $\mc M(X)$ we may define the expected distribution
\begin{equation}
\label{def:avgdist}
	\E_{\mu \sim \nu}[\mu] := \left\{x\hspace{.5em} \text{w.p.}\hspace{.5em} \Pr_{\mu \sim \nu}\Pr_{z\sim \mu}[z = x]\right\}_{x\in X}
\end{equation}
Here we study Boolean channels, or functions of the form
\[\mc N:\{\pm 1\}^n \to \mc M(\{\pm1\}^{N}).\]
For a probability measure $\mu$ on the set of channels from $n$ to $N$ bits, we use $\E_{\mc N \sim \mu}\mc N$ to denote the channel defined pointwise as:
\begin{equation}
\label{eq:avg-channel}
\Big(\E_{\mc N \sim \mu}\mc N\Big)(x) := \E_{\mc N \sim \mu}[\mc N(x)].
\end{equation}
To be clear,  $\mc N(x)$ is a probability measure on $\{\pm 1\}^N$, so in the right-hand side of \eqref{eq:avg-channel} we are computing the expected distribution according to \eqref{def:avgdist}.
Also, for $T\subseteq[N]$ define the \emph{reduced channel}
\begin{align*}
    \mc N^T(x) := \Big\{y\hspace{.5em}\text{ w.p.} \sum_{\substack{z\in\{\pm 1\}^N\\z_T = y}} \Pr[\mc N(x) = z]\Big\}_{y\in\{\pm 1\}^{|T|}}.
\end{align*}

\begin{defn}[No-signaling channel]
\label{defn:no-signaling}
Consider a map $\mc N:\{\pm 1\}^n \to \mc M(\{\pm1\}^{N})$ and a `backwards lightcone' function $B:[N]\to[n]\cup\{\perp\}$. 
The pair $(\mc N, B)$ is a \emph{no-signaling channel} (NSC) if for all $S\subseteq[n]$, for all $x,x'\in\{\pm1\}^n$ with $x_S = x'_S$, we have $\mc N^{B^{-1}(S\cup{\perp})}(x) = \mc N^{B^{-1}(S\cup{\perp})}(x')$.
\end{defn}

\noindent That is, a channel is an NSC if for any collection of output indices $T$, $\mc N^{T}(x)$ is a function of $x_{B(T)\backslash\{\perp\}}$ only.
Note also $\mc N^{B^{-1}(\perp)}$ is oblivious to the value of $x$ entirely---the outputs $B^{-1}(\perp)$ could be called the referee outputs.

Recall that for a Boolean function $f:\{\pm 1\}^N\to \{\pm 1\}$, $f\circ \mc N$ denotes the channel
\[f\circ \mc N(x) = \big\{b\hspace{1em}\text{ w.p.}\hspace{.5em}\Pr_{y\sim \mc N(x)}[f(y) = b]\big\}_{b\in\{\pm 1\}}.\]
The restriction structure on NSCs interacts nicely with decision trees:

\begin{theorem}
\label{thm:tree-decomp}
Given $f:\{\pm1\}^{N}\to\{\pm 1\}$ and $\mc N: \{\pm 1\}^n\to\mc M(\{\pm 1\}^{N})$ an NSC, there exists a distribution $\Gamma$ over decision trees such that
\begin{enumerate}[label=\roman*.]
    \item For all $x$ the composition $f\circ\mc N (x) = \{\tau(x)\}_{\tau\sim \Gamma}$, so $\E[f\circ \mc N(x)] = \E_{\tau\sim \Gamma}[\tau(x)]$; and
    \item For all $\tau\in \mathrm{Supp}(\Gamma)$, $\mathrm{DT}_\text{depth}(\tau)\leq \mathrm{DT}_\text{depth}(f)$.
\end{enumerate}
\end{theorem}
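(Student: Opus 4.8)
The plan is to fix an optimal-depth decision tree $\tau$ for $f$ of depth $D = \mathrm{DT}_\mathrm{depth}(f)$ and then inductively build the distribution $\Gamma$ by "walking down" $\tau$ while consuming the no-signaling structure of $\mc N$. The key idea: each internal node of $\tau$ queries some output coordinate $j \in [N]$, which has a backwards lightcone $B(j) \in [n] \cup \{\perp\}$. When we query $j$, the no-signaling property tells us the conditional distribution of $y_j$ under $\mc N(x)$ depends only on $x_{B(j)}$ (a single input bit, or nothing at all if $B(j) = \perp$). So a query to output bit $j$ can be "simulated" by at most one query to input bit $B(j)$, followed by a coin flip whose bias is determined by that input bit.

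**First** I would set up the induction on the structure of $\tau$. Given the root query to output coordinate $j_0$ with backward lightcone $i_0 = B(j_0)$: if $i_0 = \perp$, the value of $y_{j_0}$ is a coin flip independent of $x$, so we randomly pick a child of $\tau$'s root according to that coin and recurse — crucially, this does not add to the depth. If $i_0 \in [n]$, we create a decision-tree node querying input bit $x_{i_0}$; conditioned on $x_{i_0} = b$, the output bit $y_{j_0}$ is a coin flip with some bias $p_b$, so we branch on $x_{i_0}$, then randomly pick the corresponding child of $\tau$'s root. The subtlety is that $\mc N$ must be "updated" after revealing $y_{j_0}$: I would formalize this via a posterior channel $\mc N'$ obtained by conditioning on the observed value of $y_{j_0}$, and argue $\mc N'$ (with an appropriately restricted backward-lightcone function on the remaining $N-1$ output coordinates, after removing $j_0$) is still an NSC. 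Then the subtree of $\tau$ below the chosen child, composed with $\mc N'$, is a smaller instance, and the inductive hypothesis yields a distribution over trees of depth at most $\mathrm{depth}(\tau\text{-subtree}) \leq D - 1$; prepending the query to $x_{i_0}$ gives depth at most $D$.

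**The depth bound** requires care: every query a sampled tree $\tau'$ makes corresponds to a query $\tau$ makes along the relevant root-to-leaf path, and the map from $\tau'$-queries to $\tau$-queries is injective along any path (each output-bit query of $\tau$ produces at most one input-bit query in $\tau'$). So $\mathrm{depth}(\tau') \leq \mathrm{depth}(\tau) = D$, giving (ii). For (i), I would check that for each fixed $x$, the process of sampling $\tau' \sim \Gamma$ and evaluating $\tau'(x)$ reproduces exactly the distribution $\tau(x)$ under $\mc N(x)$ — this is essentially a telescoping/chain-rule identity: the joint distribution of the sequence of output bits $(y_{j_0}, y_{j_1}, \dots)$ that $\tau$ would query, under $\mc N(x)$, factors as a product of the conditional coin-flip biases, which is exactly what $\Gamma$ samples.

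**The main obstacle** I anticipate is making the "update $\mc N$ to a posterior NSC" step fully rigorous — specifically, verifying that conditioning an NSC on the value of one output coordinate $y_j$ preserves the no-signaling property with respect to the remaining coordinates. This is where the no-signaling structure does real work: one must check that for any $S \subseteq [n]$, the posterior marginal on output coordinates with backward lightcone inside $S \cup \{\perp\}$ still depends only on $x_S$ — and the value $y_j$ we conditioned on is itself a function of $x_{B(j)}$, so if $B(j) \in S$ this is consistent, but one must handle the bookkeeping when $B(j) \notin S$ carefully (in that case $y_j$ is independent of $x_S$, so conditioning on it doesn't break anything). A secondary but routine obstacle is the purely notational burden of tracking the reduced/posterior channels and the shrinking coordinate sets through the induction. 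I would organize the proof so that the NSC-preservation lemma is isolated and proved once, then invoked cleanly in the induction.
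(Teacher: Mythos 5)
Your proposal takes essentially the same route as the paper: fix an optimal decision tree $\tau$ for $f$, walk down it replacing each output query $y_j$ by at most one input query to $x_{B(j)}$ (or none when $B(j)=\perp$), condition the channel on the observed $y_j$ and on $x_{B(j)}$, invoke an NSC-preservation lemma, and recurse. The paper organizes this with the device of ``hybrid decision trees'' plus three short claims, but the plan is the same.

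Two technical points are worth flagging. First, you describe branching on $x_{B(j_0)}$ and \emph{then} flipping a bias-$p_b$ coin; as written this is a randomized decision tree, not literally a distribution over deterministic trees. The paper instead first decomposes the univariate channel $\mc N^{\{i\}}(x_{B(i)})$ as a convex combination of the four deterministic single-bit maps (identity, negation, and the two constants, with at most three nonzero weights) and only then branches -- this is precisely the pre-sampling that converts a randomized tree into a distribution over deterministic ones and is what yields $\Gamma$. Second, your analysis of the NSC-preservation step is not quite right in the case $B(j_0)\notin S$: conditioning on $y_{j_0}$ alone can make the marginal on $B^{-1}(S\cup\{\perp\})$ depend on $x_{B(j_0)}$, so it is not true that ``conditioning on it doesn't break anything.'' The paper's Corollary~\ref{cor:restrictions-are-fine} resolves this by conditioning on $x_{B(j_0)}$ simultaneously and by reassigning the lightcone $B'(j)=\perp$ for \emph{every} $j\in B^{-1}(B(j_0))$, not merely deleting $j_0$ from the domain. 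Since your construction does fix $x_{B(j_0)}$ at that node, the fix is cosmetic, but the lightcone bookkeeping should be stated as in the paper.
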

Recall that $f\circ \mc N$ is an $\mc M(\{\pm1\})$-valued function on the hypercube, so $x\mapsto\E[f\circ\mc N(x)]$ is a $[-1,1]$-valued function on the hypercube, and accordingly has a multilinear Fourier expansion
\[\E[f\circ\mc N]=\sum_{S\subset[n]}a_S\bigchi_S\qquad\text{with}\qquad a_S:=\E_x\big[\E[(f\circ\mc N)(x)]\cdot\bigchi_S(x)]\]
We pause to note the related fact that in terms of the expected output $\E[f\circ\mc N]$, the degree of any function $f$ does not increase under composition with an NSC: $\deg(f)\geq\deg( \E[f\circ\mc N])$.
This claim has a very simple direct proof\footnote{Consider the Fourier expansion $f=\sum_{S\subseteq[N]}\widehat{f}(S)\bigchi_S$.
Then $\E[(f\circ\mc N)(x)] = \E\big[\sum_{S\subseteq[N]}\widehat{f}(S)\bigchi_S\circ\mc N(x)\big] = \sum_S\widehat{f}(S)\E[\bigchi_S\circ\mc N(x)] = \sum_S\widehat{f}(S)\E[\bigchi_S\circ\mc N^S(x)]$, a linear combination of functions of at most $|S|$ variables each for $|S|\leq\deg(f)$.} and we emphasize that it is not equivalent to Theorem \ref{thm:tree-decomp}.
For example, there are Boolean functions $g$ with $\deg(g) = n^{2/3}$ but $\mathrm{DT}_\text{depth}(g) = n$ (see Example 3 in \cite{buhrman_complexity_2002}).
One could imagine a Boolean function $h$ with $\deg(h)\approx\mathrm{DT}_\text{depth}(h)\in o(n)$ but where $\E[h\circ \mc N]$ is ``$g$-like'': any decision tree decomposition of $\E[h\circ \mc N]$ contains a tree of depth $n$ despite having $\deg(\E[h\circ\mc N])\in o(n)$.
Theorem \ref{thm:tree-decomp} says such an $h, \mc N$ pair does not exist; precomposition by an NSC cannot increase the decision tree complexity of a function.

The proof of Theorem \ref{thm:tree-decomp} requiries some bookkeeping.
The idea is to begin with $\tau$'s root vertex variable $y_i$ and locally decompose the univariate channel ${\mc N}^{\{i\}}(x)\mapstochar\mathrel{\mspace{1.95mu}}\leadsto y_i$ into a distribution of deterministic functions $\{y_{i,\omega}(x_i)\}_{\omega\sim\mu}$.
This decomposition of the root vertex induces a probabilistic decomposition $\{\tau'_\omega\circ \mc N'_\omega\}_{\omega\sim \mu}$ of the entire hybrid computation where the root variable $y_i$ in $\tau'$ has been replaced with an $x_{B(i)}$ and the left 
and right subtrees of $\tau$ become compositions not with $\mc N$, but with conditional versions of $\mc N$ where $x_{B(i)}$ and $y_i$ have been fixed to certain values.
This conditioning preserves the NSC-ness of the new $\mc N'$s, and the decomposition recurses down the tree.

We now introduce a notion of conditioning.
For any $n$-to-$N$ bit Boolean channel $\mc N$, $x\in \{\pm 1\}^n$, $J\subseteq[N]$ and $Y\in\mathrm{Supp}(\mc N^J(x))$ define the \emph{conditional channel} as
\[\mc N(x\mid y_J = Y) := \big\{y\hspace{.5em} \text{ w.p. } \Pr[\mc N(x) = y \mid y_J = Y]\big\}_{y\in\{\pm 1\}^N},\]
and for $T\subseteq[N]$ the \emph{reduced conditional channel}
\begin{align*}
    \mc N^T(x \mid y_J = Y) := \Big\{y\hspace{.5em}\text{ w.p.} \sum_{\substack{z\in\{\pm 1\}^N\\z_T = y}} \Pr[\mc N(x) = z\mid z_J = Y]\Big\}_{y\in\{\pm 1\}^{|T|}}.
\end{align*}

Note that $T$-reduced conditional no-signaling channels can depend on inputs outside $B(T)$.
Consider for example the $n$-to-$n$-bit NSC
\[\mc G(x) = \begin{cases}
\,\mc U\{\text{even strings}\} &x\text{ even}\\
\,\mc U\{\text{odd strings}\} &x\text{ odd}.
\end{cases}\]
Now $\mc G^{\{i\}}(x)$ is identically a Rademacher random variable (oblivious to $x$ entirely), but
\[\mc G^{\{i\}}(x\mid y_{[n]\backslash{i}} = 00\cdots 0) = \Big\{\prod_jx_j \quad \text{w.p. } 1\Big\},\]
the parity of all $n$ bits of $x$.
All the same, some structure remains after conditioning:

\begin{prop}
For $T,J\subseteq[N]$, let $x,x'\in\{\pm 1\}^n$ be such that $x_{B(J\cup T)} = x'_{B(J\cup T)}$.
Then for all $Y\in\mathrm{Supp}(\mc N^J(x))$,
\[\mc N^T(x\mid y_J = Y) \hspace{.5em} = \hspace{.5em} \mc N^T(x'\mid y_J = Y).\]
\end{prop}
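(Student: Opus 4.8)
The plan is to reduce the statement to the defining no-signaling property of $\mc N$ via Bayes' rule. First I would recall that, by definition,
\[
\mc N^T(x\mid y_J = Y)(y) = \frac{\Pr[\mc N^{J\cup T}(x) = (Y,y)]}{\Pr[\mc N^{J}(x) = Y]},
\]
where I am being slightly informal about how the coordinates of $J$ and $T$ (which need not be disjoint) are glued together; the clean way is to write everything in terms of the single reduced channel on the index set $J\cup T$, i.e.\ to observe that both the numerator and denominator are determined by the distribution $\mc N^{J\cup T}(x)$ (the denominator is just a further marginal of it). So the conditional reduced channel $\mc N^T(x\mid y_J=Y)$ is a function only of $\mc N^{J\cup T}(x)$.

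The key step is then to invoke the NSC property from Definition \ref{defn:no-signaling}: let $S := B(J\cup T)\setminus\{\perp\}$, so that $J\cup T \subseteq B^{-1}(S\cup\{\perp\})$. The hypothesis $x_{B(J\cup T)} = x'_{B(J\cup T)}$ gives in particular $x_S = x'_S$, hence by the NSC property $\mc N^{B^{-1}(S\cup\{\perp\})}(x) = \mc N^{B^{-1}(S\cup\{\perp\})}(x')$. Taking the further marginal onto the coordinates $J\cup T$ (which is well-defined since $J\cup T\subseteq B^{-1}(S\cup\{\perp\})$) yields $\mc N^{J\cup T}(x) = \mc N^{J\cup T}(x')$. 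Combining with the previous paragraph, $\mc N^T(x\mid y_J=Y)$ and $\mc N^T(x'\mid y_J=Y)$ are the same function of the same distribution, hence equal. One should also note that the condition $Y\in\mathrm{Supp}(\mc N^J(x))$ is equivalent to $Y\in\mathrm{Supp}(\mc N^J(x'))$ under the hypothesis (since $J\subseteq B^{-1}(S\cup\{\perp\})$ as well), so the conditioning is well-defined on both sides simultaneously.

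I expect the only real subtlety — the ``main obstacle,'' though it is minor — to be the bookkeeping around the fact that $J$ and $T$ may overlap and that ``$y_J = Y$'' plus ``$y_T = y$'' must be a consistent event. The cleanest fix is to phrase the whole argument in terms of the reduced channel on $J\cup T$ from the start: define the joint distribution $\mc N^{J\cup T}(x)$, observe that conditioning on the $J$-coordinates and reading off the $T$-coordinates is a deterministic operation on that distribution, and then everything reduces to the single equality $\mc N^{J\cup T}(x) = \mc N^{J\cup T}(x')$, which is exactly what the NSC property delivers once we identify $B(J\cup T)$ as the relevant backwards lightcone. No random-restriction or decision-tree machinery is needed here; this proposition is purely a closure property of no-signaling channels under conditioning, and will be used as a lemma feeding into the proof of Theorem \ref{thm:tree-decomp}.
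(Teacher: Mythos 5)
Your proposal is correct and follows essentially the same route as the paper: both arguments reduce the claim to the single equality $\mc N^{J\cup T}(x) = \mc N^{J\cup T}(x')$ (obtained from the no-signaling definition with $S = B(J\cup T)\setminus\{\perp\}$ and the observation $J\cup T\subseteq B^{-1}(S\cup\{\perp\})$), and then note that the conditional reduced channel $\mc N^T(\,\cdot\mid y_J = Y)$ is a deterministic function of the joint distribution on $J\cup T$ (conditioning, then marginalizing via $\mc N^U = (\mc N^V)^U$ for $U\subseteq V$). Your version is slightly more explicit about unpacking the no-signaling definition and about well-definedness of the support condition, but the key lemma and structure are identical.
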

\begin{proof}
Let $x,x'$ be as in the proposition statement.
We have from the definition of NSCs that $\mc N^{J\cup T}(x) = \mc N^{J\cup T}(x').$
Certainly then
$\mc N^{J\cup T}(x\mid y_J = Y) = \mc N^{J\cup T}(x'\mid y_J = Y)$ (we have taken the marginal of two equal distributions).
The conclusion then follows from noticing that for any $U\subseteq V$,
$\mc N^U = (\mc N^V)^U$.
\end{proof}
This proposition says $\mc N^{T}(x\mid y_J = Y)$ is a function of $x_{B(J\cup T)}$ only.
Thus if we fix variables $x_{B(J)}$ we recover a smaller NSC:
\begin{corollary}
\label{cor:restrictions-are-fine}
Consider an $n$-to-$N$ NSC $(\mc N, B)$, an $i\in [N]$, and $X,Y\in \{\pm 1\}$.
If $B(i) = \perp$ let $\mc N'$ be the $n$-to-$(N-1)$ NSC
\[\mc N' = \mc N^{[N]\backslash \{i\}}(x \mid y_i = Y)\]
and otherwise let $\mc N'$ be the $(n-1)$-to-$(N-1)$ NSC
\[\mc N' = \mc N^{[N]\backslash \{i\}}(x_{\{B(i)\}^c} \mid x_{B(i)} = X, y_i = Y).\]
Define a new lightcone function $B'$ from $B$ as follows.
Put $B(j) = \perp$ for all $j\in B^{-1}(B(i))$ and then remove $i$ from the domain of $B$.
Then $(\mc N', B')$ is an NSC.
\end{corollary}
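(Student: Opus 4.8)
The plan is to verify directly that $(\mc N',B')$ satisfies the no-signaling condition of Definition \ref{defn:no-signaling}, reducing each case to the Proposition immediately above (applied with $J=\{i\}$) together with the elementary ``marginal-of-marginal'' identity $\mc N^U=(\mc N^V)^U$ for $U\subseteq V$ and its conditional analogue $\big(\mc N^V(x\mid y_i=Y)\big)^U=\mc N^U(x\mid y_i=Y)$ for $U\subseteq V\subseteq[N]\backslash\{i\}$. So I would fix a set $S$ of input coordinates of $\mc N'$ and a pair $x,x'$ of inputs agreeing on $S$, set $T':=B'^{-1}(S\cup\{\perp\})$, rewrite $\mc N'^{T'}(x)$ as a reduced conditional channel of $\mc N$ on $T'$, and check that the hypothesis of the Proposition is met so that $\mc N'^{T'}(x)=\mc N'^{T'}(x')$.

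In the case $B(i)=\perp$ nothing happens to the input register: $B'$ is just $B$ with $i$ removed from its domain, and $T'=B^{-1}(S\cup\{\perp\})\backslash\{i\}$. Since $B(i)=\perp$, the univariate channel $\mc N^{\{i\}}(\cdot)$ is input-independent (the NSC condition with $S=\emptyset$), so the outcome $Y$ is legal at both $x$ and $x'$, and the conditional marginal-of-marginal identity gives $\mc N'^{T'}(x)=\mc N^{T'}(x\mid y_i=Y)$, and likewise for $x'$. Because $B(\{i\}\cup T')\backslash\{\perp\}=B(T')\backslash\{\perp\}\subseteq S$, we have $x_{B(\{i\}\cup T')}=x'_{B(\{i\}\cup T')}$, so the Proposition yields $\mc N^{T'}(x\mid y_i=Y)=\mc N^{T'}(x'\mid y_i=Y)$, which is exactly the no-signaling condition for $(\mc N',B')$.

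For the case $p:=B(i)\neq\perp$, the input register shrinks to $\{p\}^c$ (the bit $x_p$ is hardcoded to $X$), every output at party $p$ is relabeled to $\perp$, and $i$ is dropped, so $p$ leaves the image of $B'$ and $T':=B'^{-1}(S\cup\{\perp\})$ pulls back to $B^{-1}\big(S\cup\{p,\perp\}\big)\backslash\{i\}$. Given $x,x'\in\{\pm1\}^{\{p\}^c}$ with $x_S=x'_S$, I would extend them to $\bar x,\bar x'\in\{\pm1\}^n$ by setting the $p$-th bit to $X$; then $\bar x$ and $\bar x'$ agree on $S\cup\{p\}\supseteq B(\{i\}\cup T')\backslash\{\perp\}$, where the inclusion of $p$ is precisely what the hardcoding buys us. Since $B(i)=p$ and $\bar x_p=\bar x'_p$, the supports of $\mc N^{\{i\}}$ at $\bar x$ and $\bar x'$ coincide, and as before $\mc N'^{T'}(x)=\mc N^{T'}(\bar x\mid y_i=Y)$; the Proposition then closes the case. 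The whole argument is routine bookkeeping; the only point requiring care — and the only place it could go wrong — is the index-set identity $T'=B^{-1}(S\cup\{p,\perp\})\backslash\{i\}$ together with the observation that without hardcoding $x_p=X$ the Proposition's hypothesis would genuinely fail, in agreement with the $\mc G$ example above showing that plain conditioning need not preserve the no-signaling property.
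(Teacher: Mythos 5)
Your proposal is correct and is exactly the intended derivation: the paper states the corollary without proof, treating it as immediate from the preceding Proposition, and your argument supplies the bookkeeping (identifying $T'=B'^{-1}(S\cup\{\perp\})$ as $B^{-1}(S\cup\{\perp\})\backslash\{i\}$ or $B^{-1}(S\cup\{B(i),\perp\})\backslash\{i\}$, invoking the conditional marginal-of-marginal identity, and applying the Proposition with $J=\{i\}$). The only thing I'd add for completeness is that in the $B(i)\neq\perp$ case the conditional channel is well-defined uniformly over all $\bar x$ with $\bar x_{B(i)}=X$ because $\mc N^{\{i\}}$ depends only on $x_{B(i)}$, which you do note implicitly.
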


Finally we introduce an object used internally in the proof of Theorem \ref{thm:tree-decomp}.

\begin{defn}[Hybrid Decision Tree]
A \emph{hybrid decision tree} $\mc T$ on $n$ variables with $\ell$ leaves consists of the data $(\tau, \mc G_1,\ldots, \mc G_\ell)$, where
\begin{enumerate}[label=\roman*.]
\item The first argument $\tau$ is a rooted binary tree with $\ell$ leaves labeled as follows. Each internal node is assigned $x_i$ for some $i\in [n]$, the edge to its left child is labeled $1$, and the edge to its right child is labeled $-1$.
\item Each leaf $\iota$ of $\tau$ is associated with an $n$-to-1 channel $\mc G_\iota:\{\pm 1\}^n\to \mc M\{\pm 1\}$.
\end{enumerate}
A hybrid tree defines a channel $\mc T_\tau(\mc G_1,\ldots,\mc G_\ell):\{\pm 1\}^n\to \mc M\{\pm 1\}$ as follows.
Computation on input $x\in \{\pm 1\}^n$ proceeds just as with standard decision trees until a leaf $\iota$ is reached, at which point the distribution $\mc G_\iota(x)$ is returned.
\end{defn}

Theorem \ref{thm:tree-decomp} follows from these three claims.
Proofs of the first two are immediate from the definitions.

\begin{claim}
\label{clm: expectation-unwrap}
For any hybrid decision tree $\mc T$,
\[\mc T\big(\mc G_1,\ldots,\mc G_{\iota-1},\E_{\omega\sim \mu}[\mc G_\omega], G_{\iota+1},\ldots, \mc G_\ell\big) = \E_{\omega\sim \mu}\big[\mc T(\mc G_1,\ldots,\mc G_{\iota-1},\mc G_\omega, \mc G_{\iota+1},\ldots, \mc G_\ell)\big]\]
\end{claim}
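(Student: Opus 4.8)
The plan is to verify the identity pointwise on the hypercube and reduce everything to the definition of the expected distribution in \eqref{def:avgdist}. Fix an arbitrary input $x\in\{\pm 1\}^n$. Running the deterministic part of $\mc T$---that is, reading the internal nodes of $\tau$ and following the edge labeled by the corresponding bit of $x$---does not reference any of the leaf channels, so it leads to a single well-defined leaf $\iota^\ast=\iota^\ast(x)$, and by the definition of a hybrid decision tree the distribution computed on input $x$ is exactly the leaf channel evaluated there, $\mc G_{\iota^\ast}(x)$.

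Now split into two cases. If $\iota^\ast\neq\iota$, then the slot being averaged is not the one selected by $x$, so both sides evaluate at $x$ to the same distribution $\mc G_{\iota^\ast}(x)$: the left side because $\mc T$ ignores its $\iota$-th argument on this input, and the right side because every hybrid tree $\mc T(\mc G_1,\ldots,\mc G_\omega,\ldots,\mc G_\ell)$ appearing in the expectation agrees with it at $x$, hence so does the mixture over $\omega\sim\mu$. If $\iota^\ast=\iota$, then the left side evaluates at $x$ to $\big(\E_{\omega\sim\mu}[\mc G_\omega]\big)(x)$, which by the pointwise definition of the expected channel \eqref{eq:avg-channel} equals $\E_{\omega\sim\mu}[\mc G_\omega(x)]$; and the right side evaluates at $x$ to $\E_{\omega\sim\mu}\big[\mc T(\mc G_1,\ldots,\mc G_\omega,\ldots,\mc G_\ell)(x)\big]=\E_{\omega\sim\mu}[\mc G_\omega(x)]$ as well, again because $\mc T$'s output on input $x$ is the channel sitting at the selected leaf. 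Unwinding \eqref{def:avgdist} once more, in either case the common value assigns to $b\in\{\pm1\}$ the probability $\Pr_{\omega\sim\mu}\Pr_{z\sim\mc G_\omega(x)}[z=b]$. Since the two sides agree at every $x$, they are equal as channels.

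I expect no genuine obstacle here; the only thing to be careful about is bookkeeping: that the measure $\mu$ is the same on both sides, and that both the expected-distribution operator of \eqref{def:avgdist} and the leaf-channel assignment are affine, so that pushing $\E_{\omega\sim\mu}$ through the fixed deterministic branching of $\tau$ is legitimate. This is precisely the sense in which the claim is immediate from the definitions, and it is the linchpin that will let the recursion in the proof of Theorem \ref{thm:tree-decomp} replace one leaf channel by a mixture of deterministic refinements without altering the channel computed by the hybrid tree.
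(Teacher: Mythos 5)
Your proof is correct, and it is exactly the ``immediate from the definitions'' argument the paper leaves to the reader: fix an input $x$, note the deterministic branching of $\tau$ selects a unique leaf $\iota^\ast(x)$ without consulting any leaf channel, and then check agreement at $x$ in the two cases $\iota^\ast(x)\neq\iota$ (both sides return $\mc G_{\iota^\ast}(x)$) and $\iota^\ast(x)=\iota$ (both sides return $\E_{\omega\sim\mu}[\mc G_\omega(x)]$ via \eqref{eq:avg-channel} and \eqref{def:avgdist}). Since the paper gives no written proof, there is nothing to contrast with; your argument matches the intended one.
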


\begin{claim}
\label{clm: hybrid-tree-comp}
For any hybrid decision trees $\mc T_\tau(\mc G_1,\ldots,\mc G_\ell)$ and $\mc T_{\tau'}(\mc G_{\iota1},\ldots, \mc G_{\iota \ell'})$,
\begin{align*}
&\mc T_\tau\big(\mc G_1,\ldots,\mc G_{\iota-1},\mc T_{\tau'}(\mc G_{\iota 1},\ldots, \mc G_{\iota \ell'}), G_{\iota+1},\ldots, \mc G_\ell\big)\\
&\hspace{8em} = \mc T_{\tau\circ_\iota \tau'}(\mc G_1,\ldots,\mc G_{\iota-1},\mc G_{\iota1},\ldots,\mc G_{\iota \ell'}, \mc G_{\iota+1},\ldots, \mc G_\ell),
\end{align*}
where $\tau\circ_\iota\tau'$ is $\tau$ with the $\iota$\textsuperscript{th} leaf replaced with $\tau'$.
\end{claim}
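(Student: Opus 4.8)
The plan is to verify the identity pointwise: two channels are equal precisely when they return the same output measure on every input, so it suffices to fix an arbitrary $x\in\{\pm1\}^n$ and check that the left- and right-hand sides produce the same distribution on $\{\pm1\}$ at $x$. So fix $x$ and trace the computation of the left-hand side $\mc T_\tau\big(\mc G_1,\ldots,\mc G_{\iota-1},\mc T_{\tau'}(\mc G_{\iota 1},\ldots,\mc G_{\iota\ell'}),\mc G_{\iota+1},\ldots,\mc G_\ell\big)$. By the hybrid-tree semantics, computation walks down $\tau$ from the root, turning toward the left child when the queried bit of $x$ is $1$ and toward the right child when it is $-1$, until a leaf is reached; call that path $p(x)$ and that leaf $\lambda(x)$.

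First I would handle the case $\lambda(x)\ne\iota$, say $\lambda(x)=j$. Then the left-hand side outputs $\mc G_j(x)$. Now observe that forming $\tau\circ_\iota\tau'$ — replacing leaf $\iota$ of $\tau$ by $\tau'$ — alters neither the internal nodes on $p(x)$, nor their variable labels, nor their left/right edge labels, nor the leaf $j$; and under the natural leaf indexing of $\tau\circ_\iota\tau'$ this leaf $j$ is assigned the channel $\mc G_j$ on the claim's right-hand side. Hence the right-hand side also outputs $\mc G_j(x)$, and the two agree in this case.

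Next is the case $\lambda(x)=\iota$. On the left, the computation reaches leaf $\iota$ and returns $\mc T_{\tau'}(\mc G_{\iota 1},\ldots,\mc G_{\iota\ell'})(x)$; unfolding the hybrid-tree semantics of $\tau'$ on the \emph{same} input $x$, this walks down $\tau'$ from its root and returns $\mc G_{\iota k}(x)$, where $\iota k$ is the leaf of $\tau'$ reached by $x$. On the right, the computation in $\tau\circ_\iota\tau'$ first follows exactly the path $p(x)$ through the copy of $\tau$ (again unaffected by the graft), arriving at the root of the inserted copy of $\tau'$; since the internal nodes of that inserted copy carry the same variable labels and read the same global input $x$, the walk then continues down $\tau'$ along the identical route and reaches the same leaf $\iota k$, whose associated channel on the claim's right-hand side is again $\mc G_{\iota k}$. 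So the right-hand side also returns $\mc G_{\iota k}(x)$. Since the outputs coincide for every $x$ in both cases, the channels are equal.

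The only points requiring care — and they are pure bookkeeping — are (i) that the grafting operation $\circ_\iota$ leaves the portion of $\tau$ outside leaf $\iota$ structurally untouched (nodes, variable labels, edge labels), so that the prefix of any computation that does not end at $\iota$ is unchanged and the suffix past the graft point is literally a run of $\tau'$; and (ii) that the leaf set of $\tau\circ_\iota\tau'$ is exactly $\{1,\ldots,\iota-1\}\cup\{\iota1,\ldots,\iota\ell'\}\cup\{\iota+1,\ldots,\ell\}$ in that order, matching the channel list on the right-hand side of the claim. I do not anticipate any genuine obstacle here; this is the mechanical composition lemma underlying the recursion, and the substantive work of the section lies in Theorem \ref{thm:tree-decomp} and the restriction structure of no-signaling channels.
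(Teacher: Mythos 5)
Your proof is correct and matches the paper's treatment: the paper simply notes this claim is ``immediate from the definitions,'' and your pointwise case analysis (leaf reached is, or is not, the grafted leaf $\iota$) is exactly the mechanical unfolding of the hybrid-tree semantics that this remark refers to.
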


\begin{claim}
\label{clm: vertex-decomp}
Suppose $\tau$ is a decision tree and $(\mc N, B)$ is an NSC.
Then either:
\begin{enumerate}[label=\roman*., font=\itshape]
    \item $\tau\circ\mc N = \E_{\omega\sim\mu}[\tau_\omega\circ \mc N_\omega]$ where $\mathrm{depth}(\tau_\omega)\leq \text{depth}(\tau)-1$, $|\mathrm{Supp}(\mu)|\leq 2$, and each $\mc N_\omega$ is an NSC, or
    \item $\tau\circ\mc N = \E_{\omega\sim \mu}\Big[\mc T_{\tau^*}\big(\tau_{\omega_L}\circ \mc N_{\omega_L}, \tau_{\omega_R}\circ \mc N_{\omega_R}\big)\Big]$,
    where $|\mathrm{Supp}(\mu)|\leq 3$, $\tau^*$ has one internal node, $\mathrm{depth}(\tau_{\omega_L}),\mathrm{depth}(\tau_{\omega_R})\leq \mathrm{depth}(\tau)-1$, and each $\mc N_{\omega_L}, \mc N_{\omega_R}$ is an NSC; or
    \item (Base case) $\tau\circ \mc N(x) = \{b \,\,\text{ w.p. } 1\}$ for all $x$, for some fixed $b\in \{\pm 1\}$.
\end{enumerate}
\end{claim}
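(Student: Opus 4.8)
The plan is to case on the root of the decision tree $\tau$. If $\mathrm{depth}(\tau)=0$ then $\tau\circ\mc N$ is the constant leaf label and we are in the base case (iii); so assume $\mathrm{depth}(\tau)\geq 1$ and (taking $\tau$ reduced) let $y_i$ be the output bit queried at the root, with left subtree $\tau_L$ followed when $y_i=1$ and right subtree $\tau_R$ followed when $y_i=-1$, each of depth at most $\mathrm{depth}(\tau)-1$ and neither querying $y_i$. The central object is the single-output reduced channel $\mc N^{\{i\}}$: by the no-signalling property it depends on the input only through the bit $x_{B(i)}$, and it is input-oblivious when $B(i)=\perp$. The strategy is to decompose $\mc N^{\{i\}}$, viewed as a channel on the \emph{single} bit $x_{B(i)}$, into a small convex combination of \emph{deterministic} one-bit maps $g_\omega$, and on each branch to condition the rest of $\mc N$ on $y_i=g_\omega(x_{B(i)})$ (together with $x_{B(i)}$ itself when $B(i)\neq\perp$), invoking Corollary \ref{cor:restrictions-are-fine} to see the residual channels are still NSCs.

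When $B(i)=\perp$, the channel $\mc N^{\{i\}}$ is a fixed Bernoulli, $\Pr[\mc N^{\{i\}}(x)=1]=p$. Then $\tau\circ\mc N$ is the mixture that with probability $p$ applies $\tau_L$ to $\mc N^{[N]\setminus\{i\}}(\,\cdot\mid y_i=1)$ and with probability $1-p$ applies $\tau_R$ to $\mc N^{[N]\setminus\{i\}}(\,\cdot\mid y_i=-1)$; that this reproduces $\tau\circ\mc N$ pointwise is a one-line computation conditioning on the value of $y_i$ and using that $\mc N^{\{i\}}(x)$ carries no $x$-dependence. Each conditioned channel is an NSC by the $\perp$-branch of Corollary \ref{cor:restrictions-are-fine}, so this is case (i) with $|\mathrm{Supp}(\mu)|\leq 2$.

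When $B(i)=j\neq\perp$, a channel from the single bit $x_j$ to the single bit $y_i$ is specified by the point $(p_1,p_{-1})\in[0,1]^2$, $p_b:=\Pr[\mc N^{\{i\}}(x)=1\mid x_j=b]$, whose four corners are exactly the four deterministic maps $\{\pm1\}\to\{\pm1\}$ (two constants, identity, negation). Since $[0,1]^2$ is two-dimensional, triangulating it along a diagonal writes $(p_1,p_{-1})$ as a convex combination of at most three corners; let $g_\omega$ be the corresponding deterministic maps and $\mu(\omega)$ their weights, $|\mathrm{Supp}(\mu)|\leq 3$. Let $\tau^*$ be the one-internal-node tree querying $x_j$; on its branch $x_j=b$ we hardwire $y_i:=g_\omega(b)$ (constant once $x_j$ is fixed), route to $\tau_L$ or $\tau_R$ according to this constant, and compose with $\mc N^{[N]\setminus\{i\}}(x_{\{j\}^c}\mid x_j=b,\,y_i=g_\omega(b))$, an NSC by the non-$\perp$ branch of Corollary \ref{cor:restrictions-are-fine}. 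Summing over $\omega$ yields $\tau\circ\mc N=\E_{\omega\sim\mu}\big[\mc T_{\tau^*}(\tau_{\omega_L}\circ\mc N_{\omega_L},\,\tau_{\omega_R}\circ\mc N_{\omega_R})\big]$, which is case (ii); the pointwise check conditions on the pair $(x_j,y_i)$ and uses the defining property of the Carathéodory decomposition, $\sum_{\omega:\,g_\omega(b)=a}\mu(\omega)=\Pr[\mc N^{\{i\}}(x)=a\mid x_j=b]$, to match the weight on each outcome pattern, the no-signalling property ensuring this weight is a function of $x_j$ alone.

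I expect the main difficulty to be bookkeeping rather than a conceptual obstacle. The one real pitfall is that conditioning $\mc N$ on an output bit $y_i$ can create signalling in general (as the channel $\mc G$ above shows), so it is essential that in the $B(i)\neq\perp$ case the fixing of $x_j$ accompanies the fixing of $y_i$ — this is exactly the configuration in which Corollary \ref{cor:restrictions-are-fine} applies and keeps the residual channels no-signalling. One must also handle degenerate weights ($p_b\in\{0,1\}$): there the Carathéodory decomposition may place zero mass on some branch, and one checks that every surviving branch hardwires $y_i$ to a value that actually lies in the support of $\mc N^{\{i\}}$ at the relevant input, so all conditional channels are well-defined. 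Neither point requires anything beyond the Proposition and Corollary \ref{cor:restrictions-are-fine} preceding the claim.
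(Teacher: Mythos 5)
Your proposal is correct and follows essentially the same approach as the paper's proof: case on whether the root label $y_i$ has $B(i)=\perp$ or not, use the no-signalling property to treat $\mc N^{\{i\}}$ as a fixed Bernoulli (case i) or a univariate channel on $x_{B(i)}$ (case ii), decompose the latter as a convex combination of at most three deterministic $1$-bit maps (your corners-of-$[0,1]^2$/Carath\'eodory picture is exactly the paper's explicit matrix decomposition with ``only three of $a_{(L,R)}$ nonzero''), and apply Corollary \ref{cor:restrictions-are-fine} to keep the conditioned channels no-signalling. Your remark about degenerate weights is a small point the paper leaves implicit, but it is handled automatically once the decomposition is taken on the minimal face containing $(p_1,p_{-1})$.
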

\begin{proof}
If $\tau$ is the trivial decision tree with no internal nodes, clearly we satisfy case \emph{iii}.
Otherwise, let $y_i$ be the variable at the root of $\tau$.
There are two cases depending on the value of $B(i)$.

\emph{Case i), $B(i)=\,\perp$.}
Observe that $\mc N^{\{i\}}(x)$ is the same distribution $\mu$ over $\{\pm 1\}$, independent of $x$.
For $\omega \in \{\pm 1\}$ let $\tau_\omega$ be the subtree of $\tau$ attached to the $\omega$-valued edge of $y_i$.
Put $\mc N_\omega = \mc N^{T\backslash\{i\}}(x\mid y_i = \omega)$.
Then we have for $z\in\{\pm 1\}$,
\begin{align*}
    \Pr[\tau \circ \mc N(x) = z] &= \textstyle\sum_{\omega\in\{\pm1\}}\Pr[\tau\circ\mc N(x)=z\mid D^i(x) = \omega]\Pr[\mc N^i(x) = \omega]\\
    &= \textstyle\sum_{\omega\in\{\pm1\}}\Pr[\tau_\omega\circ\mc N(x\mid y_i = \omega) = z]\Pr[\mc N^i(x) = \omega]\\
    &= \textstyle\sum_{\omega\in\{\pm1\}}\Pr[\tau_\omega\circ\mc N_\omega(x) = z]\Pr[\mc N^i(x) = \omega]\\
    &= \Pr\big[\E_{\omega \sim \mu}[\tau_\omega\circ\mc N_\omega](x) = z\big]
\end{align*}
as desired.
Clearly $\tau_\omega$ is strictly shorter than $\tau$, and $\mc N_\omega$ is an NSC by Corollary \ref{cor:restrictions-are-fine}.

\emph{Case ii), $B(i)\neq \,\perp$.}
Let $\tau^*$ be the one-vertex tree consisting of the root vertex of $\tau$ relabeled with $x_{B(i)}$ and let $\tau_{1}$, $\tau_{-1}$ be the left and right subtrees of $\tau$ respectively.
Observe that $\mc N^{\{i\}}(x)=\mc N^{\{i\}}(x_{B(i)})$ is a univariate channel.
Hence it can be decomposed as a convex combination
\[\mc N^{\{i\}}(x_{B(i)}) = a_{(1,1)}\begin{bmatrix}
1 & 1\\
0 & 0
\end{bmatrix} + a_{(-1,-1)}\begin{bmatrix}
0 & 0\\
1 & 1
\end{bmatrix} + a_{(1,-1)}\begin{bmatrix}
1 & 0\\
0 & 1
\end{bmatrix} + a_{(-1,1)}\begin{bmatrix}
0 & 1\\
1 & 0
\end{bmatrix}.\]
where only three of $a_{(L,R)}$ are nonzero.
Let $\mu = \{(L,R)\text{ w.p. } a_{(L,R)}\}$.
Then we claim
\begin{equation}
    \tau \circ \mc N = \E_{(L,R)\sim \mu}\big[\mc T_{\tau^*}\big(\tau_{L}\circ \mc N^{(1)}_{L}, \tau_{R}\circ \mc N^{(-1)}_{R}\big)\big]
    \label{eq:dt-nsg-case-2-eq}
\end{equation}
where for $b,c\in\{\pm1\}^2$,
\[\mc N_{c}^{(b)}(x) = \mc N(x|x_{B(i)} = b, y_i = c).\]
We check Eq. \eqref{eq:dt-nsg-case-2-eq} pointwise.
First consider an $x$ with $x_{B(i)} = 1$.
We condition on the value of $y_i$, rearrange, and then ``complete the tree'':
\begin{align*}
    \Pr[\tau\circ\mc N(x) = z] &= \sum_{L\in\{\pm 1\}}\Pr[\tau\circ\mc N(x)= z \mid \mc N_i(x) = L]\Pr[\mc N_i(x) = L]\\
    &= \sum_{L\in\{\pm 1\}}\Pr[\tau\circ\mc N(x\mid y_i = L)= z ](a_{(L, 1)} + a_{(L, -1)})\\
    &= \sum_{L\in\{\pm 1\}}\Pr[\tau_L\circ\mc N(x\mid x_{B(i)}=1, y_i = L)= z ]\big(\textstyle\sum_{R\in\{\pm 1\}}a_{(L, R)}\big)\\
    &= \sum_{L,R\in\{\pm 1\}}a_{(L,R)}\Pr[\tau_L\circ\mc N^{(1)}_{L}(x)= z ]\\
    &= \sum_{L,R\in\{\pm 1\}}a_{(L,R)}\Pr[\mc T_{\tau^*}(\tau_L\circ\mc N^{(1)}_{L},\tau_R\circ\mc N^{(-1)}_{R})(x)= z ]\\
    &= \Pr\Big[\E_{(L,R)\sim \mu}[\mc T_{\tau^*}(\tau_L\circ\mc N^{(1)}_{L},\tau_R\circ\mc N^{(-1)}_{R})](x) = z\Big],
\end{align*}
as desired.
A similar argument goes through for $x_{B(i)}=-1$ by expanding over $R$ instead of $L$.
\end{proof}

\begin{proof}[Proof of Theorem \ref{thm:tree-decomp}]
Let $\tau$ be a depth-optimal decision tree for $f$.
Construct the trivial hybrid tree $\mc T$ with no internal nodes and a single leaf with label $\tau\circ \mc N$.
Put $\Gamma = \{\mc T\text{ w.p. }1\}.$
We will recursively break apart leaves of $\mc T$ into distributions of hybrid trees, which are then combined with the parent tree to become distributions over hybrid trees of greater depth.

This is done by repeated application of the following sequence of steps.
Suppose $\mc T_\tau(\mc G_1,\ldots, \mc G_\ell)$ is some hybrid tree and $\mc G_\iota = \tau' \circ \mc N$ for some nontrivial DT $\tau'$ and (potentially conditioned) NSC $\mc N$.
Then depending on the case in Claim 3 we either have
\begin{align*}
\mc T_\tau(\ldots,\underbrace{\tau\circ \mc N}_{\text{index } \iota}, \ldots)
    &=\mc T_\tau(\ldots,\textstyle\E_{(L,R)\sim \mu} [\mc T_{\tau^*}(\tau_{\omega_L}\circ\mc N_{\omega_L},\tau_{\omega_R}\circ\mc N_{\omega_R})],\ldots)
    \tag{Claim \ref{clm: vertex-decomp}.\emph{i}}\\
    &=\E_{(\omega_L,\omega_R)\sim \mu}\big[\mc T_\tau(\ldots,\mc T_{\tau^*}(\tau_{\omega_L}\circ\mc N_{\omega_L},\tau_{\omega_R}\circ\mc N_{\omega_R}),\ldots)\big]
    \tag{Claim \ref{clm: expectation-unwrap}}\\
    &= \E_{(\omega_L,\omega_R)\sim \mu}\big[\mc T_{\tau\circ_\iota \tau^*}(\ldots,\tau_{\omega_L}\circ\mc N_{\omega_L},\tau_{\omega_R}\circ\mc N_{\omega_R},\ldots)\big]
    \tag{Claim \ref{clm: hybrid-tree-comp}},
\intertext{where $\tau^*$ has depth 1 and $\mathrm{depth}(\tau_{\omega_L}), \mathrm{depth}(\tau_{\omega_R})\leq \mathrm{depth}(\tau')-1$, or we have}
\mc T_\tau(\ldots,\underbrace{\tau\circ \mc N}_{\text{index } \iota}, \ldots)
    &=\mc T_\tau(\ldots,\textstyle\E_{\omega\sim \mu} [\tau_{\omega}\circ\mc N_{\omega}],\ldots)
    \tag{Claim \ref{clm: vertex-decomp}.\emph{ii}}\\
    &=\E_{\omega\sim \mu}\big[\mc T_\tau(\ldots,\tau_{\omega}\circ\mc N_{\omega},\ldots)\big]
    \tag{Claim \ref{clm: expectation-unwrap}},
\end{align*}
where $\mathrm{depth}(\tau_{\omega}) \leq \mathrm{depth}(\tau')-1$.

If we repeatedly make these transformations on the elements of $\Gamma$, we will eventually be left with a distribution over hybrid decision trees $(\tau, \mc G,\ldots)$ where each channel $\mc G = \tau'\circ \mc N$ is in the base case of Claim \ref{clm: vertex-decomp}.
Such a hybrid tree is equal to a deterministic channel.
Hence we are left with a distribution over deterministic channels that is trivially equivalent to a distribution of standard, deterministic decision trees.

Further, it's easy to see that once done, the longest path in any tree of $\mathrm{Supp}(\Gamma)$ is bounded by the longest path in the original tree $\tau$.
\end{proof}

\section{Discussion}
\label{sec:discussion}

We have seen several pieces of evidence for Conjecture \ref{conj:main}, as well as highlighted new connections between quantum complexity theory, nonlocal games, and approximate degree.

If Conjecture 1 is ultimately proved true, we may wish to reach for a stronger no-advantage theorem closer to that of Beals et al. \cite{beals_quantum_2001} from query complexity.
A natural expression of  $\acqnc$ non-advantage might use the language of Fourier decay.

\begin{question}
	Does $\acqnc$ exhibit LMN-like Fourier decay?
	To make this precise for the randomized function $f\circ \mc C$, consider the expectation over the randomness in $\mc C$ to get a function $F:\{0,1\}^n\to [-1,1]$.
	Then we ask, is $\mathbf{W}^{\geq t}[F]\in\mathcal{O}(\exp(-t))$?
\end{question}
\noindent As mentioned in the introduction, a similar result is known depth-$d$ $\qac$ circuits with at most $\mc O(n^{1/d})$ ancillas \cite{paulispec}.

Finally, one may consider any number of variations on the theme of precomposing a Boolean function with $\qnc$.
It is natural to ask:

\begin{question}
\label{q:map}
	View a $\qnc$ circuit $\mc C$ as a map from (randomized) Boolean functions to randomized Boolean functions:
	\[f\overset{\mc C}\longmapsto f\circ \mc C\,.\]
	By how much can this map increase influence, sensitivity, or other complexity measures of $f$?
\end{question}
\noindent Theorem \ref{thm:tree-decomp} gives the answer ``not at all'' to a variant Question \ref{q:map} where $\qnc$ is replaced by nonlocal channels, and the complexity measure is randomized decision tree complexity.

\section{Acknowledgements}
We are grateful to Chris Umans and Thomas Vidick for numerous valuable discussions and for the opportunity to share this work with Henry Yuen and the quantum group at Columbia University in the fall of 2022.
We are also grateful to Atul Singh Arora, discussions with whom inspired this project.
Finally we thank the anonymous ITCS 2024 reviewers for their generous and meticulous feedback on an earlier draft.

\section*{References}
{\def\section*#1{}
\bibliographystyle{plain}
\bibliography{main.bib}
}
\end{document}